\def\natural{\mathbf{N}}
\newtheorem{theorem}{Theorem}[section]
\newtheorem{proposition}[theorem]{Proposition}
\newtheorem{lemma}[theorem]{Lemma}
\newtheorem{definition}[theorem]{Definition}
\newtheorem{remark}[theorem]{Remark}
\def\cb{{\mathcal B}}
\def\�{{\mathcal C}}
\def\ce{{\mathcal E}}
\def\cw{{\mathcal W}}
\def\bc{{\mathbb C}}
\def\bh{{\mathbb H}}
\def\bn{{\mathbb N}}
\def\br{{\mathbb R}}
\def\frak{\mathfrak}
\def\ga{{\frak A}}
\def\a{\alpha}
\def\b{\beta}
\def\tr{{\rm Tr}}
\def\L{\Lambda}
\def\G{\Gamma}
\def\ce{\mathcal E}
\def\ffi{\varphi}
\def\Tr{\mathrm{Tr}}
\def\<{\langle}
\def\>{\rangle}
\def\1{\mathbf{1}}
\def\ve{\varepsilon}
\def\cw{\cal W}
\def\cal{\mathcal}
\def\s{\sigma}
\def\bh{\mathbf{h}}
\def\bs{\mathbf{s}}
\def\id{{\bf 1}\!\!{\rm I}}
\begin{document}

\begin{center}
{\Large {\bf Phase Transitions for quantum Ising model with competing $XY$-interactions on  a Cayley tree}}\\[1cm]
\end{center}

\begin{center}
{\large {\sc Farrukh Mukhamedov}}\\[2mm]
\textit{ Department of Mathematical Sciences,\\
College of Science, United Arab Emirates University,  \\
P.O. Box 15551,Al Ain, Abu Dhabi, UAE}\\
E-mail: {\tt far75m@yandex.ru, \ farrukh.m@uaeu.ac.ae}
\end{center}

\begin{center}

{\sc Abdessatar Barhoumi}\\
\textit{Department of Mathematics\\
Nabeul Preparatory Engineering Institute\\
Campus Universitairy - Mrezgua - 8000 Nabeul,\\
Carthage University, Tunisia}\\
E-mail: {\tt abdessatar.barhoumi@ipein.rnu.tn}\\
\end{center}

\begin{center}
{\sc Abdessatar Souissi}\\
\textit{
College of Business Administration,\\
Qassim university, Buraydah, Saudi Arabia}\\
E-mail: {\tt a.souaissi@qu.edu.sa}\\
\end{center}

\begin{center}
{\sc Soueidy EL Gheteb }\\
\textit{
Department of Mathematics,\\
Carthage University, Tunisia}\\
E-mail: {\tt elkotobmedsalem@gmail.com}\\
\end{center}

%\small

\begin{abstract} The main aim of the present paper is to establish the existence of a
phase transition for the quantum Ising model with competing $XY$ interactions within the quantum Markov chain (QMC) scheme. In this scheme, we employ the $C^*$-algebraic approach to the phase transition problem.
Note that these kinde of models
do not have one-dimensional analogues, i.e. the considered model
persists only on trees. It turns out that if the Ising part interactions vanish then the model with only competing $XY$-interactions  on the Cayley tree of order two does not have a phase transition.
By phase transition we mean the existence of two distinct QMC
which are not quasi-equivalent and their supports do not overlap. Moreover,  it is also shown that the QMC associated with the model have clustering property which implies that the von Neumann algebras corresponding to the states are factors.
\end{abstract}

\section{Introduction }\label{intr}

The study of magnetic systems with competing interactions in ordering is a fascinating
problem of condensed matter physics. One of the
most canonical examples of such systems are frustrated Ising models which demonstrate a plethora of critical properties\cite{Chak,Lieb,MS}. The
frustrations can be either geometrical or brought about the next-nearest neighbor NNN interactions. Competing interactions frustrations can result new phases, change the Ising universality class, or even destroy the order at all.
Another interesting aspect of the criticality in the frustrated Ising models is an appearance of quantum critical points at spacial frustration points of model's high degeneracy, and related quantum phase transitions \cite{Sach}.  The Ising models with frustrations can be thought as perturbation of the classical Ising model.  If the perturbation terms do not commute with the Ising pieces, it outcomes quantum effects. In particular case, if the perturbation is the XY interaction, then the model become more interesting (see \cite{AAC,CG,MS,Ow} for a systematic study (physical approach)  of the Ising model with quantum
frustration on 2D lattices). However, a rigorous (mathematical) investigation of  the quantum Ising model with competing $XY$ interactions does not exist yet in the literature.  We notice that $XY$-interactions are truly quantum, ( i.e.
contain pieces not commuting with each other). 
 In the present paper, we propose to investigate the phase transition problem for the mentioned model on the Cayley tree or Bethe lattice \cite{Ost} within quantum Markov chains (QMC) scheme.  Here, the QMC scheme is based on the 
 $C^*$-algebraic approach.  We notice that the Ising model with Ising type competing interactions (with commuting interactions) has been recently studied in \cite{MBS161,MBS162,MR1,MR2} by means of QMC.  As we mentioned, in the current paper, the
 commuting interactions are non-commutative, and this makes big difference between those papers. 

On the other hand, our investigation will allow to construct quantum analogous of Markov fields (see \cite{D,Geor,[Pr],[Sp75],Spa}) which is one of the basic problems in quantum probability.  We notice that quantum Markov fields naturally
appear in quantum statistical mechanics and quantum fields theories \cite{DW,DM}\footnote{The quantum analogues of Markov chains were first constructed in
\cite{[Ac74f]}, where the notion of quantum Markov chain  (QMC) on
infinite tensor product algebras was introduced. Later on, in \cite{fannes2}, finitely correlated states were introduced and studied, which are related to each other. However, satisfactory constructions of such kind of fields were not established, since most of the fields were considered over the integer lattices \cite{[AcFi01a],AF03}.}.

We point out,  even in classical setting, for models over integer lattices, there do not exist
analytical solutions (for example, critical temperature) on such
lattices. Therefore, it was proposed \cite{Bax} to consider spin models on regular trees for which one can exactly calculate
various physical quantities. One of the simplest tree is a Cayley tree  \cite{Ost}.
In \cite{MBS161,MBS162} we have established that Gibbs measures of the Ising model with competing (Ising) interactions on a Cayley trees,  can be considered as  QMC. Note that if the perturbation vanishes then the model reduces to the classical Ising one which was also examined in  \cite{ArE} by means of $C^*$-algebra approach.

In the present paper, we are going to study the Ising model with $XY$-competing interactions on a Cayley trees of order two. We point out that this model has non-commutative interactions, i.e. $XY$ ones, therefore, the investigation of this model is tricky.
We notice that, in general, QMC do not have KMS property (see
\cite{AF03,BR}), therefore, general theory of KMS-states is not
applicable for such kind of chains. One of the main questions of
this paper is to know whether the considered model exhibits two
different QMC associated with the mentioned model on the Cayley trees.
Our main result is the
following one.

\begin{theorem}\label{Main}
For the Ising model with competing $XY$- interactions \eqref{1Kxy1},
\eqref{1Hxy1}, $J_{0}>0$, $J \in \mathbb{R}\setminus\{-J_0,J_0\}$, $\b>0$ on the Cayley tree of order two, the
following statements hold:
\begin{itemize}
\item if $\Delta(\theta)\leq 0$, then there is a unique QMC;
\item if $\Delta(\theta)>0$, then there occurs a phase transition,
\end{itemize}
where $$\Delta(\theta)=\frac{\theta^{2J_0}-\theta^{J_0}(\theta^{J}+\theta^{-J})-3}{\theta^{2J_0}-\theta^{J_0}(\theta^{J}+\theta^{-J})+1},~~
\theta=e^{2\beta}.$$
\end{theorem}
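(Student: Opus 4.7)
The plan is to prove the theorem via an explicit recursive construction of QMCs on the Cayley tree, following the $C^*$-algebraic QMC scheme outlined in the introduction. For each finite ball $\Lambda_n$ around the root, the local Hamiltonian is a sum of edge contributions $K_{\langle x,y\rangle}$ (Ising part) and $H_{\langle x,y\rangle}$ ($XY$ part). A QMC is built by specifying a family of boundary operators $\{h_x\}$ at each vertex and constructing quasi-conditional expectations from $n$-sites onto $(n{-}1)$-sites via a Feynman--Kac type transition operator involving $\exp(-\beta(K+H))$ on each edge. The QMC is then obtained as the projective limit of the finite-volume states, provided a compatibility (self-consistency) equation on $\{h_x\}$ is satisfied.

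Second, I would derive this self-consistency equation. For the Cayley tree of order two each vertex has two forward children, so the compatibility condition at a vertex $x$ combines the two edges $\langle x,y_1\rangle$ and $\langle x,y_2\rangle$ by partial trace, producing a recursion of the form $h_x = F(h_{y_1},h_{y_2})$ for some positive map $F$ determined by the local Hamiltonian. By the rotational symmetry of the $XY$ term, the natural ansatz is a diagonal $h_x$ (or a two-block form); this is essentially forced because the off-diagonal contributions of the $XY$ part collapse after partial tracing. With this ansatz the recursion closes and reduces to a one-dimensional dynamical system in the ratio $u$ of the diagonal entries of $h_x$, with coefficients that are polynomials in $\theta = e^{2\beta}$ and $\theta^{J_0},\theta^J$.

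Third, I would analyze the fixed-point equation $u = \Phi(u)$. The symmetric point $u=1$ is always a solution, corresponding to the ``disordered'' QMC. Linearizing $\Phi$ at $u=1$ and studying where the derivative crosses the critical value producing a pitchfork-type bifurcation yields precisely a condition of the form $\Delta(\theta) > 0$ for the existence of additional fixed points $u \neq 1$. The formula for $\Delta(\theta)$ in the theorem is obtained by clearing denominators in the stability condition, so that $\Delta(\theta)\le 0$ gives uniqueness while $\Delta(\theta)>0$ furnishes at least two additional solutions yielding two more QMCs. The exceptional values $J=\pm J_0$ are excluded because the denominator of $\Delta$ may vanish there.

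Finally, one must verify that distinct solutions of the fixed-point equation yield QMCs that are not quasi-equivalent and whose supports do not overlap. This is where I expect the main obstacle, and the strategy is to invoke the clustering property asserted in the abstract: once each QMC is shown to be clustering under the natural translations along the tree, its GNS representation generates a factor, and two factor states with different one-point expectation values (which the asymmetric solutions produce by construction, e.g.\ of a $\sigma_z$-type observable at a single vertex) must be mutually disjoint, hence not quasi-equivalent. The technical heart of the argument is therefore (i) ensuring the recursion closes in the non-commutative setting despite the $XY$ part, and (ii) establishing clustering for the constructed QMCs uniformly in the chosen fixed point, from which the phase transition in the sense of the abstract follows.
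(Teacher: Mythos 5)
Your overall architecture matches the paper's: QMCs are built from boundary data $(\omega_0,\{h^x\})$ via quasi-conditional expectations, the compatibility condition reduces (after the diagonal ansatz forced by the structure of the $XY$ term under partial trace) to a two-dimensional system in $\Tr(h)$ and $\Tr(\sigma_z h)$, the symmetric solution always exists, and asymmetric solutions exist precisely when $\Delta(\theta)>0$. One cosmetic difference: the paper does not obtain $\Delta(\theta)>0$ by linearizing at the symmetric fixed point and hunting for a pitchfork; the second compatibility equation factors as $\Tr(\sigma_z h)\,(1-C_3\Tr(h))=0$, so the asymmetric branch lives on $\Tr(h)=1/C_3$ and exists iff the quantity $(C_3-C_1)/(C_2C_3^2)$, which must equal a square, is positive. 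Your bifurcation heuristic would lead you to the same computation, so this is a difference of presentation rather than substance.

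There is, however, a genuine gap in your final step. The claim that two factor states with different one-point expectation values must be disjoint is false: a local perturbation of a factor state is normal with respect to it, hence quasi-equivalent, yet generically changes every local expectation value. The correct tool (which the paper uses) is the Bratteli--Robinson criterion: two factor states on a quasi-local algebra are quasi-equivalent iff for every $\varepsilon>0$ there is a finite $\Lambda$ with $|\varphi_1(a)-\varphi_2(a)|<\varepsilon\|a\|$ for all $a$ localized outside $\Lambda$. What saves your argument is translation invariance: since $\varphi_1$ and $\varphi_2$ are translation invariant and differ on a one-site observable, the discrepancy persists at sites arbitrarily far from the root, violating the criterion. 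You must say this explicitly; the paper does it by evaluating both states on the projection $e_{11}$ placed at the leftmost vertex of $W_n$ and showing the difference is bounded below uniformly in $n$ (a computation that itself requires diagonalizing a $2\times 2$ transfer matrix and controlling a term of the form $(C_1/C_3-1/2)^{n}$). Separately, note that the paper's definition of phase transition has a third clause, non-overlapping supports, which is proved independently of non-quasi-equivalence by computing $\varphi_1(P_n)$ and $\varphi_2(Q_n)$ on the product projections over $\Lambda_n$ and taking $\beta\to\infty$; your proposal mentions this requirement but supplies no argument for it.
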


To establish result, we will prove Theorem \ref{5.3}, from which we conclude that there are three coexisting phases in the region  $\Delta(\theta)>0$, and one of it, i.e. $\varphi_{\alpha}$, survives in the region $\Delta(\theta)<0$. This leads us that
the state $\varphi_{\alpha}$ describes the disordered phase of the model, which shows a similar behavior with the classical Ising model \cite{B90,Roz2}.
In comparison with the Ising  model, we stress  that in the present model, we have a similar kind of phases (translation invariant ones) when  $\Delta(\theta)>0$. From Figure 2 (see below), one concludes that the phase transition
occurs except for a "triangular region". This shows how the competing interactions effect to the existence of other phases. Notice that if $J=0$, then we obtain the classical Ising model for which the existence of a disordered phase coexisting with
two ordered phases is well-known \cite{Bax, Roz2}.

On the other hand, we emphasize that
both problems, i.e. a construction and phase transitions are non-trivial and, to a
large extent, open. In fact, even if several definitions of
quantum Markov fields on trees (and more generally on graphs) have
been proposed, a really satisfactory, general theory is still
missing and physically interesting examples of such fields in
dimension $d\geq2$ are very few.

In order to get the existence of the phase transition (see \cite{MBS161}), one needs to check several conditions, and  one of them based on a notion of the
quasi-equivalence of quantum Markov chains which essentially uses $C^*$-algebraic approach and techniques. This situation totally differs from the classical (resp. quantum) cases,  where it is sufficient to prove the existence of at
least two different solutions (resp. KMS states) of associated renormalized equations
(see \cite{Roz2}). Therefore, even for classical models, to check the
existence of the phase transition (in the sense of our paper) is not a trivial problem. Here we mention  that the quasi-equivalence of product states (which correspond to the classical model without interactions) was a tricky job and investigated in  \cite{Mat0,PS}.  In this paper, we are considering more complicated states (which are QMC associated with the model) than product ones, and for these kind of states we are going to obtain their non-quasi equivalence. 
We will first show that these states have clustering property, and hence they are factor states. We point out that even this fact presents its own interests since these states associates with non-commutative
Hamiltonians having non-trivial interactions.

Let us outline the organization of the paper. After preliminary
information (see Section 2), in Section 3 we provide a general
construction of backward quantum Markov chains on Cayley tree. Moreover, in this section we give the definition of the phase
transition. Using the provided construction, in Section 4 we
consider the Ising model with competing $XY$-interactions on the Cayley
tree of order two. Section 5 is devoted to the existence of the
three translation-invariant QMC $\varphi_\a$, $\varphi_1$ and
$\varphi_1$ corresponding to the model. Section 6 is devoted to the
proof of Theorem \ref{Main}. In this section we will prove that
states $\varphi_1$ and $\varphi_2$ do not have overlapping
supports. Before, to establish their non-quasi equivalence, we first prove that these states have the clustering property. Section 7 we study a particular case $J_0=0$,  which means that we only have $XY$ interactions. In the considered setting, it turns out  that the phase transition does not occur.

\section{Preliminaries}

Let $\Gamma^k_+ = (L,E)$ be a semi-infinite Cayley tree of order
$k\geq 1$ with the root $x^0$ (i.e. each vertex of $\Gamma^k_+$
has exactly $k+1$ edges, except for the root $x^0$, which has $k$
edges). Here $L$ is the set of vertices and $E$ is the set of
edges. The vertices $x$ and $y$ are called {\it nearest neighbors}
and they are denoted by $l=<x,y>$ if there exists an edge
connecting them. A collection of the pairs
$<x,x_1>,\dots,<x_{d-1},y>$ is called a {\it path} from the point
$x$ to the point $y$. The distance $d(x,y), x,y\in V$, on the
Cayley tree, is the length of the shortest path from $x$ to $y$.

Recall a coordinate structure in $\G^k_+$:  every vertex $x$
(except for $x^0$) of $\G^k_+$ has coordinates $(i_1,\dots,i_n)$,
here $i_m\in\{1,\dots,k\}$, $1\leq m\leq n$ and for the vertex
$x^0$ we put $(0)$.  Namely, the symbol $(0)$ constitutes level 0,
and the sites $(i_1,\dots,i_n)$ form level $n$ (i.e. $d(x^0,x)=n$)
of the lattice (see Fig. 1).

\begin{figure}
\begin{center}
\includegraphics[width=10.07cm]{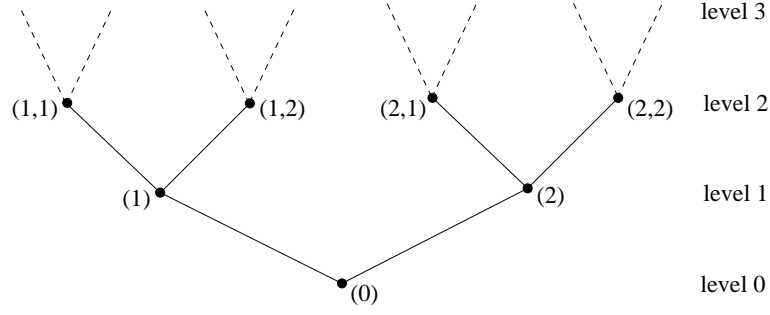}
\end{center}
\caption{The first levels of $\G_+^2$} \label{fig1}
\end{figure}

Let us set
\[
W_n = \{ x\in L \, : \, d(x,x_0) = n\} , \qquad \Lambda_n =
\bigcup_{k=0}^n W_k, \qquad  \L_{[n,m]}=\bigcup_{k=n}^mW_k, \
(n<m)
\]
\[
E_n = \big\{ <x,y> \in E \, : \, x,y \in \Lambda_n\big\}, \qquad
\Lambda_n^c = \bigcup_{k=n}^\infty W_k
\]
For $x\in \G^k_+$, $x=(i_1,\dots,i_n)$ denote
$$ S(x)=\{(x,i):\ 1\leq
i\leq k\}.
$$
Here $(x,i)$ means that $(i_1,\dots,i_n,i)$. This set is called a
set of {\it direct successors} of $x$.

Two vertices $x,y\in V$ is called {\it one level
next-nearest-neighbor  vertices} if there is a vertex $z\in V$
such that  $x,y\in S(z)$, and they are denoted by $>x,y<$. In this
case the vertices $x,z,y$ was called {\it ternary} and denoted by
$<x,z,y>$.

Let us define on $\G^k_+$ a binary operation
$\circ:\G^k_+\times\G^k_+\to\G^k_+$ as follows: for any two
elements $x=(i_1,\dots,i_n)$ and $y=(j_1,\dots,j_m)$ put
\begin{equation}\label{binar1}
x\circ
y=(i_1,\dots,i_n)\circ(j_1,\dots,j_m)=(i_1,\dots,i_n,j_1,\dots,j_m)
\end{equation}
and
\begin{equation}\label{binar2}
x\circ x^0=x^0\circ x= (i_1,\dots,i_n)\circ(0)=(i_1,\dots,i_n).
\end{equation}

By means of the defined operation $\G^k_+$ becomes a
noncommutative semigroup with a unit. Using this semigroup
structure one defines translations $\tau_g:\G^k_+\to \G^k_+$,
$g\in \G^k_+$ by
\begin{equation}\label{trans1}
\tau_g(x)=g\circ x.
\end{equation}
It is clear that $\tau_{(0)}=id$.

 The algebra of observables $\cb_x$ for any single site
$x\in L$ will be taken as the algebra $M_d$ of the complex
$d\times d$ matrices. The algebra of observables localized in the
finite volume $\L\subset L$ is then given by
$\cb_\L=\bigotimes\limits_{x\in\L}\cb_x$. As usual if
$\L^1\subset\L^2\subset L$, then $\cb_{\L^1}$ is identified as a
subalgebra of $\cb_{\L^2}$ by tensoring with unit matrices on the
sites $x\in\L^2\setminus\L^1$. Note that, in the sequel, by
$\cb_{\L,+}$ we denote the positive part of $\cb_\L$. The full
algebra $\cb_L$ of the tree is obtained in the usual manner by an
inductive limit
$$
\cb_L=\overline{\bigcup\limits_{\L_n}\cb_{\L_n}}.
$$

In what follows, by ${\cal S}({\cal B}_\L)$ we will denote the set
of all states defined on the algebra ${\cal B}_\L$.

Consider a triplet ${\cal C} \subset {\cal B} \subset {\cal A}$ of
unital $C^*$-algebras. Recall \cite{ACe} that a {\it
quasi-conditional expectation} with respect to the given triplet
is a completely positive (CP) linear map $\ce \,:\, {\cal A} \to
{\cal B}$ such that $ \ce(ca) = c \ce(a)$, for all $a\in {\cal
A},\, c \in {\cal C}$.

\begin{definition}[\cite{AOM}]\label{QMCdef}
A state $\varphi$ on ${\cal B}_L$ is called a {\it backward quantum
Markov chain (QMC)}, associated to $\{\L_n\}$, if for each
$\Lambda_n$, there exist a quasi-conditional expectation
$\ce_{\Lambda_n}$ with respect to the triplet
\begin{equation}\label{trplt1}
{\cal B}_{{\Lambda}_{n-1}}\subseteq {\cal
B}_{\Lambda_n}\subseteq{\cal B}_{\Lambda_{n+1}}
\end{equation}
and an initial state $\rho \in S(B_{\Lambda_0})$ such that:
\begin{equation}\label{dfgqmf}
\varphi = \lim_{n\to\infty} \rho_0 \circ
\ce_{\Lambda_0}\circ \ce_{\Lambda_{1}} \circ \cdots \circ
\ce_{\Lambda_n}
\end{equation}
in the weak-* topology.
\end{definition}

\begin{remark} We notice that in \cite{[AcFiMu07]} a more general definition of backward QMC is given on arbitrary quasi-local algebras.  \end{remark}

\section{Construction of Quantum Markov Chains}\label{dfcayley}

In this section we are going to provide a construction of a
backward quantum Markov chain which contains competing interactions.

Let us rewrite the elements of $W_n$ in the following lexicographic order (w.r.t. the coordinate system), i.e.
\begin{eqnarray*}
\overrightarrow{W_n}:=\left(x^{(1)}_{W_n},x^{(2)}_{W_n},\cdots,x^{(|W_n|)}_{W_n}\right).
\end{eqnarray*}

Note that $|W_n|=k^n$. In this lexicographic order, vertices 
$x^{(1)}_{W_n},x^{(2)}_{W_n},\cdots,x^{(|W_n|)}_{W_n}$ of $W_n$
are given as follows
\begin{eqnarray}\label{xw}
&&x^{(1)}_{W_n}=(1,1,\cdots,1,1), \quad x^{(2)}_{W_n}=(1,1,\cdots,1,2), \ \ \cdots \quad x^{(k)}_{W_n}=(1,1,\cdots,1,k,),\\
&&x^{(k+1)}_{W_n}=(1,1,\cdots,2,1), \quad
x^{(2)}_{W_n}=(1,1,\cdots,2,2), \ \ \cdots \quad
x^{(2k)}_{W_n}=(1,1,\cdots,2,k),\nonumber
\end{eqnarray}
\[\vdots\]
\begin{eqnarray*}
&&x^{(|W_n|-k+1)}_{W_n}=(k,k,,\cdots,k,1), \
x^{(|W_n|-k+2)}_{W_n}=(k,k,\cdots,k,2),\ \ \cdots
x^{|W_n|}_{W_n}=(k,k,\cdots,k,k).
\end{eqnarray*}

Analogously, for a given vertex $x,$ we shall use the following
notation for the set of direct successors of $x$:
\begin{eqnarray*}
\overrightarrow{S(x)}:=\left((x,1),(x,2),\cdots (x,k)\right),\quad
\overleftarrow{S(x)}:=\left((x,k),(x,k-1),\cdots (x,1)\right).
\end{eqnarray*}

In what follows, by $^\circ\prod$ we denote the lexicographic order,
i.e.
$$
^\circ\prod_{k=1}^n  a_k=a_1a_2\cdots a_n,
$$
where elements $\{a_k\}\subset {\mathcal{B}}_L$ are multiplied in
the indicated order. This means that we are not allowed to change
this order.

 Note that each vertex $x\in L$ has interacting vertices
$\{x, (x,1),\dots,(x,k)\}$. Assume that, to each edges $<x,(x,i)>$
($i=1,\dots,k$) an operators
$K_{<x,(x,i)>}\in\mathcal{B}_{x}\otimes\mathcal{B}_{(x, i)}$ is
assigned, respectively. Moreover, for each competing vertices
$>(x,i),(x,i+1)<$ and $<x,(x,i),(x,i+1)>$ ($i=1,\dots,k$) the
following operators are assigned:
 $$L_{>(x,i),(x,i+1)<}\in\mathcal{B}_{(x,i)}\otimes\mathcal{B}_{(x,i+1)}, \ \
 M_{(x,(x,i),(x,i+1))}\in\mathcal{B}_{x}\otimes\mathcal{B}_{(x,i)}\otimes\mathcal{B}_{(x,i+1)}.$$

We would like to define a state on $\mathcal{B}_{\Lambda_n}$ with
boundary conditions $\omega_{0}\in\mathcal{B}_{0,+}$ and
$\{h^{x}\in\mathcal{B}_{x,+}:\ x\in L\}$.

For each $n\in \natural$ denote
\begin{eqnarray}\label{K1}
&& A_{x,(x,1),\dots,(x,k)}=\big(^\circ\prod_{i=1}^k
K_{x,(x,i)}\big)\big(^\circ\prod_{i=1}^k
L_{>(x,i),(x,i+1)<}\big)\big(^\circ\prod_{i=1}^k
M_{(x,(x,i),(x,i+1))}\big),\\[2mm]
\label{K11} &&K_{[m,m+1]}:=\prod_{x\in \overrightarrow
W_{m}}A_{x,(x,1),\dots,(x,k)}, \ \ 1\leq m\leq n,\\[2mm]
 \label{K2} &&
\bh_{n}^{1/2}:=\prod_{x\in W_n}(h^{x})^{1/2}, \ \ \  \bh_n=\bh_{n}^{1/2}(\bh_{n}^{1/2})^*\\[2mm]
\label{K3}
&&{\mathbf{K}}_n:=\omega_{0}^{1/2} (^\circ\prod_{m=0}^{n-1}K_{[m,m+1]})\bh_{n}^{1/2}\\[2mm]
\label{K4} && \mathcal{W}_{n]}:={\mathbf{K}}_n^{*}{\mathbf{K}}_n
\end{eqnarray}
One can see that ${\cw}_{n]}$ is positive.

In what follows, by $\tr_{\L}:\cb_L\to\cb_{\L}$ we mean normalized
partial trace (i.e. $\tr_{\L}(\id_{L})=\id_{\L}$, here
$\id_{\L}=\bigotimes\limits_{y\in \L}\id$), for any
$\Lambda\subseteq_{\text{fin}}L$. For the sake of shortness we put
$\tr_{n]} := \tr_{\Lambda_n}$.

Let us define a positive functional $\ffi^{(n)}_{w_0,\bh}$ on
$\cb_{\Lambda_n}$ by
\begin{eqnarray}\label{ffi-ff}
\ffi^{(n)}_{w_0,\bh}(a)=\tr(\cw_{n+1]}(a\otimes\id_{W_{n+1}})),
\end{eqnarray}
for every $a\in \cb_{\Lambda_n}$. Note that here, $\tr$ is a
normalized trace on ${\cal B}_L$ (i.e. $\tr(\id_{L})=1$).

To get an infinite-volume state $\ffi$ on $\cb_L$  such that
$\ffi\lceil_{\cb_{\L_n}}=\ffi^{(n)}_{w_0,\bh}$, we need to impose
some constrains to the boundary conditions $\big\{w_0,\bh\big\}$
so that the functionals $\{\ffi^{(n)}_{w_0,\bh}\}$ satisfy the
compatibility condition, i.e.
\begin{eqnarray}\label{compatibility}
\ffi^{(n+1)}_{w_0,\bh}\lceil_{\cb_{\L_n}}=\ffi^{(n)}_{w_0,\bh}.
\end{eqnarray}

%In the following we need an auxiliary fact.

%\begin{lemma}\label{LL}
%Let $\Lambda\subseteq\Lambda'\subseteq_{fin}L$, then for any
%$A\in\mathcal{B}_{\Lambda}, B\in\mathcal{B}_{\Lambda'}$ one has
%$\Tr (AB)=\Tr[A\Tr_{\Lambda}(B)]$.
%\end{lemma}

\begin{theorem}\label{compa} Assume that for every $x\in L$ and triple $\{x,(x,i),(x,i+1)\}$ $(i=1,\dots,k-1$) the operators
$K_{<x,(x,i)>}$, $L_{>(x,i),(x,i+1)<}$, $M_{(x,(x,i),(x,i+1))}$
are given as above. Let the boundary conditions $w_{0}\in {\cal
B}_{(0),+}$ and ${\bh}=\{h_x\in {\cal B}_{x,+}\}_{x\in L}$ satisfy
the following conditions:
\begin{eqnarray}\label{eq1}
&&\Tr(\omega_{0}h_{0})=1, \\
\label{eq2}
&&\Tr_{x]}\big({A_{x,(x,1),\dots,(x,k)}}^\circ\prod_{i=1}^k
h^{(x,i)}A_{x,(x,1),\dots,(x,k)}^{*}\big)=h^x, \ \ \textrm{for
every}\ \ x\in L,
\end{eqnarray}
where as before $A_{x,(x,1),\dots,(x,k)}$ is given by \eqref{K1}.
Then the functionals $\{\ffi^{(n)}_{w_0,\bh}\}$ satisfy the
compatibility condition \eqref{compatibility}. Moreover, there is
a unique backward quantum d-Markov chain $\ffi_{w_0,{\bh}}$ on
$\cb_L$ such that
$\ffi_{w_0,{\bh}}=w-\lim\limits_{n\to\infty}\ffi^{(n)}_{w_0,\bh}$.
\end{theorem}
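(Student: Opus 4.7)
The plan is to verify the compatibility condition \eqref{compatibility} by an explicit trace manipulation that uses \eqref{eq2} to peel off the outermost level $W_{n+2}$, and then to obtain the unique infinite-volume state $\ffi_{w_0,\bh}$ from the compatible family via the standard Kolmogorov-type/inductive-limit argument on the quasi-local $C^{*}$-algebra $\cb_L=\overline{\bigcup_n\cb_{\L_n}}$. The backward QMC structure is then read off from the explicit form of $\mathbf{K}_n$ by constructing the quasi-conditional expectations $\ce_{\L_n}$ out of the building blocks $K_{[n,n+1]}$ and $\bh_{n+1}$.

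Positivity of $\ffi^{(n)}_{w_0,\bh}$ is immediate from $\cw_{n+1]}=\mathbf{K}_{n+1}^{*}\mathbf{K}_{n+1}\geq 0$. For the compatibility, I would fix $a\in\cb_{\L_n}$ and use cyclicity of the normalized trace to write
$$ \ffi^{(n+1)}_{w_0,\bh}(a\otimes\id_{W_{n+1}}) = \tr\!\big(\mathbf{K}_{n+2}\,(a\otimes\id_{W_{n+1}\cup W_{n+2}})\,\mathbf{K}_{n+2}^{*}\big). $$
Setting $\tilde K := {}^{\circ}\!\prod_{m=0}^{n}K_{[m,m+1]}$ so that $\mathbf{K}_{n+2}=\omega_0^{1/2}\,\tilde K\,K_{[n+1,n+2]}\,\bh_{n+2}^{1/2}$, the key observation is that $a\otimes\id$, being in $\cb_{\L_n}$, commutes with every factor whose support lies in $W_{n+1}\cup W_{n+2}$, in particular with $\bh_{n+2}^{1/2}$ and $K_{[n+1,n+2]}$. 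Repeated use of this yields
$$ \mathbf{K}_{n+2}(a\otimes\id)\mathbf{K}_{n+2}^{*} = \omega_0^{1/2}\tilde K\,(a\otimes\id)\,\big(K_{[n+1,n+2]}\bh_{n+2}K_{[n+1,n+2]}^{*}\big)\,\tilde K^{*}\omega_0^{1/2}. $$
Since $K_{[n+1,n+2]}=\prod_{x\in W_{n+1}}A_{x,(x,1),\dots,(x,k)}$ and $\bh_{n+2}=\prod_{x\in W_{n+1}}\prod_i h^{(x,i)}$ are products over $x\in W_{n+1}$ of factors with pairwise disjoint supports in $\{x\}\cup S(x)$, the middle bracket factorizes as a commuting product $\prod_{x\in W_{n+1}}\big(A_{x,(x,1),\dots,(x,k)}\prod_i h^{(x,i)}A_{x,(x,1),\dots,(x,k)}^{*}\big)$, and the partial trace over $W_{n+2}$ applied vertex-by-vertex gives, by \eqref{eq2},
$$ \tr_{W_{n+2}}\!\big(K_{[n+1,n+2]}\bh_{n+2}K_{[n+1,n+2]}^{*}\big) = \prod_{x\in W_{n+1}}h^x = \bh_{n+1}. $$
Pulling this bracket out of $\tr_{W_{n+2}}$ (the surrounding $\omega_0^{1/2}\tilde K(a\otimes\id)$ and $\tilde K^{*}\omega_0^{1/2}$ lie in $\cb_{\L_{n+1}}$ and hence commute with $\cb_{W_{n+2}}$), and then commuting $a\otimes\id$ past $\bh_{n+1}^{1/2}$, one reassembles $\mathbf{K}_{n+1}(a\otimes\id)\mathbf{K}_{n+1}^{*}$ and obtains $\ffi^{(n+1)}_{w_0,\bh}(a\otimes\id_{W_{n+1}})=\ffi^{(n)}_{w_0,\bh}(a)$. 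Normalization $\ffi^{(n)}_{w_0,\bh}(\id_{\L_n})=1$ follows by iterating this peeling from level $n+1$ down to level $1$ and invoking \eqref{eq1} at the base, $\tr(\omega_0 h^{(0)})=1$.

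Once compatibility is established, the unique state $\ffi_{w_0,\bh}\in S(\cb_L)$ with $\ffi_{w_0,\bh}\!\lceil_{\cb_{\L_n}}=\ffi^{(n)}_{w_0,\bh}$ follows from the standard $C^{*}$-algebraic inductive-limit extension, and the weak-$*$ convergence $\ffi^{(n)}_{w_0,\bh}\to\ffi_{w_0,\bh}$ is automatic on local elements and extends by density together with $\|\ffi^{(n)}_{w_0,\bh}\|=1$. For the backward QMC identification, I would take $\rho_0:=\ffi_{w_0,\bh}\!\lceil_{\cb_{\L_0}}$ and define quasi-conditional expectations $\ce_{\L_n}:\cb_{\L_{n+1}}\to\cb_{\L_n}$ encoding the same transfer operation, namely $\ce_{\L_n}(b):=\tr_{W_{n+1}}\!\big(K_{[n,n+1]}\bh_{n+1}^{1/2}b\bh_{n+1}^{1/2}K_{[n,n+1]}^{*}\big)$; complete positivity is manifest from the form $\tr_{W_{n+1}}(Y^{*}Y)$, the quasi-conditional property $\ce_{\L_n}(cb)=c\,\ce_{\L_n}(b)$ for $c\in\cb_{\L_{n-1}}$ is immediate from the locality $K_{[n,n+1]}\in\cb_{W_n\cup W_{n+1}}$, and the telescoping \eqref{dfgqmf} follows by induction from exactly the partial-trace manipulation used in the compatibility step.

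The main technical obstacle is the bookkeeping in the compatibility step: one has to keep the non-commuting lexicographic order prescribed by $^{\circ}\!\prod$ straight, correctly identify which factors commute with $a\otimes\id$ (those supported in $W_{n+1}\cup W_{n+2}$) and which do not ($\tilde K$ and $\omega_0^{1/2}$), and justify the vertex-by-vertex application of \eqref{eq2} by invoking the disjointness of the supports of the $A_{x,(x,1),\dots,(x,k)}$ for distinct $x\in W_{n+1}$. Once these local commutation relations are made transparent, the remainder of the argument is purely algebraic.
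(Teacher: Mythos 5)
Your argument follows the paper's proof essentially step for step: compatibility is verified by the same cyclic-trace manipulation, commuting $a\otimes\id$ past the factors supported on $W_{n+1}\cup W_{n+2}$, factorizing $K_{[n+1,n+2]}\bh_{n+2}K_{[n+1,n+2]}^{*}$ over the vertices of $W_{n+1}$ and applying \eqref{eq2} vertex by vertex to produce $\bh_{n+1}$; the backward-QMC structure is then read off from transfer maps built out of the $K_{[n,n+1]}$, exactly as in the paper.

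The one point where you deviate --- and where your formula, taken literally, fails --- is the definition of the quasi-conditional expectations. You insert the density factors at every level, $\ce_{\L_n}(b)=\tr_{n]}\big(K_{[n,n+1]}\bh_{n+1}^{1/2}\,b\,\bh_{n+1}^{1/2}K_{[n,n+1]}^{*}\big)$, whereas the paper uses two kinds of maps: a boundary map $\hat\ce_{n}$ carrying $\bh_{n+1}^{1/2}$, given by \eqref{E-n1}, and inner maps $\ce_{k}(b)=\tr_{k]}(K_{[k,k+1]}\,b\,K_{[k,k+1]}^{*})$ with no $\bh$ factor, given by \eqref{E-n2}. The distinction matters for the telescoping \eqref{dfgqmf}: applying your $\ce_{\L_n}$ to $\id_{\L_{n+1}}$ gives, via \eqref{eq2}, the output $\id_{\L_{n-1}}\otimes\bh_{n}$, and feeding this into your $\ce_{\L_{n-1}}$ produces $\tr_{n-1]}\big(K_{[n-1,n]}\bh_{n}^{2}K_{[n-1,n]}^{*}\big)$ --- the \emph{square} of the boundary weight rather than $\bh_n$ itself --- so \eqref{eq2} can no longer be invoked and the composition does not reproduce $\ffi^{(n)}_{w_0,\bh}$. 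The repair is precisely the paper's two-tier scheme: the $\bh^{1/2}$ factors are boundary data, consumed once by the outermost application of \eqref{eq2}, while all inner transfer maps are the bare $\tr_{k]}(K\cdot K^{*})$. With that correction the remainder of your argument (complete positivity, the quasi-conditional property from the locality of $K_{[k,k+1]}$, and the inductive-limit extension to $\cb_L$) goes through as you describe.
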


\begin{proof}
Let us check that the states  $\ffi^{(n,b)}_{w_0,{\bh}}$ satisfy the compatibility condition. For $a \in B_{\Lambda_n}$,
we have:

$\ffi^{(n+1)}_{w_0,{\bh}}(a\otimes \id_{\cw_{n+1}})=\tr(\cw_{n+2]}(a\otimes\id_{\Lambda_{[n+1,n+2]}}))$\\[2mm]
$=\tr(K_{[n,n+1]}^{*}K_{[n-1,n]}^{*}\cdots K_{[0,1]}^{*} w_0 K_{[0,1]} K_{[1,2]}
\cdots K_{[n+1,n+2]}\bh_{n+2}^{1/2}(a\otimes\id_{\Lambda_{[n+1,n+2]}})\bh_{n+2}^{1/2}K_{[n+1,n+2]}^{*})$ \\[2mm]
$=\tr(K_{[n,n+1]}^{*}K_{[n-1,n]}^{*}\cdots K_{[0,1]}^{*} w_0 K_{[0,1]}K_{[1,2]}
\cdots K_{[n,n+1]}(a\otimes\id_{\Lambda_{[n+1,n+2]}})K_{[n+1,n+2]}\bh_{n+2}K_{[n+1,n+2]}^{*})$\\[2mm]
$=\tr(K_{[n,n+1]}^{*}K_{[n-1,n]}^{*}\cdots K_{[0,1]}^{*} w_0 K_{[0,1]}K_{[1,2]}
\cdots K_{[n,n+1]}(a\otimes\id_{\Lambda_{[n+1,n+2]}})\Tr_{n+1]}(K_{[n+1,n+2]}\bh_{n+2}K_{[n+1,n+2]}^{*}))$\\[2mm]
$=\tr(K_{[n,n+1]}^{*}K_{[n-1,n]}^{*}\cdots K_{[0,1]}^{*} w_0 K_{[0,1]}K_{[1,2]}
\cdots K_{[n,n+1]}(a\otimes\id_{\Lambda_{[n+1,n+2]}})\bh_{n+1})$\\[2mm]
$=\tr(\bh_{n+1}^{1/2}K_{[n,n+1]}^{*}K_{[n-1,n]}^{*}\cdots K_{[0,1]}^{*} w_0 K_{[0,1]}K_{[1,2]}
\cdots K_{[n,n+1]}\bh_{n+1}^{1/2}(a\otimes\id_{\Lambda_{[n+1,n+2]}}))$\\[2mm]
$=\tr(\cw_{n+1]}(a\otimes\id_{\Lambda_{[n+1,n+2]}}))$\\[2mm]
$=\ffi^{(n)}_{w_0,{\bh}}(a)$

To  show that  $\ffi^{(n)}_{w_0,{\bh}}$  is a backward QMC, we define quasi-conditional
expectations $\ce_{n} $  as follows:
\begin{eqnarray}\label{E-n1}
&&\hat\ce_{n}(x_{n+1]})=\tr_{n]}(K_{[n,n+1]}\bh_{n+1}^{1/2}x_{n+1]}\bh_{n+1}^{1/2}K_{[n,n+1]}^*), \ \ x_{n+1]}\in \cb_{\L_{n+1}}\\
&&\label{E-n2}
\ce_{k}(x_{k+1]})=\tr_{k]}(K_{[k,k+1]}x_{k+1]}K_{[k,k+1]}^*),
\ \ x_{k+1]}\in\cb_{\L_{k+1}}, \ \ k=1,2,\dots,n-1,
\end{eqnarray}
Then for any monomial
$a_{\L_1}\otimes a_{W_2}\otimes\cdots\otimes
a_{W_{n}}\otimes\id_{W_{n+1}}$, where
$a_{\L_1}\in\cb_{\L_1},a_{W_k}\in\cb_{W_k}$, ($k=2,\dots,n$), we
have:\\[2mm]
$
\ffi^{(n)}_{w_0,\bh}(a_{\L_1}\otimes \cdots\otimes
a_{W_{n}})=\tr(\cw_{n+1]}(a_{\L_1}\otimes \cdots\otimes
a_{W_{n}}\otimes \id_{W_{n+1}}))$\\[2mm]
$=\tr(w_0 K_{[0,1]}
\cdots K_{[n,n+1]}\bh_{n+1}^{1/2}(a_{\L_1}\otimes \cdots\otimes
a_{W_{n}}\otimes \id_{W_{n+1}})\bh_{n+1}^{1/2}K_{[n,n+1]}^{*}\cdots K_{[0,1]}^{*})\\[2mm]
$$=\tr(w_0 K_{[0,1]} \cdots K_{[n-1,n]} \tr_{n]}(K_{[n,n+1]}\bh_{n+1}^{1/2}(a_{\L_1}\otimes \cdots\otimes
a_{W_{n}}\otimes \id_{W_{n+1}})\bh_{n+1}^{1/2}K_{[n,n+1]}^{*})K_{[n-1,n]}^{*}\cdots K_{[0,1]}^{*})\\[2mm]
$$=\tr(w_0 K_{[0,1]} \cdots K_{[n-2,n-1]} \tr_{n-1]}(K_{[n-1,n]} \hat\ce_{n}(a_{\L_1}\otimes \cdots\otimes
a_{W_{n}}) K_{[n-1,n]}^{*})\cdots K_{[0,1]}^{*})\\[2mm]
$$=\tr(w_0 K_{[0,1]} \cdots K_{[n-2,n-1]} \ce_{n-1} \hat\ce_{n}(a_{\L_1}\otimes \cdots\otimes
a_{W_{n}})K_{[n-2,n-1]}^{*}\cdots K_{[0,1]}^{*})\\[2mm]
$$=\tr(w_0 \ce_{0}\circ \ce_{1}\cdots \ce_{n-1}\circ \hat\ce_{n}(a_{\L_1}\otimes \cdots\otimes
a_{W_{n}}))$.
This means that the limit state $\ffi_{w_0,\bh}$ is a backward QMC. This completes the proof.
\end{proof}

We notice that a phase transition phenomena is crucial in
higher dimensional quantum models
\cite{BCS},\cite{Sach,BR2}. In \cite{ArE}, quantum phase transition for the two-dimensional Ising
model using $C^*$-algebra approach.  In \cite{fannes} the
VBS-model was considered on the Cayley tree. It was established
the existence of  the phase transition for the model in term of
finitely correlated states which describe ground states of the
model. Note that more general structure of finitely correlated
states was studied in \cite{fannes2}.

Our goal in this paper is to establish the existence of phase
transition for the given family of operators. Heuristically, the
phase transition means the existence of two distinct B
backward QMC. Let us
provide a more exact definition (see \cite{MBS161}).

\begin{definition}
We say that there exists a phase transition for a family of
operators $\{K_{<x,(x,i)>}\}$, $\{L_{>(x,i),(x,i+1)<}\}$,
$\{M_{(x,(x,i),(x,i+1))}\}$ if the following conditions are
satisfied:
\begin{enumerate}
\item[(a)] {\sc existence}: The equations \eqref{eq1}, \eqref{eq2}
have at least two $(u_0,\{h^x\}_{x\in L})$ and $(v_0,\{s^x\}_{x\in
L})$ solutions;

\item[(c)] {\sc not overlapping supports}: there is a projector
$P\in B_L$ such that $\ffi_{u_0,\bh}(P)<\ve$ and
$\ffi_{v_0,\bs}(P)>1-\ve$, for some $\ve>0$.

\item[(b)] {\sc not quasi-equivalence}: the corresponding quantum
Markov chains $\ffi_{u_0,\bh}$ and $\ffi_{v_0,\bs}$ are not quasi
equivalent \footnote{Recall that a representation $\pi_1$ of a
$C^*$-algebra $\ga$ is \textit{normal} w.r.t. another
representation $\pi_2$, if there is a normal $*$- epimorphism
$\rho:\pi_2(\ga)''\to \pi_1(\ga)''$ such that
$\rho\circ\pi_2=\pi_1$. Two representations $\pi_1$ and $\pi_2$
are called \textit{quasi-equivalent} if $\pi_1$ is normal w.r.t.
$\pi_2$, and conversely, $\pi_2$ is normal w.r.t. $\pi_1$. This
means that there is an isomorphism  $\alpha:\pi_1(\ga)''\to
\pi_2(\ga)''$ such that $\alpha \circ\pi_1=\pi_2$. Two states
 $\varphi$ and $\psi$ of $\ga$ are said be \textit{quasi-equivalent} if
 the GNS representations $\pi_\varphi$ and $\pi_\psi$ are
 quasi-equivalent \cite{BR}.}.

\end{enumerate}
Otherwise, we say there is no phase transition.
\end{definition}

%%%%%%%%%%%%%%%%%%%%%%%%%%%%%%%%%%%%%%%%%%%%%%%%%%%%%%%%%%%%%%%%%%%%%%%%%%%%%%%%%%%%%%%%%%%%%%%%%%%%%%%%%%%%

\section{ QMC associated with Ising-XY model with competing interactions}\label{exam1}
%%%%%%%%%%%%%%%%%%%%%%%%%%%%%%%%%%%%%%%%%%%%%%%%%%%%%%%%%%%%%%%%%%%%%%%%%%%%%%%%%%%%%%%%%%%%%%%%%%%%%%%%%%%%%%%%%�

In this section, we define the model and formulate the main
results of the paper. In what follows we consider a semi-infinite
Cayley tree $\G^2_+=(L,E)$ of order two. Our starting
$C^{*}$-algebra is the same $\cb_L$ but with $\cb_{x}=M_{2}(\bc)$
for all $x\in L$. By $\sigma_{x}^{u}$, $\sigma_{y}^{u}$, $\sigma_{z}^{u}$ we denote the Pauli spin operators for a site $u \in L$, i.e.

        $$\id^{(u)}=\begin{pmatrix}
0 & 1\\
 1& 0
\end{pmatrix},\sigma_{x}^{u}=\begin{pmatrix}
0 & 1\\
 1& 0
\end{pmatrix} , \sigma_{y}^{u}=\begin{pmatrix}
0 & -i\\
 i& 0
\end{pmatrix}, \sigma_{z}^{u}=\begin{pmatrix}
1 & 0\\
 0& -1
\end{pmatrix}.$$

For every vertices  $(u,(u,1),(u,2))$  we put
\begin{eqnarray}\label{1Kxy1}
&&K_{<u,(u,i)>}=\exp\{J_0\beta H_{<u,(u,i)>}\}, \ \ i=1,2,\ \ J_0>0, \ \beta>0,\\[2mm] \label{1Lxy1} &&
L_{>(u,1),(u,2)<}=\exp\{J\beta H_{>(u,1),(u,2)<}\}, \ \ J\in\br,
\end{eqnarray}
where
\begin{eqnarray}\label{1Hxy1}
&&
H_{<u,(u,1)>}=\frac{1}{2}\big(\id^{(u)}\otimes \id^{(u,1)}+\sigma_{z}^{(u)}\otimes \sigma_{z}^{(u,1)}\big)  ,\\
\label{1H>xy<1} &&
H_{>(u,1),(u,2)<}=\frac{1}{2}\big(\sigma_{x}^{(u,1)}\otimes \sigma_{x}^{(u,2)}+\sigma_{y}^{(u,1)}\otimes \sigma_{y}^{(u,2)}\big).
\end{eqnarray}

Furthermore, for the sake of simplicity, we assume that $M_{(u,(u,i),(u,i+1))}=\id$
($i=1,2,\dots,k$) for all $u\in L$.

The defined model is called  the {\it Ising model with competing $XY$-
interactions} per vertices  $(u,(u,1),(u,2))$.

For each $m \in \mathbb{N}$, from \eqref{1Hxy1}, \eqref{1H>xy<1} it follows that
\begin{eqnarray}\label{1Hxym}
&&H_{<u,v>}^{m}=H_{<u,v>}=\frac{1}{2}\big(\id^{(u)}\otimes\id^{(v)}+\sigma^{(u)}_z\otimes\sigma^{(v)}_z\big),\\[2mm]
\label{1Hxym}
&&H_{>(u,1),(u,2)<}^{2m}=H_{>(u,1),(u,2)<}^2=\frac{1}{2}(\id^{(u,1)}\otimes \id^{(u,2)}-\sigma_{z}^{(u,1)}\otimes \sigma_{z}^{(u,2)})\\[2mm]
\label{1Hxym}
&&H_{>(u,1),(u,2)<}^{2m-1}=H_{>(u,1),(u,2)<}\big.
\end{eqnarray}
Therefore, one finds
$$K_{<u,(u,i)>}=K_{0}\id^{(u)}\otimes\id^{(u,i)}+K_{3}\sigma_{z}^{(u)}\otimes \sigma_{z}^{(u,i)}   $$
$$L_{>(u,1),(u,2)<}= \id^{(u,1)}\otimes \id^{(u,2)} + \sinh(J\beta) H_{>(u,1),(u,2)<}+(\cosh(J\beta)-1)H^{2}_{>(u,1),(u,2)<}$$
where
\begin{eqnarray*}
&&K_0=\frac{\exp{J_0\beta}+1}{2},\ \ \   K_3=\frac{J_0\exp{\beta}-1}{2},\\[2mm]
\end{eqnarray*}
Hence, from \eqref{K1} for each $x\in L$ we obtain
\begin{eqnarray}\label{Ax}
A_{(u,(u,1),(u,2))}&=\gamma_{1}\id^{(u)}\otimes\id^{(u,1)}\otimes\id^{(u,2)}+\gamma_{2}\id^{u}
\otimes\sigma_{x}^{(u,1)}\otimes\sigma_{x}^{(u,2)}\nonumber\\
&+\gamma_{2}\id^{(u)}\otimes\sigma_{y}^{(u,1)}\otimes\sigma_{y}^{(u,2)}
+\gamma_3\id^{(u)}\otimes\sigma_{z}^{(u,1)}\otimes\sigma_{z}^{(u,2)}\nonumber\\
&+\delta_{1}\sigma_{z}^{(u)}\otimes\id^{(u,1)}\otimes\sigma_{z}^{(u,2)}
+\delta_1\sigma_{z}^{(u)}\otimes\sigma_{z}^{(u,1)}\otimes \id^{(u,2)}
\end{eqnarray}
where
$$\left\{\begin{matrix}
\gamma_1=\frac{1}{4}[\exp(2J_0\beta)+1+2\exp(J_0\beta)\cosh(J\beta)],  & \gamma_2=\frac{1}{2}\exp(J_0\beta)\sinh(J\beta). \bigskip \\
  \gamma_3=\frac{1}{4}[\exp(2J_0\beta)+1-2\exp(J_0\beta)\cosh(J\beta)],  & \delta_1=\frac{1}{4}(\exp(2J_0\beta)-1).
\end{matrix}\right.$$

Recall that a function $\{h^u\}$ is called
\textit{translation-invariant} if one has $h^{u}=h^{\tau_g(u)}$,
for all $u,g\in \G^2_+$. Clearly, this is equivalent to $h^u=h^v$
for all $u,v\in L$.

In what follows, we restrict ourselves to the description of
translation-invariant solutions of \eqref{eq1},\eqref{eq2}.
Consequently,  we assume that: $ h^{u}=h$ for all $u\in L$, here
$$
h= \left(
\begin{array}{cc}
            h_{11} & h_{12} \\
            h_{21} & h_{22} \\
          \end{array}
\right).$$
Then, equation \eqref{eq2} reduces to
\begin{eqnarray} \label{eqder1}
h&=&\Tr_{u]}A_{(u,(u,1),(u,2))}[\id^{(u)}\otimes h\otimes h]A_{(u,(u,1),(u,2))}^{*}\nonumber\\
&=&[C_1 \tr(h)^2+C_{2}\tr(\sigma_{z}h)^2]\id^{(u)}+ C_3Tr(h)\Tr(\sigma_z h)\sigma_z^{(u)}.
\end{eqnarray}
 where
 $$\left\{
  \begin{array}{ll}
    C_1=\frac{1}{4}(\exp(4 J_0\beta)+1)+\frac{1}{2}\exp(2 J_0\beta)\cosh(2J\beta);\bigskip \\
    C_2=\frac{1}{4}(\exp(4 J_0\beta)+1)-\frac{1}{2}\exp(2 J_0\beta)\cosh(2J\beta);\bigskip  \\
    C_3=\frac{1}{2}(\exp(4 J_0\beta)-1).
  \end{array}
\right.$$
Now taking into account
$$\Tr(h)=\frac{h_{11}+h_{22}}{2},\   \
\Tr(\sigma_{z} h)=\frac{h_{11}-h_{22}}{2}$$ the equation \eqref{eqder1}
reduces to the following one
\begin{equation}\label{1EQ1}
\left\{
   \begin{array}{lll}
\Tr(h)=C_1\Tr(h)^2+C_2\Tr(\s h)^2,\\
\Tr(\s h)=C_3\Tr(h)\Tr(\s h),\\
h_{21}=0, h_{12}=0.\\
   \end{array}
 \right.
 \end{equation}
This equation implies that a solution $h$ is diagonal, and through the equation \eqref{eq1},
$\omega_{0}$ could be chosen diagonal  as well.
In the next sections we are going to examine \eqref{1EQ1}.
%%%%%%%%%%%%%%%%%%%%%%%%%%%%%%%%%%%%%%%%%%%%%%%%%%%%%%%%%%%%%%%%%%%%%%%%%%%%%%%%%%%%%%%%%%%%%%%%%%%%%%%%%%%%%%%%%%%%%%%%%%%%%%%%%%%%%%%
\section{Existence of  QMC associated with the model.}
%%%%%%%%%%%%%%%%%%%%%%%%%%%%%%%%%%%%%%%%%%%%%%%%%%%%%%%%%%%%%%%%%%%%%%%%%%%%%%%%%%%%%%%%%%%%%%%%%%%%%%%%%%%%%%%%%%%%%%%%%%%%%%%%%%%%%%%
In this section we are going to solve \eqref{1EQ1}, which yields
the existence of QMC associated with the model. We consider two distinct cases.

\subsection{Case $h_{11}=h_{22}$ and associate QMC}

We assume that $h_{11}=h_{22}$ , then \eqref{1EQ1} is reduced to
\begin{equation*}
    h_{11}=h_{22}=\frac{1}{C_1}.
\end{equation*}
Then putting $\alpha=\frac{1}{C_1}$ we get
\begin{equation}
h_{\alpha}=
 \left(
  \begin{array}{cc}
    \alpha & 0 \\
    0 & \alpha \\
  \end{array}
\right)
\end{equation}
\begin{proposition}\label{5.1}
The pair $(\omega_{0},\{h^{u}=h_{\alpha}| u\in L\})$ with
$\omega_{0}=\frac{1}{\alpha}\id,\   \  h^{u}=h_{\alpha}, \forall
u\in L,$ is a solution of \eqref{eq1},\eqref{eq2}. Moreover the
associated backward QMC  can be written on the local algebra
$\mathcal{B}_{L, loc}$ by
\begin{equation}
\varphi_{\alpha}(a)=\alpha^{2^{n}-1}\Tr\bigg(\prod_{i=0}^{n-1}K_{[i,i+1]}a\prod_{i=0}^{n-1}K_{[n-i-1,n-i]}^{*}\bigg),
\ \ \forall a\in B_{\Lambda_{n}}.
\end{equation}
\end{proposition}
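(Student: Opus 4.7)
The plan is to split the proof into two pieces: first verify that $(\omega_0,\{h^u=h_\alpha\}_{u\in L})$ solves the boundary equations \eqref{eq1}--\eqref{eq2}, then specialize Theorem \ref{compa} to this solution and simplify the resulting formula.

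For the first piece, equation \eqref{eq1} is immediate: with $\omega_0 = \alpha^{-1}\id$ and $h_0 = h_\alpha = \alpha\id$, the product $\omega_0 h_0$ equals the identity, whose normalized trace is $1$. The content of \eqref{eq2} was already reduced in Section 4 to the scalar system \eqref{1EQ1} under the diagonal, translation-invariant Ansatz. Substituting $h = \alpha\id$ gives $\Tr(h) = \alpha$ and $\Tr(\sigma_z h) = 0$, so the second equation of \eqref{1EQ1} becomes $0=0$ while the first collapses to $\alpha = C_1\alpha^2$, satisfied by $\alpha = 1/C_1$; the off-diagonal conditions $h_{12}=h_{21}=0$ hold by construction. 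Theorem \ref{compa} therefore applies and produces the backward QMC $\varphi_\alpha$.

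For the explicit form, the key observation is that the chosen boundary data are scalar multiples of the identity: $\omega_0^{1/2} = \alpha^{-1/2}\id$ and $\bh_n^{1/2} = \alpha^{|W_n|/2}\id$, since each single-site $h^u=\alpha\id$ contributes $\sqrt{\alpha}$ to the product on $W_n$. These scalars commute through every operator, so substituting into \eqref{K3}--\eqref{K4} gives
$$\mathbf{K}_n = \alpha^{(|W_n|-1)/2}\,^\circ\!\prod_{m=0}^{n-1} K_{[m,m+1]}, \qquad \cw_{n]} = \alpha^{|W_n|-1}\,\bigl(^\circ\!\prod_{m=0}^{n-1} K_{[m,m+1]}\bigr)^{\!*}\,^\circ\!\prod_{m=0}^{n-1} K_{[m,m+1]}.$$
By the compatibility used in Theorem \ref{compa}, one has $\varphi_\alpha(a) = \Tr(\cw_{n]}(a\otimes\id_{W_n}))$ for any $a$ in the local algebra (realized in $\cb_{\Lambda_n}$ through the natural embedding). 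Applying cyclicity of $\Tr$ to sandwich $a$ between the two products of $K$'s and using $|W_n|=2^n$ for the order-two tree then yields exactly the stated formula with prefactor $\alpha^{2^n-1}$.

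I do not anticipate any deep obstacle: the substantive analytic work (existence of the QMC once the boundary equations hold) is discharged by Theorem \ref{compa}, and the remaining task is book-keeping the scalar factor $\alpha$. The only point requiring care is to count these factors correctly, namely $\alpha^{-1/2}$ from $\omega_0^{1/2}$ together with $\alpha^{|W_n|/2}$ from $\bh_n^{1/2}$ combining to $\alpha^{(|W_n|-1)/2}$ in $\mathbf{K}_n$, and hence $\alpha^{|W_n|-1}=\alpha^{2^n-1}$ in $\cw_{n]}$.
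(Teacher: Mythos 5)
The paper states this proposition without proof, and your argument supplies precisely the computation the authors leave implicit; it is correct. The verification of \eqref{eq1}--\eqref{eq2} via the reduction to \eqref{1EQ1} (giving $\alpha=C_1\alpha^2$, i.e.\ $\alpha=1/C_1$) matches the derivation immediately preceding the proposition, and the scalar bookkeeping $\cw_{n]}=\alpha^{2^{n}-1}\bigl(\,^\circ\!\prod_{m=0}^{n-1}K_{[m,m+1]}\bigr)^{*}\,^\circ\!\prod_{m=0}^{n-1}K_{[m,m+1]}$ is exactly right. The one subtlety, which your parenthetical about the natural embedding handles correctly, is that by \eqref{ffi-ff} the density $\cw_{n]}$ computes $\ffi^{(n-1)}_{w_0,\bh}$, so the displayed formula is to be read for $a\in\cb_{\Lambda_{n-1}}$ embedded as $a\otimes\id_{W_n}\in\cb_{\Lambda_n}$ (for an element genuinely supported on the boundary layer $W_n$ one must use the formula at level $n+1$); since every local observable lies in some $\cb_{\Lambda_{n-1}}$, this determines $\varphi_\alpha$ on all of $\cb_{L,\mathrm{loc}}$ as claimed.
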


\subsection{Case  $h_{11}\neq h_{22}$ and associate QMC}

Now we suppose that $h_{11}\neq h_{22}$, and put $\theta=\exp(2\beta)$.
Then
the equation \eqref{1EQ1} reduces to
\begin{eqnarray}\label{EQ2}
\left\{
  \begin{array}{ll}\
    \frac{h_{11}+h_{22}}{2}=\frac{1}{C_3},  \\
    \\
        (\frac{h_{11}-h_{22}}{2})^{2}=\frac{C_{3}-C_1}{C_2.C_{3}^{2}},
  \end{array} \right.
\end{eqnarray}

Denote
\begin{equation}
\Delta(\theta)=\frac{C_{3}-C_1}{C_2}=\frac{\theta^{2J_0} -\theta^{J_0} (\theta^{J}+\theta^{-J})-3}{\theta^{2J_0}-\theta^{J_0} (\theta^{J}+\theta^{-J})+1}
\end{equation}

\begin{proposition}\label{hh0}
If $\Delta(\theta) >0$, then the equation \eqref{1EQ1} has two
solutions given by
\begin{eqnarray}\label{hh1}
  && h =\xi_{0}\id+\xi_{3}\sigma,\\
&&\label{hh2}
   h'=\xi_{0}\id-\xi_{3}\sigma,
\end{eqnarray}
where
\begin{eqnarray}\label{xi01}
\xi_{0}=\frac{1}{C_3}=\frac{2}{\theta^{2J_0}-1}, \ \ \
\xi_{3}=\frac{\sqrt{\Delta(\theta)}}{C_3}=\frac{2\sqrt{\Delta(\theta)}}{\theta^{2J_0}-1}
\end{eqnarray}
\end{proposition}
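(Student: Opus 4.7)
The plan is to solve \eqref{1EQ1} under the hypothesis $h_{11}\neq h_{22}$ by reducing it to a scalar system in the two quantities $\xi_{0}:=\Tr(h)$ and $\xi_{3}:=\Tr(\s h)$, and then reconstructing $h$ from them. The third line of \eqref{1EQ1} forces $h$ to be diagonal, so writing $h=\xi_{0}\id+\xi_{3}\sigma_{z}$ is a complete ansatz; using $\Tr(\id)=1$, $\Tr(\sigma_{z})=0$ and $\sigma_{z}^{2}=\id$, the identifications $\Tr(h)=\xi_{0}$ and $\Tr(\s h)=\xi_{3}$ are immediate, so recovering $h$ reduces to pinning down the pair $(\xi_{0},\xi_{3})$.

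First I would exploit the standing assumption $h_{11}\neq h_{22}$: this gives $\xi_{3}=\tfrac{h_{11}-h_{22}}{2}\neq 0$, so the second equation of \eqref{1EQ1} can be divided through by $\Tr(\s h)$, yielding $\xi_{0}=1/C_{3}$. Under the hypotheses $J_{0}>0$, $\beta>0$, one has $C_{3}=(\theta^{2J_{0}}-1)/2>0$, so this division is legitimate. Substituting $\xi_{0}=1/C_{3}$ into the first equation of \eqref{1EQ1} and rearranging produces
$$
\xi_{3}^{2}\;=\;\frac{C_{3}-C_{1}}{C_{2}\,C_{3}^{2}}\;=\;\frac{\Delta(\theta)}{C_{3}^{2}}.
$$
The division by $C_{2}$ requires $C_{2}\neq 0$; a short computation gives $C_{2}=\tfrac{1}{4}\bigl[\theta^{2J_{0}}+1-\theta^{J_{0}}(\theta^{J}+\theta^{-J})\bigr]$, which vanishes exactly when $J=\pm J_{0}$, and the hypothesis $J\in\real\setminus\{-J_{0},J_{0}\}$ of Theorem \ref{Main} is precisely what rules this out.

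Under the assumption $\Delta(\theta)>0$, the right-hand side above is strictly positive, so $\xi_{3}=\pm\sqrt{\Delta(\theta)}/C_{3}$, and the two signs produce the two solutions $h$ and $h'$ claimed in \eqref{hh1}--\eqref{hh2}. Substituting $C_{3}=(\theta^{2J_{0}}-1)/2$ yields the explicit formulas in \eqref{xi01}. The only step requiring care is tracking the non-vanishing conditions on $C_{2}$ and $C_{3}$ so that each division step is justified; beyond that, once the third equation of \eqref{1EQ1} confines $h$ to the real span of $\{\id,\sigma_{z}\}$, the remainder of the argument is elementary algebra, and I would expect no genuine obstacle.
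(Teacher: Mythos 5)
Your proposal is correct and follows essentially the same route as the paper: reduce \eqref{1EQ1} to the system $\Tr(h)=1/C_3$, $\Tr(\s h)^2=(C_3-C_1)/(C_2C_3^2)=\Delta(\theta)/C_3^2$ by dividing out $\Tr(\s h)\neq 0$, then take the two square roots. The only addition is your explicit verification that $C_3>0$ and $C_2\neq 0$ (the latter tied to $J\neq\pm J_0$), which the paper leaves implicit but which is a worthwhile point of care rather than a different argument.
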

\begin{proof}
Assume that  $\Delta(\theta)>0$. Then one can conclude that
\eqref{EQ2} is equivalent to the following system
\begin{eqnarray*}
\left\{
  \begin{array}{ll}
    h_{1,1}+h_{2,2}=2\xi_0,  \\
      h_{1,1}-h_{2,2}=\pm2\xi_3
  \end{array}
\right.
\end{eqnarray*}
It is easy to see that $h_{1,1}=\xi_{0}-\xi_{3}$,
$h_{2,2}=\xi_{0}+\xi_3$. Hence, we get \eqref{hh1},\eqref{hh2}.
\end{proof}

%Next lemma gives a sufficient condition for $\Delta(\theta)>0$.

%\begin{lemma}
%
%Then  for every $J \in \mathbb{R}\setminus\{-J_0,J_0\}$, there exists $\theta_{0}$ (depend on $J$) such that $\Delta(\theta) >0$, whenever $\theta \geq
%\theta_0$.
% \end{lemma}

From \eqref{eq1} we find that $\omega_0=\frac{1}{\xi_0}\id\in
\mathcal{B}^{+}$. Therefore, the pairs $\big(\omega_{0},\ \
\{h^{(u)}=h, \ u\in L\}\big)$ and $\big(\omega_0,\{h^{(u)}=h', \
u\in L\}\big)$ define two solutions of \eqref{eq1},\eqref{eq2}.
Hence, they define two backward QMC $\varphi_1$ and $\varphi_2$,
respectively. Namely, for every $ a\in\mathcal{B}_{\Lambda_n}$ one
has
\begin{eqnarray}\label{F1}
&&\varphi_1(a)=\Tr \big(\omega_0 K_{[0,1]}\cdots K_{[n-1,n]}\bh_{n}^{1/2} a  \bh_{n}^{1/2} K_{[n-1,n]}^{*}\cdots K_{[0,1]}^{*}\big)\\[2mm] \label{F2}
&&\varphi_2(a)=\Tr \big(\omega_0 K_{[0,1]}\cdots K_{[n-1,n]}\bh_{n}^{'1/2} a   \bh_{n}^{'1/2} K_{[n-1,n]}^{*}\cdots K_{[0,1]}^{*}\big).
\end{eqnarray}

Hence, we summarize this section in the followigin result.

\begin{theorem}\label{5.3}
The following statements hold:
\begin{itemize}
\item[(i)] if $\Delta(\theta)\leq0$, then there is a unique
translation invariant QMC $\varphi_\a$;
\item[(ii)] if $\Delta(\theta)>0$, then there are at least three
translation invariant QMC $\varphi_\a$, $\varphi_1$ and
$\varphi_2$.
\end{itemize}
\end{theorem}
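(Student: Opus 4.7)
The plan is to combine Propositions \ref{5.1} and \ref{hh0} via a simple case analysis on whether $h_{11}=h_{22}$, and then verify that the resulting solutions produce genuinely distinct quantum Markov chains.

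From the third line of \eqref{1EQ1}, any translation invariant solution $(\omega_0,\{h^u\}_{u\in L})$ of \eqref{eq1}--\eqref{eq2} must have $h^u\equiv h$ with $h$ diagonal, so the problem reduces to classifying diagonal pairs $(h_{11},h_{22})$ satisfying the two scalar equations in \eqref{1EQ1}. I would split into two cases. If $h_{11}=h_{22}$, Proposition \ref{5.1} shows there is a unique solution $h_\alpha=\alpha\id$ with $\alpha=1/C_1$, available for every $\theta$, producing the QMC $\varphi_\alpha$. If $h_{11}\neq h_{22}$, the second equation of \eqref{EQ2} forces $\left((h_{11}-h_{22})/2\right)^2=\Delta(\theta)/C_3^2$: when $\Delta(\theta)<0$ there is no real solution; when $\Delta(\theta)=0$ one is driven back to $h_{11}=h_{22}$, contradicting the standing assumption; and when $\Delta(\theta)>0$ Proposition \ref{hh0} provides exactly the two solutions $h=\xi_0\id+\xi_3\sigma_z$ and $h'=\xi_0\id-\xi_3\sigma_z$, giving rise to $\varphi_1$ and $\varphi_2$ via Theorem \ref{compa} and the formulas \eqref{F1}--\eqref{F2}.

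This case analysis already proves (i) in the regime $\Delta(\theta)\leq 0$, since the only diagonal solution is $h_\alpha$. In the regime $\Delta(\theta)>0$, it produces the three candidates $\varphi_\alpha,\varphi_1,\varphi_2$, and it remains to verify that these QMC are pairwise distinct. Since $h_\alpha$ is a scalar matrix while $h$ and $h'$ have opposite nonzero $\sigma_z$-components (as $\xi_3\neq 0$), I would evaluate all three states on the single-site observable $\sigma_z^{(u)}$ for some $u\in L\setminus\{x^0\}$; a direct application of \eqref{E-n1} should give $\varphi_\alpha(\sigma_z^{(u)})=0$ together with $\varphi_1(\sigma_z^{(u)})=-\varphi_2(\sigma_z^{(u)})\neq 0$, which separates the three.

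The main obstacle is precisely this final distinctness step. While it is intuitively transparent from \eqref{F1}--\eqref{F2} that different boundary functions $\bh$ yield different states, one still has to thread a chosen observable through the nested partial traces and the operator $A_{(u,(u,1),(u,2))}$ from \eqref{Ax} to extract an explicit nonzero value. Everything else in the theorem is a direct invocation of the previous propositions together with the dichotomy on the sign of $\Delta(\theta)$.
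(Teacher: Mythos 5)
Your proposal follows essentially the same route as the paper: Theorem \ref{5.3} is presented there as a direct summary of Proposition \ref{5.1}, Proposition \ref{hh0} and the formulas \eqref{F1}--\eqref{F2}, obtained through exactly your dichotomy $h_{11}=h_{22}$ versus $h_{11}\neq h_{22}$ combined with the sign of $\Delta(\theta)$ (including the observation that $\Delta(\theta)=0$ collapses back to $h_\alpha$). The pairwise-distinctness step you flag as the remaining obstacle is not carried out by the paper at this point (it follows a fortiori from Lemma \ref{pq_n} and Section 6), but your $\sigma_z$-evaluation idea does work and is easiest at the root: Lemma \ref{pq_n} with $n=0$ gives $\varphi_1(\sigma_z^{(x^0)})=-\varphi_2(\sigma_z^{(x^0)})=\xi_3/\xi_0=\sqrt{\Delta(\theta)}\neq 0$, while $\varphi_\alpha(\sigma_z^{(x^0)})=0$ by symmetry.
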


From this theorem we conclude that there are three coexisting phases in the region  $\Delta(\theta)>0$, and one of it, i.e. $\varphi_{\alpha}$, survives in the region $\Delta(\theta)<0$. This leads us that
the state $\varphi_{\alpha}$ describes the disordered phase of the model, which shows a similar behavior with the classical Ising model \cite{B90,Roz2}.
In comparison with the Ising  model, we stress  that in the present model, we have a similar kind of phases (translation invariant ones) when  $\Delta(\theta)>0$. From Figure 2 (see below), one concludes that the phase transition
occurs except for a "triangular region". This shows how the competing interactions effect to the existence of other phases. Notice that if $J=0$, then we obtain the classical Ising model for which the existence of a disordered phase coexisting with
two ordered phases is well-known \cite{Bax, Roz2}.

Next auxiliary fact gives an equivalent condition for $\Delta(\theta)>0$.

\begin{lemma}\label{DD} $\Delta(\theta)>0$ iff one of the following statements hold:
\begin{enumerate}
\item[(i)] $J>J_0$ or $J<-J_0$;
\item[(ii)] $-J_0<J<J_0$ and
$$
J_0>\frac{1}{2\beta}\ln\left(\sinh(2J\beta)+\sqrt{\sinh^2(2J\beta)+3}\right)
$$
\end{enumerate}
\end{lemma}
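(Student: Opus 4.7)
The plan is a sign analysis of the numerator $N(\theta) := \theta^{2J_0} - \theta^{J_0}(\theta^J + \theta^{-J}) - 3$ and the denominator $D(\theta) := \theta^{2J_0} - \theta^{J_0}(\theta^J + \theta^{-J}) + 1$ of $\Delta(\theta)$, viewed as quadratics in the single variable $u := \theta^{J_0} = e^{2J_0\beta} > 0$ with coefficients depending on $w := \theta^J + \theta^{-J} = 2\cosh(2J\beta) \geq 2$. The key observation is that $D = N + 4$, so $N$ and $D$ can never have opposite signs except when one is nonpositive and the other is at least $4$. In particular, $\Delta = N/D > 0$ precisely when one of two disjoint situations holds: \emph{(A)} $N > 0$ (which automatically forces $D > 0$), or \emph{(B)} $D < 0$ (which automatically forces $N < 0$). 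The proof then amounts to characterising (A) and (B) separately in terms of $J_0$, $J$ and $\beta$.

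For Case (B) I would solve $u^2 - uw + 1 < 0$ as a quadratic in $u$. Its discriminant equals $w^2 - 4 = 4\sinh^2(2J\beta) \geq 0$, and (for $J \neq 0$) the two real roots simplify, via $w = 2\cosh(2J\beta)$, to $\cosh(2J\beta) \pm |\sinh(2J\beta)| = e^{\pm 2|J|\beta}$. Hence $D(u) < 0$ iff $e^{-2|J|\beta} < u < e^{2|J|\beta}$, which, after taking logarithms and using $J_0 > 0$, is exactly $|J| > J_0$, i.e.\ condition \emph{(i)}.

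For Case (A) I would solve $u^2 - uw - 3 > 0$. Since the product of the roots is $-3 < 0$, there is exactly one positive root $r_+ = \tfrac{w + \sqrt{w^2 + 12}}{2}$, and $N(u) > 0$ iff $u > r_+$. Once Case (B) has been ruled out (so that $|J| \leq J_0$), this is the only remaining mechanism for $\Delta > 0$; taking logarithms and simplifying $r_+$ via $w = 2\cosh(2J\beta)$ yields the explicit lower bound on $J_0$ recorded in condition \emph{(ii)}. Because (A) and (B) are mutually exclusive and together characterise $\{\Delta > 0\}$, the stated dichotomy follows.

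The only nontrivial step is the algebraic simplification of the roots $r_+$ and of the roots of $D$ into the compact forms appearing in the lemma: this relies on the hyperbolic identity $\cosh^2 - \sinh^2 = 1$ and on careful bookkeeping of absolute values when extracting the two square roots $\sqrt{w^2 - 4}$ and $\sqrt{w^2 + 12}$. Every other step is routine quadratic analysis, and no input beyond the definition of $\Delta(\theta)$ and the hypotheses $J_0 > 0$, $\beta > 0$ is required.
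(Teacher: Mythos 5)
Your proposal is correct in method and follows essentially the same route as the paper: both write $\Delta=(R-4)/R$ with $R=D$ your denominator, both obtain condition (i) from the sign of $D$ (the paper via the factorization $D=(\theta^{J_0-J}-1)(\theta^{J_0+J}-1)$, you via the roots $\theta^{\pm|J|}$ of the quadratic in $u$ --- the same computation), and both obtain condition (ii) from the unique positive root of $u^2-uw-3=0$. One caveat on your last step: the root is
$r_+=\tfrac{1}{2}\bigl(w+\sqrt{w^2+12}\bigr)=\cosh(2J\beta)+\sqrt{\cosh^2(2J\beta)+3}$,
which is \emph{not} equal to $\sinh(2J\beta)+\sqrt{\sinh^2(2J\beta)+3}$ (at $J=0$ one gets $3$ versus $\sqrt{3}$), so your claim that the simplification ``yields the explicit lower bound recorded in condition (ii)'' does not hold as stated; the correct threshold is $J_0>\tfrac{1}{2\beta}\ln\bigl(\cosh(2J\beta)+\sqrt{\cosh^2(2J\beta)+3}\bigr)$. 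The paper's own proof makes the identical unjustified jump, so this appears to be a typo in the lemma ($\sinh$ for $\cosh$) rather than a defect of your argument, but you should not assert the match without checking it.
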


\begin{proof}
We know that  $\theta=\exp(2\beta)$, $\beta>0$ and $J_0>0$. Then one finds

$$
\Delta=\frac{\theta^{2J_0}-\theta^{J_0}\left(\theta^J+\theta^{-J}\right)-3}
{\theta^{2J_0}-\theta^{J_0}\left(\theta^J+\theta^{-J}\right)+1}=\frac{R(J)-4}{R(J)}
$$
where
$$
R(J)=\theta^{2J_0}-\theta^{J_0}\left(\theta^J+\theta^{-J}\right)+1=(\theta^{J_0-J}-1)(\theta^{J_0+J}-1)
$$
Thanks to  $\theta>1$, we have
$$
R(J)\left\{
\begin{array}{ll}
>0, & \mbox{if } (J_0-J)(J_0+J)>0\\
<0, & \mbox{if } (J_0-J)(J_0+J)<0
\end{array}\right.
$$

{\bf Case $ (J_0-J)(J_0+J)<0$}. In this setting, one can see  that
$\Delta(\theta)>0$.\\

{\bf Case $ (J_0-J)(J_0+J)>0$} Note that $\Delta(\theta)>0$ if and only if $R(J)>4$.
For convenience, we denote $\theta^{J_0}=a$ and $\theta^J=b$. And consider the following
equation
$$
a^2-\left(b+b^{-1}\right)a-3=0.
$$
Then
$$
a_{1}=\frac{b+b^{-1}+\sqrt{D}}{2}>0
$$
$$
a_{2}=\frac{b+b^{-1}-\sqrt{D}}{2}<0
$$
where $D=\left(b+b^{-1}\right)^2+12$.
Due to $a>0$, we may conclude that
$$
a^2-\left(b+b^{-1}\right)a-3\left\{
\begin{array}{ll}
>0, & \mbox{if } a>a_1\\
<0, & \mbox{if } 0<a<a_1
\end{array}
\right.
$$
This means that $\Delta(\theta)>0$ if and only if
$$
J_0>\frac{1}{2\beta}\ln\left(\sinh(2J\beta)+\sqrt{\sinh^2(2J\beta)+3}\right)
$$
which completes the proof.
\end{proof}

From this lemma we infer that the phase transition exists in the shaded region shown in the Figure 2 (see $(J,J_0)$ plane).

\begin{figure}
\begin{center}
\includegraphics[width=10.07cm]{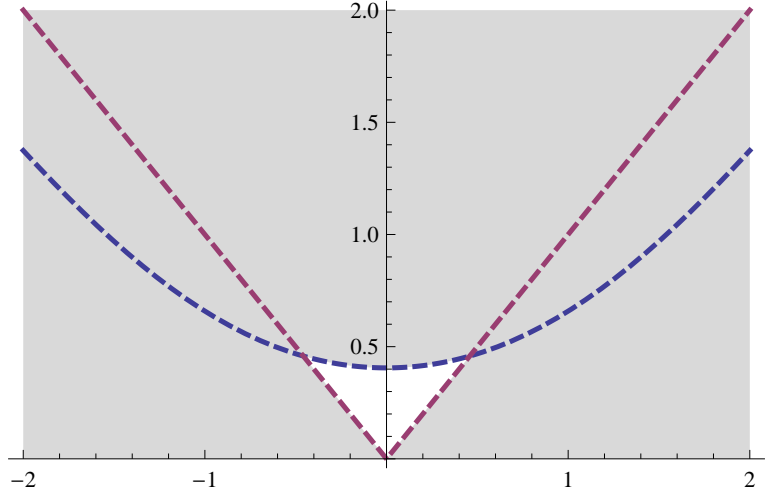}
\end{center}
\caption{Phase diagram} \label{fig1}
\end{figure}

\section{Proof of Theorem \ref{Main}}

This section is devoted to the proof of Theorem \ref{Main}. To
realize it, we first show not overlapping supports of the states
$\varphi_1$ and $\varphi_2$. Then we show that these states satisfy the clustering property, which yields that they are factor states, and this fact allows us to prove their  non-
quasi-equivalence.

\subsection{Not overlapping supports of $\varphi_1$ and $\varphi_2$}
As usual we put
$$
e_{11}= \left(
\begin{array}{cc}
                  1 & 0 \\
                  0 & 0
                \end{array}
                \right),  \    \           \
                e_{22}=
                \left( \begin{array}{cc}
                  0 & 0 \\
                  0 & 1
                \end{array}
                \right).
$$
Now for each $n\in\mathbf{N}$, we denote
$$
P_n:=\bigg(\bigotimes_{x\in \Lambda_n}e_{11}^{(x)}\bigg)\otimes\id, \             \ Q_n:=\bigg(\bigotimes_{x\in
\Lambda_n}e_{22}^{(x)}\bigg)\otimes\id.
$$
Clearly, $P_n$ and $Q_n$ are orthogonal projections in $\mathcal{B}_{\Lambda_n}$.

\begin{lemma}\label{pq_n} For every $n\in\bn$, one has
\begin{enumerate}
\item[(i)] $\varphi_{1}(P_n)=\varphi_2(Q_n)=\frac{1}{2\xi_0}\left(\xi_0+\xi_3\right)^{2^{n}}\left(\frac{C_1+C_2+C_3}{4}\right)^{2^{n}-1},$
\item[(ii)] $\varphi_1(Q_n)=\varphi_2(P_n)=\frac{1}{2\xi_0}\left(\xi_0-\xi_3\right)^{2^{n}}\left(\frac{C_1+C_2+C_3}{4}\right)^{2^{n}-1}.$
\end{enumerate}
\end{lemma}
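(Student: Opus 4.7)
My plan is to recognize that the projections $P_{n}$ and $Q_{n}$ are simultaneous eigenvectors of multiplication by $\bh_{n}$ and of the one-step conjugation $X\mapsto K_{[n-1,n]}XK_{[n-1,n]}^{*}$, which will reduce the calculation to a single scalar recursion in $n$.

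The first step is to establish the eigenvalue identity for $A:=A_{u,(u,1),(u,2)}$ on aligned triples. The expression \eqref{Ax} shows $A$ is self-adjoint (real coefficients and Hermitian Pauli factors), so $K_{[n-1,n]}^{*}=K_{[n-1,n]}$. The crucial quantum cancellation is
$$\bigl(\sigma_{x}\otimes\sigma_{x}+\sigma_{y}\otimes\sigma_{y}\bigr)|{\uparrow\uparrow}\rangle=|{\downarrow\downarrow}\rangle+i^{2}|{\downarrow\downarrow}\rangle=0,$$
and the analogous identity on $|{\downarrow\downarrow}\rangle$, so the $\gamma_{2}$-terms in \eqref{Ax} annihilate fully aligned configurations. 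The diagonal pieces act by the common scalar
$$\lambda:=\gamma_{1}+\gamma_{3}+2\delta_{1}=\tfrac{e^{2J_{0}\beta}+1}{2}+\tfrac{e^{2J_{0}\beta}-1}{2}=e^{2J_{0}\beta},$$
acting identically on $|{\uparrow\uparrow\uparrow}\rangle$ and $|{\downarrow\downarrow\downarrow}\rangle$. Since $K_{[n-1,n]}=\prod_{x\in W_{n-1}}A_{x,(x,1),(x,2)}$ is a product of commuting self-adjoint factors on the pairwise disjoint triples $\{x,(x,1),(x,2)\}$ that partition $W_{n-1}\cup W_{n}$, and since $P_{n}=P_{n-2}\otimes\bigotimes_{x\in W_{n-1}}\bigl(e_{11}^{(x)}\otimes e_{11}^{(x,1)}\otimes e_{11}^{(x,2)}\bigr)$, with the same factorization for $Q_{n}$, I obtain
$$K_{[n-1,n]}P_{n}K_{[n-1,n]}^{*}=\lambda^{2|W_{n-1}|}P_{n},\qquad K_{[n-1,n]}Q_{n}K_{[n-1,n]}^{*}=\lambda^{2|W_{n-1}|}Q_{n}.$$

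The second step handles the $\bh_{n}$ factors. With $h=\xi_{0}\id+\xi_{3}\sigma_{z}$, the diagonality of each $(h^{y})^{1/2}$ together with $e_{11}h=(\xi_{0}+\xi_{3})e_{11}$ (resp.\ $e_{22}h=(\xi_{0}-\xi_{3})e_{22}$) yields
$$\bh_{n}^{1/2}P_{n}\bh_{n}^{1/2}=(\xi_{0}+\xi_{3})^{|W_{n}|}P_{n},\qquad \bh_{n}^{1/2}Q_{n}\bh_{n}^{1/2}=(\xi_{0}-\xi_{3})^{|W_{n}|}Q_{n},$$
with the symmetric statements for $h'$. Substituting into \eqref{F1} and noting $|W_{n}|=2^{n}$ factors out $(\xi_{0}+\xi_{3})^{2^{n}}$ from $\varphi_{1}(P_{n})$ and reduces matters to
$$V_{n}:=\Tr\!\bigl(\omega_{0}\,K_{[0,1]}\cdots K_{[n-1,n]}\,P_{n}\,K_{[n-1,n]}^{*}\cdots K_{[0,1]}^{*}\bigr).$$

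Finally, I would solve the recursion for $V_{n}$. Pulling out the outermost conjugation by the eigenvalue identity gives
$$V_{n}=\lambda^{2|W_{n-1}|}\,\Tr\!\bigl(\omega_{0}\,K_{[0,1]}\cdots K_{[n-2,n-1]}\,P_{n}\,K_{[n-2,n-1]}^{*}\cdots K_{[0,1]}^{*}\bigr).$$
Since $K_{[0,1]}\cdots K_{[n-2,n-1]}\in\cb_{\Lambda_{n-1}}$ and $P_{n}=P_{n-1}\otimes\bigotimes_{y\in W_{n}}e_{11}^{(y)}$, multiplicativity of the normalized trace on the disjoint factorization $\Lambda_{n}=\Lambda_{n-1}\cup W_{n}$ combined with $\Tr(e_{11})=1/2$ gives $V_{n}=\lambda^{2|W_{n-1}|}\,2^{-|W_{n}|}\,V_{n-1}$, with base case $V_{0}=\Tr(\omega_{0}e_{11})=\tfrac{1}{2\xi_{0}}$. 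Using $\sum_{k=1}^{n}|W_{k-1}|=2^{n}-1$ and $\sum_{k=1}^{n}|W_{k}|=2^{n+1}-2$ one obtains $V_{n}=\tfrac{1}{2\xi_{0}}(\lambda^{2}/4)^{2^{n}-1}$, and the direct check $C_{1}+C_{2}+C_{3}=e^{4J_{0}\beta}=\lambda^{2}$ converts this into $\tfrac{1}{2\xi_{0}}\bigl((C_{1}+C_{2}+C_{3})/4\bigr)^{2^{n}-1}$, giving (i). The identity (ii) and the $\varphi_{2}$-formulas follow by the same recursion with $(\xi_{0}+\xi_{3})\leftrightarrow(\xi_{0}-\xi_{3})$, i.e.\ interchanging $P_{n}\leftrightarrow Q_{n}$ or equivalently $h\leftrightarrow h'$. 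The one delicate step is the eigenvalue computation for $A$ on aligned triples — the annihilation coming from the $XY$-piece is what closes the recursion; everything else is geometric-series bookkeeping.
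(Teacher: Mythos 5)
Your proof is correct and follows essentially the same route as the paper: both arguments reduce the computation level by level using the eigenvalue relations $he_{11}=(\xi_0+\xi_3)e_{11}$ and $A\,e_{11}^{\otimes 3}=(\gamma_1+\gamma_3+2\delta_1)e_{11}^{\otimes 3}$, the paper packaging them into the single partial-trace identity $\Tr_{x]}(A\,e_{11}\otimes he_{11}\otimes he_{11}\,A^{*})=(\xi_0+\xi_3)^{2}\tfrac{C_1+C_2+C_3}{4}\,e_{11}$ and iterating. Your version merely makes the eigenvector structure and the identification $C_1+C_2+C_3=e^{4J_0\beta}=\lambda^{2}$ explicit, which the paper leaves implicit.
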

\begin{proof} (i). From \eqref{F1} we find
\begin{eqnarray*}
\varphi_1(P_n)&=&\Tr \big(\omega_0 K_{[0,1]}\cdots K_{[n-1,n]}\bh_{n}^{1/2} P_n  \bh_{n}^{1/2} K_{[n-1,n]}^{*}\cdots K_{[0,1]}^{*}\big)\nonumber\\
&=&\Tr \big(\omega_0 K_{[0,1]}\cdots K_{[n-1,n]}\bh_{n} P_n K_{[n-1,n]}^{*}\cdots K_{[0,1]}^{*}\big)
\end{eqnarray*}
Thanks to $he_{11}=(\xi_0+\xi_3)e_{11}$ and
\eqref{Ax} one gets
\begin{eqnarray*} \Tr_{n-1]}(
K_{[n-1,n]}\bh_nP_nK_{[n-1,n]}^{*})
&=& P_{n-2}\otimes\prod_{u\in  \overrightarrow W_{n-1}}\Tr_{u]}(A_{(u,(u,1),(u,2))}e_{11}^{(u)}\otimes h e_{11}^{(u,1)}\otimes h
e_{11}^{(u,2)}A_{(u,(u,1),(u,2))})\nonumber\\
&=& (\xi_0+\xi_3)^{2|W_{n-1]}|}\bigg(\frac{C_1+C_2+C_3}{4}\bigg)^{|W_{n-1}|}P_{n-1}.\nonumber\\
\end{eqnarray*}
Hence,
\begin{eqnarray*}
\varphi_1(P_n)&=&(\xi_0+\xi_3)^{|W_{n}|}\left(\frac{C_1+C_2+C_3}{4}\right)^{|W_{n-1}|}\Tr\left[ \omega_0K_{[0,1]}\cdots
K_{[n-2,n-1]}P_{n-1}K_{[n-2,n-1]}^{*}\cdots K_{[0,1]}^{*}\right]\nonumber\\
&&.\nonumber\\
&&.\nonumber\\
&&.\nonumber\\
&=& \left(\xi_0+\xi_3\right)^{|W_{n}|}\left(\frac{C_1+C_2+C_3}{4}\right)^{|W_{n-1}|+ ... +|W_{0}|}\Tr\left[ \omega_0P_{0}\right]\nonumber\\
&=&\left(\xi_0+\xi_3\right)^{|W_{n}|}\left(\frac{C_1+C_2+C_3}{4}\right)^{|\Lambda_{n-1}|}\Tr\left[ \omega_0P_{0}\right]\nonumber\\
&=&\frac{1}{2\xi_0}\left(\xi_0+\xi_3\right)^{2^{n}}\left(\frac{C_1+C_2+C_3}{4}\right)^{2^n-1}.
\end{eqnarray*}
Analogously, using  $h'e_{22}=(\xi_0+\xi_3)e_{22}$
we obtain
\begin{eqnarray*}
\Tr_{n-1]} (K_{[n-1,n]}\bh'_nQ_nK_{[n-1,n]}^{*})
&=& Q_{n-2}\otimes\prod_{u\in \overrightarrow W_{n-1}}\Tr_{u]}(A_{(u,(u,1),(u,2))}e_{2,2}^{(u)}\otimes h' e_{2,2}^{(u,1)}\otimes h'
e_{22}^{(u,2)}A_{(u,(u,1),(u,2))})\nonumber\\
&=&
(\xi_0+\xi_3)^{2|\mathcal{W}_{n-1]}|}\bigg(\frac{C_1+C_2+C_3}{4}\bigg)^{|\mathcal{W}_{n-1]}|}Q_{n-1}.
\end{eqnarray*}
which yields
$$\varphi_2(Q_n)=\frac{1}{2\xi_0}\left(\xi_0+\xi_3\right)^{2^{n}}\left(\frac{C_1+C_2+C_3}{4}\right)^{2^n-1}.$$

(ii) Now from $he_{22}=(\xi_0-\xi_3)e_{22}$ and
$h'e_{11}=(\xi_0-\xi_3)e_{11}$, we obtain
\begin{eqnarray*}
&&\Tr_{n-1]} K_{[n-1,n]}\bh_nQ_nK_{[n-1,n]}^{*} =
(\xi_0-\xi_3)^{2|W_{n-1}|}\bigg(\frac{C_1+C_2+C_3}{4}\bigg)^{|W_{n-1}|}Q_{n-1}.\\[2mm]
&&\Tr_{n-1]} K_{[n-1,n]}\bh'_nP_nK_{[n-1,n]}^{*} =
(\xi_0-\xi_3)^{2|W_{n-1}|}\bigg(\frac{C_1+C_2+C_3}{4}\bigg)^{|W_{n-1}|}P_{n-1}.
\end{eqnarray*}
The same argument as above implies (ii). This completes the proof.
\end{proof}

\begin{theorem}\label{6.2}
For fixed $n\in\mathbf{N}$, one has
$$\varphi_1(P_n) \rightarrow 1, \      \   \varphi_2(Q_n)\rightarrow 1 \ \ \textrm{as} \ \ \beta\rightarrow +\infty.
$$
\end{theorem}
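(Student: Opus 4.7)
The plan is to reduce the asymptotic statement to a purely algebraic limit by plugging the explicit formulas of Lemma \ref{pq_n} into the definitions, factoring out the common scale, and identifying the dominant behavior as $\theta=e^{2\beta}\to\infty$.

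First, I would rewrite the expression from Lemma \ref{pq_n}(i) in a form better suited for taking limits. Using $\xi_3/\xi_0=\sqrt{\Delta(\theta)}$, the factor $(\xi_0+\xi_3)^{2^n}$ can be split as $\xi_0^{2^n}(1+\sqrt{\Delta(\theta)})^{2^n}$, giving
\begin{equation*}
\varphi_1(P_n)=\varphi_2(Q_n)=\frac{1}{2}\,\bigl(1+\sqrt{\Delta(\theta)}\bigr)^{2^n}\left(\frac{C_1+C_2+C_3}{4\,C_3}\right)^{2^n-1}.
\end{equation*}
The second step is an elementary computation of the constant $\frac{C_1+C_2+C_3}{4C_3}$. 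Since
\begin{equation*}
C_1+C_2=\tfrac{1}{2}(\theta^{2J_0}+1),\qquad C_3=\tfrac{1}{2}(\theta^{2J_0}-1),
\end{equation*}
the cosh-terms cancel exactly, yielding $C_1+C_2+C_3=\theta^{2J_0}$ and hence
\begin{equation*}
\frac{C_1+C_2+C_3}{4\,C_3}=\frac{\theta^{2J_0}}{2(\theta^{2J_0}-1)}\longrightarrow \tfrac{1}{2}\quad\text{as }\beta\to\infty.
\end{equation*}

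The third step is to show $\Delta(\theta)\to 1$. From the explicit formula
\begin{equation*}
\Delta(\theta)=1-\frac{4}{\theta^{2J_0}-\theta^{J_0}(\theta^J+\theta^{-J})+1},
\end{equation*}
and the fact that for fixed $J_0>0$ and $J\in\mathbb{R}$ the denominator grows like $\theta^{2J_0}$ (one of the hypotheses of the lemma already ensures that for $\beta$ large the denominator is positive and large), one obtains $\Delta(\theta)\to 1$ and therefore $(1+\sqrt{\Delta(\theta)})^{2^n}\to 2^{2^n}$.

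Combining these three observations gives the limit
\begin{equation*}
\varphi_1(P_n)\longrightarrow\frac{1}{2}\cdot 2^{2^n}\cdot\left(\frac{1}{2}\right)^{2^n-1}=1,
\end{equation*}
and the same identity from Lemma \ref{pq_n}(i) applied to $\varphi_2(Q_n)$ gives the second limit. The only mildly delicate point is the bookkeeping of the powers of $\theta^{2J_0}$: both factors $(\xi_0+\xi_3)^{2^n}$ and $(C_1+C_2+C_3)^{2^n-1}$ individually diverge or vanish, and it is only after normalising by $1/(2\xi_0)$ and pairing $(1+\sqrt{\Delta})$ with $(C_1+C_2+C_3)/(4C_3)$ that the $\theta$--dependence cancels to produce the finite limit $1$. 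No further obstacle is expected; once the algebraic simplification $C_1+C_2+C_3=\theta^{2J_0}$ is noticed, the rest is a bounded limit calculation.
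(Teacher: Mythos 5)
Your proposal is correct and follows essentially the same route as the paper: both plug the explicit formula of Lemma \ref{pq_n}(i) into the limit $\theta=e^{2\beta}\to\infty$, using the exact identity $C_1+C_2+C_3=\theta^{2J_0}$ and the fact that $\xi_3/\xi_0=\sqrt{\Delta(\theta)}\to 1$; your regrouping into factors each having a finite nonzero limit is a slightly cleaner way to organize the same cancellation that the paper carries out by multiplying individual asymptotic equivalents. One tiny imprecision: the denominator $\theta^{2J_0}-\theta^{J_0}(\theta^J+\theta^{-J})+1=(\theta^{J_0-J}-1)(\theta^{J_0+J}-1)$ grows like $\theta^{2J_0}$ only when $|J|<J_0$ (for $|J|>J_0$ it tends to $-\infty$), but since $\Delta=1-4/R$ with $|R|\to\infty$ in either case, the conclusion $\Delta(\theta)\to 1$ is unaffected.
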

\begin{proof} We know that $\theta=\exp (2\beta)\rightarrow+\infty $ as
$\beta\rightarrow+\infty$. Hence, one finds
\begin{eqnarray*}
&& \frac{1}{2\xi_0}=\frac{\theta^{2J_0}-1}{4}\sim
\frac{\theta^{2J_0}}{4}, \ \ \textrm{as} \ \
\theta\to+\infty\\[2mm]
&&
\left(\xi_0+\xi_3\right)^{2^{n}}=\left(\frac{2}{\theta^{2J_0}-1}(1+\sqrt{\frac{\theta^{2J_0}-\theta^{J_0}(\theta^J+\theta^{-J})+1}{\theta^{2J_0}-\theta^{J_0}(\theta^J+\theta^{-J})-3}})\right)^{2^{n}}
\sim \left(\frac{4}{\theta^{2J_0}}\right)^{2^{n}}, \ \ \textrm{as}
\ \
\theta\to+\infty\\[2mm]
&&\left(\frac{C_1+C_2+C_3}{4}\right)^{2^{n}-1}=\left(\frac{\theta^{2J_0}}{4}\right)^{2^{n}-1}.
\end{eqnarray*}
Hence, we obtain
$$\varphi_{1}(P_n)=\varphi_2(Q_n)
\sim
\frac{\theta^{2J_0}}{4}\left(\frac{4}{\theta^{2J_0}}\right)^{2^{n}}\left(\frac{\theta^{2J_0}}{4}\right)^{2^{n}-1}=1.
$$
So, $$ \lim_{\theta\to\infty}\varphi_{1}(P_n)=
\lim_{\theta\to\infty} \varphi_2(Q_n)=1.$$ This completes the
proof.
\end{proof}
\begin{remark} We note that from  $P_n\leq 1-Q_n$ one gets
$$
\lim_{\theta\to\infty}\varphi_{1}(Q_n)= \lim_{\theta\to\infty}
\varphi_2(P_n)=0.$$
This implies that the states $\varphi_1$ and $\varphi_2$ have non
overlapping supports.
\end{remark}

\subsection{Clustering Property for $\varphi_1$ and $\varphi_2$ }

In this subsection, we are going to prove that the states $\ffi_1$, $\ffi_2$ satisfy the clustering property.

Recall that a state $\ffi$ on $\mathcal{B}_L$ satisfies the \textit{clustering property} if  for every $a,f\in \mathcal{B}_L$ one has
\begin{equation}\label{clus}
\lim_{|g|\rightarrow \infty}\ffi(a \tau_{g}(f))=\ffi(a)\ffi(f).
\end{equation}

Thanks to Theorem \eqref{5.3} there are two solutions of \eqref{eq1}, and  \eqref{eq2}, these two solutions  can be written as follows: $\big(\omega_{0},\ \ \{h^{(u)}=h, \ u\in L\}\big)$ and $\big(\omega_0,\{h^{(u)}=h', \
u\in L\}\big)$, where\\
$$\omega_{0}=\frac{1}{\xi_{0}}\id ,~~~h=\xi_{0}\id -\xi_{3}\sigma_{z},$$

$$\omega_{0}=\frac{1}{\xi_{0}}\id ,~~~h'=\xi_{0}\id +\xi_{3}\sigma_{z}.$$

By $\ffi_1$, $\ffi_2$ we denote the corresponding backward quantum
Markov chains. To prove the clustering property we need to study the following matrix:

 $$A:=\begin{pmatrix}
 c_{1}\xi_{0} & -c_{2}\xi_{3}\\
 -\frac{c_{3}}{2}\xi_{3} & \frac{c_{3}}{2}\xi_{0}
\end{pmatrix}$$

One can easily prove the following fact.

\begin{proposition}
The above given matrix $A$ is a diagonalizable matrix, and  can be written as follows:
\begin{equation}\label{7.2}
A:=P\begin{pmatrix}
 1 & 0\\
 0 & (c_{1}-\frac{c_{3}}{2})\xi_{0}
\end{pmatrix}P^{-1}
\end{equation}
where
 $$P:=\begin{pmatrix}
 \frac{c_{2}\xi_{3}}{c_{1}\xi_{0}-1} & 2c_{2}\xi_{3}\\
 1 & 1
\end{pmatrix}, \ \ \det(P)=\frac{c_{2}\xi_{3}(3-2c_{1}\xi_{0})}{c_{1}\xi_{0}-1}.$$
\end{proposition}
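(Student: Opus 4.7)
The plan is a direct matrix-algebra verification. Two identities inherited from Proposition~\ref{hh0} are the essential inputs. First, from $\xi_{0}=1/C_{3}$ we obtain $C_{3}\xi_{0}=1$, and in particular $\tfrac{c_{3}}{2}\xi_{0}=\tfrac12$. Second, combining $\xi_{3}^{2}=\Delta(\theta)/C_{3}^{2}$ with $\Delta(\theta)=(C_{3}-C_{1})/C_{2}$ yields the key identity
\[
C_{2}\xi_{3}^{2}=\xi_{0}^{2}(C_{3}-C_{1})=\xi_{0}(1-C_{1}\xi_{0}).
\]
Everything then collapses to a short calculation with these two relations.

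With these in hand, I first compute the characteristic polynomial of $A$. The trace is $\Tr(A)=c_{1}\xi_{0}+\tfrac12$, and the determinant unfolds as
\[
\det(A)=\tfrac{c_{3}}{2}\bigl(c_{1}\xi_{0}^{2}-c_{2}\xi_{3}^{2}\bigr)=\tfrac12\bigl(c_{1}\xi_{0}-(1-c_{1}\xi_{0})\bigr)=c_{1}\xi_{0}-\tfrac12.
\]
Hence $\lambda^{2}-(c_{1}\xi_{0}+\tfrac12)\lambda+(c_{1}\xi_{0}-\tfrac12)=0$ has roots $\lambda_{1}=1$ and $\lambda_{2}=c_{1}\xi_{0}-\tfrac12=(c_{1}-\tfrac{c_{3}}{2})\xi_{0}$, the claimed diagonal entries. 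For the eigenvectors, I solve $(A-I)v=0$ using the first row, which reads $(c_{1}\xi_{0}-1)v_{1}=c_{2}\xi_{3}v_{2}$; normalizing with $v_{2}=1$ gives the first column of $P$, namely $(c_{2}\xi_{3}/(c_{1}\xi_{0}-1),1)^{T}$. For $\lambda_{2}$, the first row of $A-\lambda_{2}I$ reduces after substituting $\tfrac{c_{3}}{2}\xi_{0}=\tfrac12$ to $\tfrac12 v_{1}=c_{2}\xi_{3}v_{2}$, producing the second column $(2c_{2}\xi_{3},1)^{T}$. The stated expression for $\det(P)$ then follows from a single cofactor expansion.

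The only genuine obstacle is invertibility of $P$, i.e.\ $c_{1}\xi_{0}\neq 1$ and $2c_{1}\xi_{0}\neq 3$. The first is automatic in the regime $\Delta(\theta)>0$ where Proposition~\ref{hh0} is applied: from $C_{2}\xi_{3}^{2}=\xi_{0}(1-C_{1}\xi_{0})$ together with $\xi_{3}\neq 0$ and $C_{2}\neq 0$, one infers $c_{1}\xi_{0}\neq 1$. The second condition $2c_{1}\xi_{0}\neq 3$ is precisely the non-coincidence of the two eigenvalues $\lambda_{1}=1$ and $\lambda_{2}=c_{1}\xi_{0}-\tfrac12$, i.e.\ the genuine diagonalizability of $A$; it must be imposed (or verified from the explicit formulas for $C_{1}$ and $\xi_{0}$) as a mild non-degeneracy hypothesis on the parameters $(J,J_{0},\beta)$. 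Modulo these checks the diagonalization is routine linear algebra, and the hard work has already been absorbed into the algebraic identities produced by Proposition~\ref{hh0}.
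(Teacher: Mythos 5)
Your verification is correct, and since the paper offers no proof of this proposition (it is introduced with ``one can easily prove the following fact''), your computation is exactly the routine check the authors had in mind: the identities $C_3\xi_0=1$ and $C_2\xi_3^2=\xi_0(1-C_1\xi_0)$ reduce the trace and determinant of $A$ to $c_1\xi_0+\tfrac12$ and $c_1\xi_0-\tfrac12$, giving eigenvalues $1$ and $(c_1-\tfrac{c_3}{2})\xi_0$ with the stated eigenvectors. Your additional observation about non-degeneracy is a genuine contribution rather than a deviation: $c_1\xi_0\neq1$ does follow in the regime $\Delta(\theta)>0$ as you argue (since $\xi_3\neq0$ and $C_2\neq0$ there), while $2c_1\xi_0\neq3$ is tacitly assumed by the paper --- the quantity $3C_3-2C_1$ reappears as a denominator in Lemma \ref{f-p-11} and in $\det(P)$ --- and can in fact vanish at isolated parameter values in the region $|J|>J_0$, where $A$ would have a repeated eigenvalue with nonzero off-diagonal entries and hence fail to be diagonalizable; flagging this as a mild hypothesis on $(J,J_0,\beta)$ is the right call.
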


\begin{lemma}\label{7.3}
Let  $a \in \mathcal{B}_{\L_{N_{0}}}$, for some $N_{0} \in \bn$, and $f_n=\bigotimes\limits_{x\in W_{n}}f^{(x)}=f^{x_{W_{n}}^{(1)}} \otimes\id_{W_{n}\setminus\{x_{W_{n}}^{(1)}\}}\in\mathcal{B}_{W_{n}}$,  where $f^{x_{W_{n}}^{(1)}} =f$, then for each  backward quantum Markov chains $\ffi_1$, $\ffi_2$ we have
$$\lim_{n\rightarrow \infty}\ffi_k(a \otimes \id \cdots\otimes \id \otimes f_n)=\ffi_k(a)\ffi_k(f ), \ \ k=1,2.$$

\end{lemma}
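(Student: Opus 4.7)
The plan is to reduce the clustering statement to the spectral contraction of a single linear transfer operator $\Phi$ on $M_{2}(\bC)$, which propagates the perturbation introduced by $f$ up the path from $x^{(1)}_{W_{n}}$ to the root. Define
\begin{equation*}
\Phi(X) := \Tr_{u]}\bigl(A_{(u,(u,1),(u,2))}\,[\id\otimes X\otimes h]\,A^{*}_{(u,(u,1),(u,2))}\bigr), \qquad X \in M_{2}(\bC).
\end{equation*}
Since $a\in\cb_{\Lambda_{N_{0}}}$ while $f$ sits at the single vertex $x^{(1)}_{W_{n}}$, one factors $\bh_{n}^{1/2}(a\otimes\id\otimes f_{n})\bh_{n}^{1/2} = (a\otimes\id_{[N_{0}+1,n-1]})\otimes\tilde{\bh}_{n}$, where $\tilde{\bh}_{n}$ agrees with $\bh_{n}$ on $W_{n}\setminus\{x^{(1)}_{W_{n}}\}$ and equals $h^{1/2}fh^{1/2}$ at $x^{(1)}_{W_{n}}$. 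Because $a$ commutes with $K_{[k,k+1]}$ for every $k\ge N_{0}+1$ (their supports being disjoint), the partial traces over each $W_{k}$ can be applied iteratively from $k=n$ down to $k=N_{0}+1$: at each step the fixed-point equation \eqref{eq2} collapses the contribution of every vertex $u\in W_{k-1}$ whose two children both carry $h$, so only the unique ancestor of $x^{(1)}_{W_{n}}$ retains a perturbation, which propagates via $h_{f,\ell+1}=\Phi(h_{f,\ell})$ with $h_{f,0}=h^{1/2}fh^{1/2}$. After $n-N_{0}-1$ iterations one arrives at
\begin{equation*}
\varphi_{k}(a\otimes\id\otimes f_{n}) = \Tr\bigl(\omega_{0}K_{[0,1]}\cdots K_{[N_{0},N_{0}+1]}\,(a\otimes\tilde{\bh}^{\prime}_{N_{0}+1})\,K^{*}_{[N_{0},N_{0}+1]}\cdots K^{*}_{[0,1]}\bigr),
\end{equation*}
where $\tilde{\bh}^{\prime}_{N_{0}+1}\in\cb_{W_{N_{0}+1}}$ equals $h$ at every vertex of $W_{N_{0}+1}$ except the ancestor of $x^{(1)}_{W_{n}}$, at which it equals $\Phi^{n-N_{0}-1}(h^{1/2}fh^{1/2})$.

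The next step is the spectral analysis of $\Phi$. Since $A_{(u,(u,1),(u,2))}$ carries only $\id$ or $\sigma_{z}$ at the $u$-position (see \eqref{Ax}), the output $\Phi(X)$ always lies in $\mathrm{span}\{\id,\sigma_{z}\}$; on the other hand, conjugation by $(\sigma_{z})^{\otimes 3}$ leaves $A_{(u,(u,1),(u,2))}$ invariant (each summand involves an even total number of $\sigma_{x},\sigma_{y}$ factors), which forces $\Phi(X)$ to share the $\mathbb{Z}_{2}$-parity of $X$. Combining these two observations yields $\Phi|_{\mathrm{span}\{\sigma_{x},\sigma_{y}\}}=0$, so only the restriction of $\Phi$ to the even subspace $\mathrm{span}\{\id,\sigma_{z}\}$ is relevant. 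A direct calculation, mirroring the derivation of \eqref{eqder1}, identifies the matrix of this restriction in the basis $\{\id,\sigma_{z}\}$ with the matrix $A$ of the preceding proposition (with $c_{i}=C_{i}$). By \eqref{7.2} its eigenvalues are $1$, with eigenvector proportional to $h$, and $\lambda=(C_{1}-C_{3}/2)\xi_{0}=(1+\theta^{J_{0}}\cosh(2J\beta))/(\theta^{2J_{0}}-1)$. The hypothesis $\Delta(\theta)>0$, which is equivalent to $\theta^{2J_{0}}-2\theta^{J_{0}}\cosh(2J\beta)>3$, immediately forces $|\lambda|<1$.

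Hence $\Phi^{n-N_{0}-1}(h^{1/2}fh^{1/2})\to\mu(f)\,h$ in operator norm as $n\to\infty$, for a linear functional $\mu$ on $M_{2}(\bC)$ determined by the spectral projection of $\Phi$ onto $\bC h$. Substituting into the trace expression above and using continuity of the trace in operator norm yields
\begin{equation*}
\lim_{n\to\infty}\varphi_{k}(a\otimes\id\otimes f_{n}) = \mu(f)\cdot\Tr\bigl(\omega_{0}K_{[0,1]}\cdots K_{[N_{0},N_{0}+1]}\,(a\otimes\bh_{N_{0}+1})\,K^{*}_{[N_{0},N_{0}+1]}\cdots K^{*}_{[0,1]}\bigr).
\end{equation*}
By the compatibility condition \eqref{compatibility} established in Theorem \ref{compa}, the right-hand trace equals $\varphi_{k}(a)$. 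Setting $a=\id$ and invoking the translation invariance of $\varphi_{k}$ (so that $\varphi_{k}(f^{(x^{(1)}_{W_{n}})})=\varphi_{k}(f)$ for every $n$) then forces $\mu(f)=\varphi_{k}(f)$. The argument for $\varphi_{2}$ is identical, with $h$ replaced by $h^{\prime}$.

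The main obstacle is the explicit identification of the matrix of $\Phi|_{\mathrm{span}\{\id,\sigma_{z}\}}$ with the matrix $A$ of the preceding proposition: this requires careful bookkeeping of the Pauli multiplication rules in the expansion of $A_{(u,(u,1),(u,2))}[\id\otimes X\otimes h]A^{*}_{(u,(u,1),(u,2))}$, which is elementary but somewhat tedious. Once the identification is in hand, the bound $|\lambda|<1$ follows immediately from $\Delta(\theta)>0$, and the exchange of limit and trace is routine given the operator-norm convergence on the finite-dimensional algebra $\cb_{W_{N_{0}+1}}$.
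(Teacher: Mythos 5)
Your proof is essentially the paper's own argument in slightly different packaging: the transfer operator $\Phi$ restricted to $\mathrm{span}\{\id,\sigma_z\}$ is exactly the recursion matrix $A$ that the paper obtains by iterating the quasi-conditional expectations $\ce_{n-k}(\id\otimes\cdots)$ on the coefficient vector $(v_k,v_k')$, and both proofs conclude by diagonalizing $A$ (eigenvalue $1$ with eigenvector proportional to $h$, second eigenvalue $\lambda=C_1\xi_0-\tfrac12$) and letting the subleading eigenvalue contract. Two of your steps are actually cleaner than the paper's: the $\mathbb{Z}_2$-parity argument showing $\Phi$ annihilates the $\sigma_x,\sigma_y$ components (the paper just computes $g$ and observes the output lands in $\mathrm{span}\{\id,\sigma_z\}$ after one more step), and the identification $\mu(f)=\varphi_k(f)$ by setting $a=\id$ and using compatibility, where the paper instead verifies $\varphi(f)=C_3(\eta_1v_1+\eta_2v_1')$ by explicit computation.

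One claim in your writeup is factually wrong, though it does not damage the proof in the regime where the lemma is meaningful: $\Delta(\theta)>0$ is \emph{not} equivalent to $\theta^{2J_0}-2\theta^{J_0}\cosh(2J\beta)>3$ (i.e.\ $R(J)>4$). By the paper's Lemma \ref{DD}, $\Delta(\theta)>0$ also holds when $|J|>J_0$, where $R(J)<0$; in that regime $\lambda=\bigl(2+\theta^{J_0}(\theta^J+\theta^{-J})\bigr)/\bigl(2(\theta^{2J_0}-1)\bigr)$ can be arbitrarily large, so the contraction fails. What saves you is that for $|J|>J_0$ one has $\Delta(\theta)>1$, hence $\xi_3>\xi_0$ and $h$, $h'$ are not positive, so $\varphi_1,\varphi_2$ are not states there anyway; on the complementary set where the states exist one does have $R(J)>4$ and your estimate gives $0<\lambda<\tfrac12$. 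You should state the hypothesis as $R(J)>4$ (equivalently $-J_0<J<J_0$ together with $\Delta(\theta)>0$), which is also the restriction the paper itself imposes later in Theorem \ref{6.9}; the paper's own proof of this lemma glosses over the point entirely by asserting only that ``$\xi_0^{n-N_0}\to0$''.
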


\begin{proof}

 By symmetry of calculations, it is enough to prove the result for  $\big(\omega_{0},\ \ \{h^{(u)}=h, \ u\in L\}\big)$.
 From  \eqref{ffi-ff} and \eqref{E-n1} we have:

$$ \ffi^{(n)}_{w_{0},\bh}(a_{N_{0}} \otimes \id \cdots\otimes \id \otimes f)
=\tr(\omega_{0}\ce_{0}\circ \cdots\circ \ce_{N_{0}}( a\otimes \ce_{N_{0}+1}( \id \otimes \cdots\otimes \ce_{n-1}( \id \otimes\hat\ce_{n}(\id \otimes f))\cdots))$$
First, let us calculate $\hat\ce_{n}(\id \otimes f)$. From \eqref{E-n1} it follows that

 \begin{eqnarray*}
\hat\ce_{n}(f\otimes \id)&=& \bigotimes\limits_{x\in W_{n}}\tr_{x]}(A_{x \vee S(x)} f^{(x)}\otimes h_{S(x)} A^{*}_{x \vee S(x)})\\
 &=& \tr_{x_{W_{n}}^{(1)}]}(A_{x_{W_{n}}^{(1)} \vee S(x_{W_{n}}^{(1)})} f^{(x_{W_{n}}^{(1)})}\otimes h_{S(x_{W_{n}}^{(1)})} A^{*}_{x_{W_{n}}^{(1)} \vee S(x_{W_{n}}^{(1)})}) \otimes\\[2mm]
 &&\bigotimes\limits_{x\in W_{n}\setminus\{x_{W_{n}}^{(1)}\}} \tr_{x]}(  A_{x \vee S(x)}h_{S(x)}
   A^{*}_{x \vee S(x)})\\
   &=& \bigg( \alpha_{1} f^{(x_{W_{n}}^{(1)})}+  \alpha_{2} (f^{(x_{W_{n}}^{(1)})}\sigma_{z}^{(x_{W_{n}}^{(1)})}
   + \sigma_{z}^{(x_{W_{n}}^{(1)})}f^{(x_{W_{n}}^{(1)})})\\[2mm]
   &&+ \alpha_{3}\sigma_{z}^{(x_{W_{n}}^{(1)})}f^{(x_{W_{n}}^{(1)})}\sigma_{z}^{(x_{W_{n}}^{(1)})}\bigg)\otimes \bigotimes\limits_{x\in W_{n}\setminus\{x_{W_{n}}^{(1)}\}}h_{x}\\
   &=& g^{(x_{W_{n}}^{(1)})}\otimes \bigotimes\limits_{x\in W_{n}\setminus\{x_{W_{n}}^{(1)}\}}h_{x},
 \end{eqnarray*}
where  $$ g^{(x_{W_{n}}^{(1)})}=\alpha_{1} f^{(x_{W_{n}}^{(1)})}+  \alpha_{2} (f_{1}^{(x_{W_{n}}^{(1)})}\sigma_{z}^{(x_{W_{n}}^{(1)})}+ \sigma_{z}^{(x_{W_{n}}^{(1)})}f^{(x_{W_{n}}^{(1)})})+ \alpha_{3}\sigma_{z}^{(x_{W_{n}}^{(1)})}f^{(x_{W_{n}}^{(1)})}\sigma_{z}^{(x_{W_{n}}^{(1)})},$$
 and $$\left\{\begin{array}{ll}
 \alpha_{1}=(C_{1}-2\delta_{1}^{2})\xi_{0}^{2}+ (C_{2}-2\delta_{1}^{2})\xi_{3}^{2}\\
 \alpha_{2}=-\frac{C_{3}}{2}\xi_{0}\xi_{3}\\
 \alpha_{3}=2\delta_{1}^{2}(\xi_{0}^{2}+\xi_{3}^{2})
\end{array}\right..$$

Hence, one has
\begin{eqnarray*}
\ce_{n-1}(\id \otimes \hat\ce_{n}(f\otimes \id))&=& \tr_{x_{W_{n-1}}^{(1)}]}(A_{x_{W_{n-1}}^{(1)} \vee S(x_{W_{n-1}}^{(1)})} \id \otimes g^{(x_{W_{n}}^{(1)})}\otimes h A^{*}_{x_{W_{n-1}}^{(1)} \vee S(x_{W_{n-1}}^{(1)})})
\otimes \\[2mm]
&&\bigotimes\limits_{x\in W_{n-1}\setminus\{x_{W_{n-1}}^{(1)}\}} h_{(x)}\\
&=&(\alpha_{1,g}\id^{(x_{W_{n-1}}^{(1)})}+\alpha_{2,g}\sigma_{z}^{(x_{W_{n-1}}^{(1)})})\otimes \bigotimes\limits_{x\in W_{n-1}\setminus\{x_{W_{n-1}}^{(1)}\}} h_{(x)}\\
&=&\alpha_{1,g} \id^{(x_{W_{n-1}}^{(1)})}\otimes \bigotimes\limits_{x\in W_{n-1}\setminus\{x_{W_{n-1}}^{(1)}\}} h_{(x)}\\[2mm]
&&+ \alpha_{2,g} \sigma_{z}^{(x_{W_{n-1}}^{(1)})}\otimes \bigotimes\limits_{x\in W_{n-1}\setminus\{x_{W_{n-1}}^{(1)}\}} h_{(x)}
 \end{eqnarray*}

So, one finds \begin{eqnarray*}\ce_{n-1}(\id \otimes \hat\ce_{n}(f\otimes \id))&=& v_{1}\id^{(x_{W_{n-1}}^{(1)})}\otimes \bigotimes\limits_{x\in W_{n-1}\setminus\{x_{W_{n-1}}^{(1)}\}} h_{(x)}+ v^{'}_{1} \sigma_{z}^{(x_{W_{n-1}}^{(1)})}\otimes \bigotimes\limits_{x\in W_{n-1}\setminus\{x_{W_{n-1}}^{(1)}\}} h_{(x)}
\end{eqnarray*}
where $$\left\{\begin{array}{ll}
 \alpha_{1,X}=C_{1}\tr(X)\xi_{0}- C_{2}\tr(\sigma_{z}X)\xi_{3}\\
 \alpha_{2,X}=\frac{C_{3}}{2}\left(\tr(\sigma_{z}X)\xi_{0}-\tr(X)\xi_{3}\right)\\
 v_{1}=\alpha_{1,g}\\
 v_{1}^{'}=\alpha_{2,g}\\
\end{array}\right.$$

Then by iteration we obtain
\begin{eqnarray*}\ce_{n-k}(\id \otimes... \ce_{n-1}(\id \otimes \hat\ce_{n}(f\otimes \id)))&=& v_{k}\id^{(x_{W_{n-1}}^{(1)})}\otimes \bigotimes\limits_{x\in W_{n-1}\setminus\{x_{W_{n-1}}^{(1)}\}} h_{(x)}\\[2mm]
&&+ v^{'}_{k} \sigma_{z}^{(x_{W_{n-1}}^{(1)})}\otimes \bigotimes\limits_{x\in W_{n-1}\setminus\{x_{W_{n-1}}^{(1)}\}} h_{(x)}
\end{eqnarray*}
where $$\left\{\begin{array}{ll}
 v_{k}=v_{k-1}C_{1}\xi_{0}-C_{2}\xi_{3}v_{k-1}^{'}\\
 v_{k}^{'}=-\frac{C_{3}}{2}\xi_{3}v_{k-1}+ \frac{C_{3}}{2}\xi_{0}v_{k-1}^{'}\\
\end{array}\right.$$

Now let calculate the explicit form of the sequence $v_{k}$, we can see :
\begin{eqnarray*}\begin{pmatrix}
 v_{k}\\ v_{k}^{'}
\end{pmatrix}&=& A \begin{pmatrix}
v_{k-1}\\
 v_{k-1}^{'}
\end{pmatrix}\\
&\vdots&\\
&=& A^{k-1}\begin{pmatrix}
v_{1}\\
 v_{1}^{'}
\end{pmatrix}
\end{eqnarray*}

Then by \eqref{7.2} we get, \begin{eqnarray*}\begin{pmatrix}
 v_{k}\\ v_{k}^{'}
\end{pmatrix}&=&P\begin{pmatrix}
 1 & 0\\
0& (c_{1}-\frac{c_{3}}{2})^{k-1}\xi_{0}^{k-1}
\end{pmatrix}P^{-1}\begin{pmatrix}
 v_{1}\\
 v_{1}^{'}
\end{pmatrix}\\
&=&\begin{pmatrix}
 \frac{1}{3-2c_{1}\xi_{0}}+\frac{2\xi_{0}^{k-1}(c_{1}-\frac{c_{3}}{2})^{k-1}(1-c_{1}\xi_{0})}{3-2c_{1}\xi_{0}} & -\frac{2c_{2}\xi_{3}}{3-2c_{1}\xi_{0}}+\frac{2c_{2}\xi_{3}\xi_{0}^{k-1}(c_{1}-\frac{c_{3}}{2})^{k-1}}{3-2c_{1}\xi_{0}}\\[2mm]
\frac{c_{1}\xi_{0}-1}{c_{2}\xi_{3}(3-2c_{1}\xi_{0})}+\frac{\xi_{0}^{k-1}(c_{1}-\frac{c_{3}}{2})^{k-1}(1-c_{1}\xi_{0})}{c_{2}\xi_{3}(3-2c_{1}\xi_{0})}&
\frac{-2(c_{1}\xi_{0}-1)}{3-2c_{1}\xi_{0}}+\frac{\xi_{0}^{k-1}(c_{1}-\frac{c_{3}}{2})^{k-1}}{3-2c_{1}\xi_{0}}
\end{pmatrix}\begin{pmatrix}
 v_{1}\\[2mm]
 v_{1}^{'}
\end{pmatrix}\\[2mm]
&=&\begin{pmatrix}
 \eta_{1}+\widehat{\eta_{1}} \xi_{0}^{k-1}(c_{1}-\frac{c_{3}}{2})^{k-1} & \eta_{2}- \eta_{2}\xi_{0}^{k-1}(c_{1}-\frac{c_{3}}{2})^{k-1}\\[2mm]
 \widehat{\eta_{2}}-\widehat{\eta_{2}} \xi_{0}^{k-1}(c_{1}-\frac{c_{3}}{2})^{k-1}&
\widehat{\eta_{1}} + \eta_{1}\xi_{0}^{k-1}(c_{1}-\frac{c_{3}}{2})^{k-1}
\end{pmatrix}\begin{pmatrix}
 v_{1}\\[2mm]
 v_{1}^{'}\\
\end{pmatrix}
\end{eqnarray*}
where
 $$
\eta_{1}=\frac{1}{3-2c_{1}\xi_{0}},\ \
 \widehat{\eta_{1}}=\frac{2(1-c_{1}\xi_{0})}{3-2c_{1}\xi_{0}}
$$$$
\eta_{2}=-\frac{2c_{2}\xi_{3}}{3-2c_{1}\xi_{0}}, \ \
 \widehat{\eta_{2}}=\frac{c_{1}\xi_{0}-1}{c_{2}\xi_{3}(3-2c_{1}\xi_{0})}
$$

Hence,

$$\left\{\begin{array}{ll}
 v_{k}=\left(\eta_{1}+\widehat{\eta_{1}} \xi_{0}^{k-1}(c_{1}-\frac{c_{3}}{2})^{k-1} \right)v_{1}+ \left( \eta_{2}- \eta_{2}\xi_{0}^{k-1}(c_{1}-\frac{c_{3}}{2})^{k-1} \right)v_{1}^{'}\\[2mm]
 v_{k}^{'}=\left(\widehat{\eta_{2}}-\widehat{\eta_{2}} \xi_{0}^{k-1}(c_{1}-\frac{c_{3}}{2})^{k-1}  \right)v_{1}+ \left( \widehat{\eta_{1}} + \eta_{1}\xi_{0}^{k-1}(c_{1}-\frac{c_{3}}{2})^{k-1} \right)v_{1}^{'}\\
\end{array}\right.$$

So, one finds

$$\ffi^{(n)}_{w_{0},\bh}(a\otimes \id \cdots\otimes \id \otimes f)
=v_{n-N_{0}+1}\tr\left(\omega_{0}\ce_{0}\circ \cdots\circ \ce_{N_{0}}\bigg( a\otimes \id^{(x_{W_{N_{0}+1}})} \otimes \bigotimes\limits_{x\in W_{N_{0}+1}\setminus\{x_{W_{N_{0}+1}}^{(1)}\}} h_{x} \bigg)\right)$$
$$+ v_{n-N_{0}+1}^{'}\tr\left(\omega_{0}\ce_{0}\circ \cdots\circ \ce_{N_{0}}\bigg( a\otimes \sigma_{z}^{(x_{W_{N_{0}+1}})} \otimes \bigotimes\limits_{x\in W_{N_{0}+1}\setminus\{x_{W_{N_{0}+1}}^{(1)}\}} h_{x} \bigg)\right)$$
$$=\left(\eta_{1}+\widehat{\eta_{1}} \xi_{0}^{n-N_{0}}(c_{1}-\frac{c_{3}}{2})^{n-N_{0}} \right)v_{1}\tr\left(\omega_{0}\ce_{0}\circ\cdots\circ \ce_{N_{0}}\bigg( a\otimes \id^{(x_{W_{N_{0}+1}})} \otimes \bigotimes\limits_{x\in W_{N_{0}+1}\setminus\{x_{W_{N_{0}+1}}^{(1)}\}} h_{x} \bigg)\right)$$
$$+ \left( \eta_{2}- \eta_{2}\xi_{0}^{n-N_{0}}(c_{1}-\frac{c_{3}}{2})^{n-N_{0}} \right)v_{1}^{'}\tr\left(\omega_{0}\ce_{0}\circ \cdots\circ \ce_{N_{0}}\bigg( a\otimes \id^{(x_{W_{N_{0}+1}})} \otimes \bigotimes\limits_{x\in W_{N_{0}+1}\setminus\{x_{W_{N_{0}+1}}^{(1)}\}} h_{x} \bigg)\right)$$
$$+ \left(\widehat{\eta_{2}}-\widehat{\eta_{2}} \xi_{0}^{n-N_{0}}(c_{1}-\frac{c_{3}}{2})^{k-1}  \right)v_{1}\tr\left(\omega_{0}\ce_{0}\circ \cdots\circ \ce_{N_{0}}\bigg( a\otimes \sigma_{z}^{(x_{W_{N_{0}+1}})} \otimes \bigotimes\limits_{x\in W_{N_{0}+1}\setminus\{x_{W_{N_{0}+1}}^{(1)}\}} h_{x}\bigg )\right)$$
$$+ \left( \widehat{\eta_{1}} + \eta_{1}\xi_{0}^{n-N_{0}}(c_{1}-\frac{c_{3}}{2})^{n-N_{0}} \right)v_{1}^{'}\tr\left(\omega_{0}\ce_{0}\circ \cdots\circ \ce_{N_{0}}\bigg( a\otimes \sigma_{z}^{(x_{W_{N_{0}+1}})} \otimes \bigotimes\limits_{x\in W_{N_{0}+1}\setminus\{x_{W_{N_{0}+1}}^{(1)}\}} h_{x} \bigg)\right)
$$
 One can see that
$\xi_{0}^{n-N_{0}}\rightarrow 0$, as  $n\rightarrow \infty$, which implies

$$\lim_{n\rightarrow \infty}\ffi_{w_{0},\bh}(a \otimes \id \otimes \cdots\otimes \id \otimes f_n)
=(\eta_{1}v_{1}+ \eta_{2}v_{1}^{'})\tr\bigg(\omega_{0}\ce_{0}\circ \cdots\circ \ce_{N_{0}}( a\otimes \id^{(x_{W_{N_{0}+1}})} \otimes
$$
$$\bigotimes\limits_{x\in W_{N_{0}+1}\setminus\{x_{W_{N_{0}+1}}^{(1)}\}} h_{x} )\bigg)$$
$$+(\widehat{\eta_{2}}v_{1}+\widehat{\eta_{1}}v_{1}^{'})\tr\left(\omega_{0}\ce_{0}\circ \cdots\circ \ce_{N_{0}}( a\otimes \sigma_{z}^{(x_{W_{N_{0}+1}})} \otimes \bigotimes\limits_{x\in W_{N_{0}+1}\setminus\{x_{W_{N_{0}+1}}^{(1)}\}} h_{x} )\right)$$
where
$$\left\{\begin{array}{ll}
 \eta_{1}v_{1}+ \eta_{2}v_{1}^{'}=\frac{\xi_{0}^{2}}{6-4C_{1}\xi_{0}}[(4C_{3}-2C_{1})\tr(f)-\sqrt{\Delta(\theta)}(4C_{2}+C_{3})\tr(\sigma_{z}f)]\\[2mm]
 \widehat{\eta_{2}}v_{1}+\widehat{\eta_{1}}v_{1}^{'}=-c_{3}\xi_{3}(\eta_{1}v_{1}+ \eta_{2}v_{1}^{'})\\
\end{array}\right.$$

On other hand we have $$\ffi_{w_{0},\bh}(f)=C_{3}(\eta_{1}v_{1}+ \eta_{2}v_{1}^{'}),$$

Hence, one gets

\begin{eqnarray*}\lim_{n\rightarrow \infty}\ffi_{w_{0},\bh}(a\otimes\cdots\otimes f_n)&=&\ffi_{w_{0},\bh}(f)\tr\bigg(\omega_{0}\ce_{0}\circ \cdots\circ \ce_{N_{0}}( a\otimes h^{(x_{W_{N_{0}+1}})}\otimes \bigotimes\limits_{x\in W_{N_{0}+1}\setminus\{x_{W_{N_{0}+1}}^{(1)}\}} h_{x}\bigg)\\
&=&\ffi_{w_{0},\bh}(f)\tr\left(\omega_{0}\ce_{0}\circ \cdots\circ \hat\ce_{N_{0}}( a)\right)\\
&=&\ffi_{w_{0},\bh}(f)\ffi_{w_{0},\bh}(a).\end{eqnarray*} This completes the prove.
\end{proof}

Now we are ready to prove the clustering property.

\begin{theorem}\label{CPr}  The states $\ffi_1$ and $\ffi_2$ satisfy the clustering property.
\end{theorem}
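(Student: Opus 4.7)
The plan is to deduce the clustering property from Lemma \ref{7.3} by density, bilinearity and an iteration of the matrix-recursion technique developed in its proof. First, by norm density of the local subalgebras in the quasi-local algebra $\mathcal{B}_L$ and boundedness of each state $\ffi_k$, a standard $\varepsilon/3$-argument reduces the verification of \eqref{clus} to the case $a \in \mathcal{B}_{\Lambda_{N_0}}$ and $f \in \mathcal{B}_{\Lambda_{M_0}}$ for fixed $N_0, M_0 \in \mathbb{N}$. By bilinearity in $(a, f)$, one may further assume that $f$ is a tensor-product monomial in the Pauli basis $\{\id, \sigma_x, \sigma_y, \sigma_z\}$ supported on the sites of $\Lambda_{M_0}$.

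Next, for a translation $\tau_g$ with $|g| = N$ large enough that $\tau_g(\Lambda_{M_0}) \cap \Lambda_{N_0} = \emptyset$, choose $n = N + M_0$ so that $\tau_g(\Lambda_{M_0}) \subset \Lambda_{[N, n]}$. Then, via \eqref{ffi-ff},
\begin{equation*}
\ffi_k(a\,\tau_g(f)) = \ffi_k^{(n)}\bigl(a \otimes \id_{\Lambda_n \setminus (\Lambda_{N_0} \cup \tau_g(\Lambda_{M_0}))} \otimes \tau_g(f)\bigr).
\end{equation*}
I would run the quasi-conditional expectation chain $\ce_0 \circ \cdots \circ \ce_{n-1} \circ \hat\ce_n$ from the leaves inward, exactly as in the proof of Lemma \ref{7.3}. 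During the initial $M_0$ steps that cover $\tau_g(\Lambda_{M_0})$, the partial trace of $A_{u,(u,1),(u,2)}$ against $\tau_g(f)$ kills every factor containing $\sigma_x$ or $\sigma_y$ at a traced-out vertex (because $\tr(\sigma_x) = \tr(\sigma_y) = 0$) and projects what remains onto the two-dimensional $\{\id, \sigma_z\}$-subspace at each parent vertex, producing a scalar prefactor which, by the same argument ending the proof of Lemma \ref{7.3}, is identified in the limit $N \to \infty$ with $\ffi_k(f)$.

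From level $N$ down to level $N_0 + 1$, the propagation is then governed by the $2 \times 2$ matrix $A$ of \eqref{7.2}, whose non-unit eigenvalue $(C_1 - \tfrac{C_3}{2})\xi_0$ has modulus strictly less than one in the regime under consideration; consequently the off-diagonal contribution decays geometrically in $N - N_0 - M_0$ and the leading term factorizes as $\ffi_k(a)\ffi_k(f)$, yielding \eqref{clus}.

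The main obstacle is verifying that this Pauli-monomial reduction does collapse onto the $\{\id, \sigma_z\}$-subspace irrespective of the placement of $\tau_g(\Lambda_{M_0})$ within the relevant levels of the tree, and that the induced scalar prefactors aggregate correctly to $\ffi_k(f)$ in the limit. This follows from the permutation symmetry of $A_{u,(u,1),(u,2)}$ in its last two factors (see \eqref{Ax}), from the translation invariance of the boundary data $\bh$, and from an inductive extension of the Pauli-algebra computation already carried out in the proof of Lemma \ref{7.3}. Once these ingredients are in place, the remaining bookkeeping — telescoping the $M_0$ initial reduction steps and the $N - N_0 - M_0$ matrix-propagation steps — is routine, and the clustering property for both $\ffi_1$ and $\ffi_2$ follows.
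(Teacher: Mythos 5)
Your proposal follows essentially the same route as the paper: after the density/tensor-monomial reduction, you collapse the translated observable $\tau_g(f)$ through the quasi-conditional expectations covering its support onto a single-vertex operator in the $\{\id,\sigma_z\}$-span (the paper's $T_0^{(g_n\circ x_0)}$), and then invoke the matrix-$A$ propagation of Lemma \ref{7.3} to obtain the factorization from the decay of the non-unit eigenvalue. The only cosmetic difference is that what you call a ``scalar prefactor'' after the initial reduction steps is really still an operator at the parent vertex, with the scalar $\ffi_k(f)$ emerging only after the subsequent matrix propagation, exactly as in the paper.
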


\begin{proof}  Thanks to the density argument, without lost of generality, we may assume  $a,f \in \mathcal{B}_{loc}$. This means that
there are $N_0,m_{0} \in \bn$ such that  $a\in \mathcal{B}_{\L_{N_{0}}}$, $f \in \mathcal{B}_{\L_{m_{0}}}$. Moreover, $f$ can be write in the following form $$f=\bigotimes\limits_{x\in\L_{m_{0}}}f^{(x)}.$$
 By symmetry of calculations, it is enough to prove the result for  $\big(\omega_{0},\ \ \{h^{(u)}=h, \ u\in L\}\big)$.\\[2mm]
In what follows, we assume that $g\in W_n$. Therefore, we put $g_n:=g$. Then one has
\begin{eqnarray*}
\tau_{g_{n}}(f)&=& \bigotimes\limits_{x\in\L_{m_{0}}}f^{( g_{n} \circ x)}\\
&=& f^{(g_{n})}\otimes f^{(g_{n},W_{1})}\otimes f^{(g_{n},W_{2})}\otimes...\otimes f^{(g_{n},W_{m_{0}})},
\end{eqnarray*}
where  $$f^{(g_{n},W_{k})}= \bigotimes\limits_{x\in W_{k}}f^{(g_{n}\circ x)} \ \ \text{and} \ \ \{g_{n},W_{k}\}=\{(g_{n} \circ x),\ \ x \in W_{k}\}.$$\\

We can see $\tau_{g_{n}}(f) $ as an element of $B_{\L_{n+m_{0}}}$, i.e.

\begin{eqnarray*}
\tau_{g_{n}}(f)&=& \id_{\L_{n-1}}\otimes( f^{(g_{n})}\otimes \id_{W_{n}\setminus\{g_{n}\}})\otimes(f^{(g_{n},W_{1})}\otimes \id_{W_{n+1}\setminus\{g_{n},W_{1}\}})\otimes\\[2mm]
&&...\otimes(f^{(g_{n},W_{m_{0}})}\otimes \id_{W_{n+m_{0}}\setminus\{g_{n},W_{m_{0}}\}}).
\end{eqnarray*}

For the sake of simplicity, let us denote
\begin{eqnarray*}
\tau_{g_{n}}(f)&=& \bigotimes\limits_{x\in\L_{n+m_{0}}}f_{1}^{(x)}.
\end{eqnarray*}

From   \eqref{ffi-ff} and \eqref{E-n1} it follows that
 $$
  \ffi_{w_{0},\bh}(a\otimes \id \otimes \cdots\otimes \id \otimes \tau_{g_{n}}(f))=\ffi^{(n+m_{0})}_{w_{0},\bh}(a \otimes \id \cdots\otimes \id \otimes \tau_{g_{n}}(f))$$
$$=\tr(\omega_{0}\ce_{0}\circ\cdots\circ \ce_{N_{0}}( a\otimes \ce_{N_{0}+1}( \id \otimes \cdots\otimes \ce_{n+m_{0}-1}( \id \otimes\hat\ce_{n+m_{0}}(\tau_{g_{n}}(f)))...)).$$\\

 Let us calculate $\hat\ce_{n+m_{0}}(\tau_{g_{n}}(f))$. Indeed, from \eqref{E-n1} one gets
\begin{eqnarray*}\hat\ce_{n+m_{0}}(\tau_{g_{n}}(f)\otimes h_{W_{n+m_{0}+1}})&=& \tr_{n+m_{0}]}(K_{[n+m_{0},n+m_{0}+1]}\tau_{g_{n}}(f)\otimes h_{W_{n+m_{0}+1}}K_{[n+m_{0},n+m_{0}+1]}^*)\\
 &=& \bigotimes\limits_{x\in\L_{n+m_{0}-1}}f_{1}^{(x)}\bigotimes\limits_{x\in W_{n+m_{0}}} \tr_{x]}(  A_{x \vee S(x)}
  f_{1}^{(x)}\otimes h^{(S(x))} A^{*}_{x \vee S(x)}) \\
  &=& \bigotimes\limits_{x\in\L_{n+m_{0}-1}}f_{1}^{(x)}\bigotimes\limits_{x\in W_{m_{0}}} \tr_{g_{n} \circ x]}(  A_{g_{n} \circ x \vee S(g_{n} \circ x)}
  f^{(g_{n} \circ x)}\otimes h^{(S(g_{n} \circ x))} A^{*}_{g_{n} \circ x \vee S(g_{n} \circ x)}) \\
  &&\otimes \bigotimes\limits_{x\in W_{n+m_{0}}\setminus\{g_{n},W_{m_{0}}\}} \tr_{x]}(  A_{x \vee S(x)} h^{(x)}
   A^{*}_{x \vee S(x)})  \\
  &=& \bigotimes\limits_{x\in\L_{n+m_{0}-1}}f_{1}^{(x)}\bigotimes\limits_{x\in W_{m_{0}}} T_{m_{0}}^{(g_{n}\circ x)}\otimes\bigotimes\limits_{x\in W_{n+m_{0}}\setminus\{g_{n},W_{m_{0}}\}} h^{(x)}
 \end{eqnarray*}
where $$T_{m_{0}}^{(g_{n}\circ x)}=\tr_{g_{n} \circ x]}(  A_{(g_{n} \circ x) \vee S(g_{n} \circ x)}
  f^{(g_{n} \circ x)}\otimes h^{(S(g_{n} \circ x))} A^{*}_{(g_{n} \circ x )\vee S(g_{n} \circ x)}).$$

Hence,

\begin{eqnarray*}
\ce_{n+m_{0}-1}(\id\otimes \hat\ce_{n+m_{0}}(\tau_{g_{n}}(f)))&=& \bigotimes\limits_{x\in\L_{n+m_{0}-2}}f_{1}^{(x)} \tr_{n+m_{0}-1]}(K_{[n+m_{0}-1,n+m_{0}]} \bigotimes\limits_{x\in W_{n+m_{0}-1}}f_{1}^{(x)}\bigotimes\limits_{x\in W_{m_{0}}} T_{m_{0}}^{(g_{n}\circ x)}\\[2mm]&&\otimes\bigotimes\limits_{x\in W_{n+m_{0}}\setminus\{g_{n},W_{m_{0}}\}} h^{(x)}K_{[n+m_{0}-1,n+m_{0}]}^*)\\[2mm]
&=&  \bigotimes\limits_{x\in\L_{n+m_{0}-2}}f_{1}^{(x)} \bigotimes\limits_{x\in W_{m_{0}-1}}\tr_{g_{n} \circ x]}\bigg(  A_{(g_{n} \circ x )\vee S(g_{n} \circ x)} f^{(g_{n} \circ x)}\\[2mm]
&&\otimes T_{m_{0}}^{(S(g_{n} \circ x))}\otimes A^{*}_{(g_{n} \circ x) \vee S(g_{n} \circ x)}\bigg)\\[2mm]
&&\otimes\bigotimes\limits_{x\in W_{n+m_{0}-1}\setminus\{g_{n},W_{m_{0}-1}\}} \tr_{x]}(  A_{x \vee S(x)}h^{(x)} A^{*}_{x \vee S(x)}) \\[2mm]
&=&\bigotimes\limits_{x\in\L_{n+m_{0}-2}}f_{1}^{(x)}\bigotimes\limits_{x\in W_{m_{0}-1}} T_{m_{0}-1}^{(g_{n}\circ x)}\otimes\bigotimes\limits_{x\in W_{n+m_{0}-1}\setminus\{g_{n},W_{m_{0}}\}} h^{(x)}
\end{eqnarray*}
where $$T_{m_{0}-1}^{(g_{n}\circ x)}=\tr_{g_{n} \circ x]}\left(  A_{g_{n} \circ x \vee S(g_{n} \circ x)} f^{(g_{n} \circ x)}\otimes T_{m_{0}}^{(S(g_{n} \circ x))}\otimes A^{*}_{g_{n} \circ x \vee S(g_{n} \circ x)}\right).$$

By iteration, we obtain
\begin{eqnarray*}\ce_{n}(\id \otimes\ce_{n+1} (\id \otimes...\otimes \ce_{n+m_{0}-1}( \id \otimes\hat\ce_{n+m_{0}}(\tau_{g_{n}}(f)))...)) &=&\bigotimes\limits_{x\in W_{0}} T_{0}^{(g_{n}\circ x)}\otimes\bigotimes\limits_{x\in W_{n}\setminus\{g_{n},W_{0}\}} h^{(x)}\\
 &=&T_{0}^{(g_{n}\circ x_{0})}\otimes\bigotimes\limits_{x\in W_{n}\setminus\{g_{n}\circ x_{0}\}} h^{(x)},
 \end{eqnarray*}

which yields
\begin{eqnarray*}
  \ffi_{w_{0},\bh}(a_{N_{0}} \otimes \id \otimes \id ...\otimes \id \otimes \tau_{g_{n}}(f))&=&\tr(\omega_{0}\ce_{0}\circ ...\circ \ce_{N_{0}}( a_{N_{0}}\otimes \ce_{N_{0}+1}( \id \\[2mm]
  &&\otimes \ce_{N_{0}+2}( \id \otimes...\otimes \ce_{n-1}(T_{0}^{(g_{n}\circ x_{0})}\otimes\bigotimes\limits_{x\in W_{n}\setminus\{g_{n}\circ x_{0}\}} h^{(x)}))).
  \end{eqnarray*}

Then Lemma \ref{7.3} implies

\begin{eqnarray*}\lim_{n \rightarrow\infty}\ffi_{w_{0},\bh}(a_{N_{0}} \otimes \id \otimes \id ...\otimes \id \otimes \tau_{g_{n}}(f)) &=&\ffi_{w_{0},\bh}(a_{N_{0}})\tr(\omega_{0}\ce_{0}\circ ... \circ \ce_{n-1}(T_{0}^{(g_{n}\circ x_{0})}\otimes\bigotimes\limits_{x\in W_{n}\setminus\{g_{n}\circ x_{0}\}} h^{(x)})))\\
&=&\ffi_{w_{0},\bh}(a_{N_{0}})\ffi_{w_{0},\bh}(\tau_{g_{n}}(f)).
\end{eqnarray*}

This completes the proof.
\end{proof}

\subsection{Non quasi equivalence of $\varphi_1$ and
$\varphi_2$}

In this subsection we are going to prove that the states
$\varphi_1$ and $\varphi_2$ are not quasi equivalent.
 To establish the
non-quasi equivalence, we are going to use the following result
(see \cite[Corollary 2.6.11]{BR}).
\begin{theorem}\label{br-q}
Let $\varphi_1,$ $\varphi_2$ be two factor states on a quasi-local
algebra $\ga=\cup_{\Lambda}\ga_\Lambda$. The states $\varphi_1,$
$\varphi_2$ are  quasi-equivalent if and only if for any given
$\varepsilon>0$ there exists a finite volume $\Lambda\subset L$
such that $\|\varphi_1(a)-\varphi_2(a)\|<\varepsilon \|a\|$ for
all $a\in B_{\Lambda^{'}}$ with
$\Lambda^{'}\cap\Lambda=\emptyset.$
\end{theorem}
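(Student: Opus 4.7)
The plan is to follow Bratteli--Robinson's derivation of the criterion, combining the Glimm--Kadison characterization of quasi-equivalent factor states (quasi-equivalence iff $\|\omega_1-\omega_2\|<2$) with the folium description of normal states, and exploiting the tensor decomposition $\ga\simeq\ga_\Lambda\otimes\ga_{\Lambda^c}$ together with the local commutativity $[\ga_\Lambda,\ga_{\Lambda^c}]=0$ available in the spin setting of this paper.

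For the sufficiency $(\Leftarrow)$, the uniform estimate yields, for each $\varepsilon<2$, a finite volume $\Lambda_\varepsilon$ with $\bigl\|(\varphi_1-\varphi_2)\!\lceil_{\ga_{\Lambda_\varepsilon^c}}\bigr\|\le\varepsilon$. One first verifies that the restrictions $\varphi_i\!\lceil_{\ga_{\Lambda_\varepsilon^c}}$ remain factor states on the quasi-local tail $\ga_{\Lambda_\varepsilon^c}$, using local commutativity and the fact that the finite-dimensional type-I subfactor $\pi_{\varphi_i}(\ga_{\Lambda_\varepsilon})$ has trivial relative commutant in the hyperfinite factor $\pi_{\varphi_i}(\ga)''$. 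Glimm--Kadison then forces the tail restrictions to be quasi-equivalent on $\ga_{\Lambda_\varepsilon^c}$; tensor amplification by the finite-dimensional type-I algebra $\ga_{\Lambda_\varepsilon}$ preserves the quasi-equivalence class of the GNS representations, so the full states $\varphi_1,\varphi_2$ are quasi-equivalent on $\ga$.

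For the necessity $(\Rightarrow)$, the intertwining $*$-isomorphism $\alpha:\pi_{\varphi_1}(\ga)''\to\pi_{\varphi_2}(\ga)''$ produced by quasi-equivalence identifies both states as normal functionals on a common factor $M$. I would approximate $\varphi_2$ in the predual norm by a vector state $\omega_\xi$ with $\xi=\pi_{\varphi_1}(b)\Omega_{\varphi_1}$, $b\in\ga_\Lambda$ normalized so $\varphi_1(b^*b)=1$, for $\Lambda$ large enough; such $b$ exist by density of $\pi_{\varphi_1}\!\bigl(\cup_\Lambda\ga_\Lambda\bigr)\Omega_{\varphi_1}$ in the GNS Hilbert space. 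Local commutativity then gives $\omega_\xi(a)=\varphi_1(b^*ab)$ for every $a\in\ga_{\Lambda'}$ with $\Lambda\cap\Lambda'=\emptyset$, and the difference $|\varphi_1(b^*ab)-\varphi_1(a)|$ is then shown to tend to $0$ uniformly in $\|a\|\le1$ as $\Lambda$ grows.

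The main obstacle is securing this final uniformity in the necessity direction: producing a single vector $\xi\in\pi_{\varphi_1}(\ga_\Lambda)\Omega_{\varphi_1}$ whose induced state agrees with $\varphi_2$ uniformly on all tail observables, rather than merely on a prescribed finite set. This is resolved by Bratteli--Robinson's folium-convergence argument: the folia $F\!\bigl(\varphi_i\!\lceil_{\ga_{\Lambda^c}}\bigr)$ form a decreasing net of norm-closed convex subsets of the predual, and their stabilization as $\Lambda\to L$ is precisely equivalent to quasi-equivalence of $\varphi_1$ and $\varphi_2$ on $\ga$; this stabilization converts the pointwise vector-state approximation into the required uniform tail bound, completing the proof.
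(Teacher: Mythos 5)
First, a point of reference: the paper does not prove Theorem \ref{br-q} at all --- it is quoted verbatim from \cite[Corollary 2.6.11]{BR} and used as a black box in the proof of Theorem \ref{6.9}. So there is no internal proof to compare against, and your proposal must be judged as a reconstruction of the Bratteli--Robinson argument. Its skeleton is the right one: factor states are either quasi-equivalent or disjoint, disjointness is detected by $\|\varphi_1-\varphi_2\|=2$ (Glimm--Kadison/Powers), and the finite-dimensionality of $\ga_\Lambda$ lets one pass between the full algebra and the tail. One repair in the sufficiency direction: the reason the tail restrictions remain factor states is not a ``trivial relative commutant in the hyperfinite factor'' (hyperfiniteness is neither assumed nor relevant); it is that $\pi_{\varphi_i}(\ga_{\Lambda})''$ is a finite type I factor, so $\pi_{\varphi_i}(\ga)''$ splits as a tensor product over it and the center of $\pi_{\varphi_i}(\ga_{\Lambda^c})''$ coincides with that of $\pi_{\varphi_i}(\ga)''$; the same splitting is what justifies your ``tensor amplification'' step.

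The genuine gap is in the necessity direction, and you have correctly located it but not closed it. The ``folium-convergence argument'' you invoke --- stabilization of a decreasing net of folia being ``precisely equivalent to quasi-equivalence'' --- is not the mechanism in Bratteli--Robinson, and as written it is essentially circular: you are asserting that quasi-equivalence is equivalent to a uniform tail estimate, which is the theorem you are trying to prove. The actual tool is the characterization of factor states of a quasi-local algebra by the \emph{uniform} cluster property (\cite[Theorem 2.6.10]{BR}): $\varphi_1$ is a factor state iff for every $A\in\ga$ and $\varepsilon>0$ there is a finite $\Lambda$ with $|\varphi_1(Aa)-\varphi_1(A)\varphi_1(a)|\le\varepsilon\|a\|$ for all $a$ supported outside $\Lambda$. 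With this in hand your computation closes: write $\varphi_2=\mathrm{Tr}(\rho\,\pi_{\varphi_1}(\cdot))$, approximate $\rho$ in trace norm by a finite convex combination of vector states with vectors $\pi_{\varphi_1}(b_i)\Omega_{\varphi_1}$, $b_i$ local, use local commutativity to rewrite $b_i^*ab_i=b_i^*b_i\,a$ for tail $a$, and apply the uniform cluster property to each $A=b_i^*b_i$ to obtain $|\varphi_2(a)-\varphi_1(a)|\le\varepsilon\|a\|$ uniformly on the tail. Note that this uniform clustering is a strictly stronger input than the pointwise clustering the paper establishes for $\varphi_1,\varphi_2$ in Theorem \ref{CPr}; a single approximating vector $\xi$, as in your sketch, will not suffice without it.
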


Now due to Theorem \ref{CPr} the states $\varphi_1$ and $\varphi_2$ have clustering property, and hence they are factor states.
Let us define an element of $\mathcal{B}_{\Lambda_n}$ as follows:
$$
E_{\Lambda_n}:=e_{11}^{x_{W_n}^{(1)}}\otimes\bigg(\bigotimes_{y\in
\Lambda_n\setminus \{x_{W_n}^{(1)}\}}\id^{y}\bigg),
$$
where $x_{W_n}^{(1)}$ is defined in \eqref{xw}.
Now we are going to calculate $\varphi_1(E_{\Lambda_n})$ and
$\varphi_2(E_{\Lambda_n})$, respectively.
First consider the state $\varphi_1$, then we know that this state
is defined by  $\omega_0=\frac{1}{\xi_0}\id$ and
$h^{x}=h=\xi_0\id+\xi_3\sigma_{z}$. Define two  elements of
$\mathcal{B}_{W_n}$ by
$$\hat{\bh}_n:=\id^{x_{W_n}^{(1)}}\otimes\bigotimes_{x\in W_n\setminus \{x_{W_n}^{(1)}\}}h^{(x)}$$
$$\check{\bh}_{n}:=\sigma_{z}^{x_{W_n}^{(1)}}\otimes\bigotimes_{x\in W_n\setminus \{x_{W_n}^{(1)}\}}h^{(x)}$$
\begin{lemma}\label{f-p-11}
Let
$$\hat{\psi}_n:=\Tr_{n-1]}\big[\omega_{0}K_{[0,1]}\cdots K_{[n-1,n]}\hat{\bh}_{n}K_{[n-1,n]}^{*}\cdots K_{[0,1]}^{*}\big]$$
$$\check{\psi}_n:=\Tr_{n-1]}\big[\omega_{0}K_{[0,1]}\cdots K_{[n-1,n]}\check{\bh}_{n}K_{[n-1,n]}^{*}\cdots K_{[0,1]}^{*}\big]$$
Then there are two pairs of reals
$(\hat{\rho}_{1},\hat{\rho}_{2})$ and
$(\check{\rho}_{1},\check{\rho}_{2})$ depending on $\theta$ such
that
$$\left\{
  \begin{array}{ll}
    \hat{\psi}_n=\hat{\rho}_{1}+\hat{\rho}_{2}(\frac{C_1}{C_3}-1)^{n}, \\
\\
    \check{\psi}_n=\check{\rho}_{1}+\check{\rho}_{2}(\frac{C_1}{C_3}-1)^{n} \\
  \end{array}
\right.$$
\end{lemma}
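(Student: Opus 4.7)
The plan is to reduce the computation to a $2\times2$ linear recursion over the levels of the tree, parametrized by the operator placed at the distinguished site $x_{W_m}^{(1)}$. For every $m\ge1$, define $\hat\bh_m,\check\bh_m\in\cb_{W_m}$ exactly as in the statement ($\id$, respectively $\sigma_z$, at $x_{W_m}^{(1)}$ and $h$ at every other site of $W_m$). The first step I would carry out is to establish the recursion
\begin{align*}
\tr_{W_m}\!\bigl(K_{[m-1,m]}\,\hat\bh_m\,K_{[m-1,m]}^{*}\bigr)&=M_{11}\hat\bh_{m-1}+M_{21}\check\bh_{m-1},\\
\tr_{W_m}\!\bigl(K_{[m-1,m]}\,\check\bh_m\,K_{[m-1,m]}^{*}\bigr)&=M_{12}\hat\bh_{m-1}+M_{22}\check\bh_{m-1},
\end{align*}
for a scalar $2\times2$ matrix $M=(M_{ij})$ independent of $m$. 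The justification is purely local: at each $u\in W_{m-1}\setminus\{x_{W_{m-1}}^{(1)}\}$ both children carry $h$, and the compatibility equation \eqref{eq2} returns exactly $h^u$ at $u$; only at the distinguished vertex $u=x_{W_{m-1}}^{(1)}$ does one child carry $\id$ or $\sigma_z$ while the sibling carries $h$, and the explicit form \eqref{Ax} of $A_{u,(u,1),(u,2)}$ shows that the corresponding partial trace lies in $\mathrm{span}\{\id^u,\sigma_z^u\}$ (the $\sigma_x,\sigma_y$ pieces in $A$ have no $\id^u$ or $\sigma_z^u$ partner and cancel after tracing the sibling $h$).

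Two observations then diagonalize $M$. First, $(\xi_0,\xi_3)^{\top}$ is a right eigenvector of $M$ with eigenvalue $1$: indeed $\xi_0\hat\bh_m+\xi_3\check\bh_m=\bh_m$, and $\bh_m$ is preserved by the transfer step thanks again to \eqref{eq2}. Second, a direct calculation of the trace of $M$, obtained by reading off the $\id^u$ coefficient of $\tr_{u]}\bigl(A(\id^u\otimes\id\otimes h)A^{*}\bigr)$ and the $\sigma_z^u$ coefficient of $\tr_{u]}\bigl(A(\id^u\otimes\sigma_z\otimes h)A^{*}\bigr)$, yields $\mathrm{tr}\,M=C_1/C_3$. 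Hence the second eigenvalue of $M$ is $C_1/C_3-1$, and $M=P\,\mathrm{diag}(1,\,C_1/C_3-1)\,P^{-1}$ for an explicit change of basis $P$ built from $(\xi_0,\xi_3)$ and the second eigenvector.

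Iterating the recursion $n$ times starting from $\hat\bh_n$ (respectively $\check\bh_n$) produces an element of $\cb_{W_0}$ whose $\id^{x^{0}}$ and $\sigma_z^{x^{0}}$ coefficients are each of the form $c_{1}+c_{2}(C_1/C_3-1)^{n}$. Pairing with $\omega_0=\xi_0^{-1}\id$ at the root and taking the trace then extracts a scalar of the advertised shape $\hat\rho_1+\hat\rho_2(C_1/C_3-1)^{n}$ (respectively $\check\rho_1+\check\rho_2(C_1/C_3-1)^{n}$), with the constants $\hat\rho_i,\check\rho_i$ depending only on $\theta$ through $\xi_0,\xi_3,C_1,C_2,C_3$ and the matrix $P$. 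The principal obstacle is the identification $\mathrm{tr}\,M=C_1/C_3$: this is a mechanical but genuinely tedious expansion using the Pauli algebra together with the identities tying $\gamma_i,\delta_1$ to $C_1,C_2,C_3$, which are the same identities implicit in the derivation of \eqref{eqder1}.
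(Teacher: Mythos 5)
Your strategy coincides with the paper's: both reduce the computation to a two-dimensional linear recursion for $(\hat{\psi}_n,\check{\psi}_n)$ governed by a constant transfer matrix, and both diagonalize that matrix using the fact that $1$ is an eigenvalue. Your observation that $(\xi_0,\xi_3)^{\top}$ is the eigenvalue-$1$ eigenvector, because $\xi_0\hat{\bh}_m+\xi_3\check{\bh}_m=\bh_m$ and $\bh_m$ is preserved by the transfer step via \eqref{eq2}, is correct and is actually a cleaner explanation than the paper offers (the paper simply exhibits the diagonalizing matrix $P$ without motivation).

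However, your identification of the second eigenvalue is wrong, and it occurs precisely at the step you flagged as ``mechanical but tedious'' and did not carry out. Using the bilinear extension of \eqref{eqder1}, the two diagonal entries of your $M$ are: the $\id^{u}$-coefficient of $\Tr_{u]}\bigl(A(\id^{u}\otimes\id\otimes h)A^{*}\bigr)$, which equals $C_1\Tr(\id)\Tr(h)=C_1\xi_0=C_1/C_3$, and the $\sigma_z^{u}$-coefficient of $\Tr_{u]}\bigl(A(\id^{u}\otimes\sigma_z\otimes h)A^{*}\bigr)$, which equals $\tfrac{C_3}{2}\Tr(\sigma_z^{2})\Tr(h)=\tfrac{C_3\xi_0}{2}=\tfrac12$. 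Your claim $\mathrm{tr}\,M=C_1/C_3$ would force this second entry to vanish; it does not. Hence $\mathrm{tr}\,M=\tfrac{C_1}{C_3}+\tfrac12$, and since one eigenvalue is $1$ the other is $\tfrac{C_1}{C_3}-\tfrac12$, not $\tfrac{C_1}{C_3}-1$. This is exactly what the paper's own proof obtains (its transfer matrix is $N=\bigl(\begin{smallmatrix}C_1\xi_0 & \frac12 C_3\xi_3\\ C_2\xi_3 & \frac12\end{smallmatrix}\bigr)$) and what is used downstream in Propositions \ref{6.6} and \ref{6.8} and in Theorem \ref{6.9}, where the bound $\bigl|\tfrac{C_1}{C_3}-\tfrac12\bigr|\le\tfrac12$ is needed. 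The exponent base $(\tfrac{C_1}{C_3}-1)$ appearing in the statement of Lemma \ref{f-p-11} is a typo in the paper; by fitting your trace to match the stated exponent you have reproduced the typo rather than the correct value. The remainder of your argument (the local recursion, the eigenvector at $1$, and the pairing with $\omega_0$ at the root) goes through verbatim once the eigenvalue is corrected.
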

\begin{proof} One can see that
\begin{eqnarray*}
\left(
  \begin{array}{ll}
    \hat{\psi}_n \\
\\
    \check{\psi}_n
  \end{array}
\right) &=& \left(
     \begin{array}{c}
       \Tr_{n-1]}\big[\omega_{0}K_{[0,1]}\cdots K_{[n-2,n-1]}\Tr_{n-1]}[K_{[n-1,n]}\hat{\bh}_{n}K_{[n-1,n]}^{*}]K_{[n-2,n-1]}^{*}\cdots
       K_{[0,1]}^{*}\big],  \\
       \\
       \Tr_{n-1]}\big[\omega_{0}K_{[0,1]}\cdots K_{[n-2,n-1]}Tr_{n-1]}[K_{[n-1,n]}\check{\bh}_{n}K_{[n-1,n]}^{*}]K_{[n-2,n-1]}^{*}\cdots
       K_{[0,1]}^{*}\big]
     \end{array}
\right).
\end{eqnarray*}
After small calculations, we find
 $$\left\{
  \begin{array}{ll}
  \Tr_{x]}\left[A_{(x,(x,1),(x,2))}\big(\id^{(x)}\otimes\id^{(x,1)}\otimes
  h^{(x,2)}\big)A_{(x,(x,1),(x,2))}\right]=C_{1}\xi_0\id^{(x)}+\frac{1}{2}C_{3}\xi_3\sigma_{z}^{(x)} \\
   \\
   \Tr_{x]}\big[A_{(x,(x,1),(x,2))}\big(\id^{(x)}\otimes\sigma^{(x,1)}\otimes
   h_{(\xi_{0},\xi_{3})}^{(x,2)}\big)A_{(x,(x,1),(x,2))}\big]=C_2\xi_3\id^{(x)}+\frac{1}{2}\sigma_{z}^{(x)}
  \end{array}
\right.$$
Hence,  one gets
$$\left\{
\begin{array}{ll}
\Tr_{n-1]}K_{[n-1,n]}\hat{\bh}_{n}K_{[n-1,n]}^{*}=C_1\xi_0\hat{h}_{n-1}+\frac{1}{2}C_3\xi_3\check{h}_{n-1},\\
\\
\Tr_{n-1]}K_{[n-1,n]}\check{\bh}_{n}K_{[n-1,n]}^{*}=C_2\xi_3\hat{h}_{n-1}+\frac{1}{2}\check{h}_{n-1}.
\end{array}
\right.$$
Therefore,
\begin{eqnarray*}
\left(
  \begin{array}{ll}
    \hat{\psi}_n \\[2mm]
    \check{\psi}_n
  \end{array}
\right)
&=&
\left(
\begin{array}{c}
C_1\xi_0\hat{\psi}_{n-1}+\frac{1}{2}C_3\xi_3\check{\psi}_{n-1}
\\[2mm]
C_2\xi_3\hat{\psi}_{n-1}+\frac{1}{2}\check{\psi}_{n-1}
\end{array}
\right)\nonumber\\[2mm]
&=&
\left(
   \begin{array}{cc}
   C_1\xi_0 & \        \ \frac{1}{2}C_3\xi_3 \\[2mm]
     C_2\xi_3 &\        \ \frac{1}{2} \\
     \end{array}
     \right)
\left(
\begin{array}{c}
  \hat{\psi}_{n-1} \\[2mm]
  \check{\psi}_{n-1} \\
\end{array}
\right)\nonumber\\
\vdots\nonumber\\
&=&\left(
   \begin{array}{cc}
   C_1\xi_0 & \        \ \frac{1}{2}C_3\xi_3 \\[2mm]
     C_2\xi_3 &\        \ \frac{1}{2} \\
     \end{array}
     \right)^{n}
\left(
\begin{array}{c}
  \hat{\psi}_{0} \\[2mm]
  \check{\psi}_{0} \\
\end{array}
\right),
\end{eqnarray*}
where $$\left\{
\begin{array}{c}
  \hat{\psi}_{0}=\Tr (\omega_{0})=\frac{1}{\xi_0}\\[2mm]
  \check{\psi}_{0}=\Tr (\omega_{0}\sigma_z) = 0 \\
\end{array}
\right.
$$
The matrix
$$N := \left(
   \begin{array}{cc}
   C_1\xi_0 & \        \ \frac{1}{2}C_3\xi_3 \\[2mm]
     C_2\xi_3 &\        \ \frac{1}{2} \\
     \end{array}
     \right)
     $$
     can be written in diagonal form by:
     $$N=P
     \left(
   \begin{array}{cc}
   1  & \        \ 0 \\
     0 &\        \ \frac{C_{1}}{C_3}-\frac{1}{2} \\
     \end{array}
     \right)
     P^{-1}
     $$
     where
     $$
       P=\left(
   \begin{array}{cc}
   \frac{C_3}{2 C_2} & \        \  -\frac{\xi_3}{\xi_0} \\
     \frac{\xi_3}{\xi_0} &\        \ 1 \\
     \end{array}
     \right), \     \  \det(P)=\frac{3 C_{3}-2 C_{1}}{2 C_2}$$
 So,
     \begin{eqnarray*}
     \left(
       \begin{array}{c}
         \hat{\psi}_n \\
               \check{\psi}_n \\
       \end{array}
     \right)
     &=&P \left(
   \begin{array}{cc}
   1  & \        \ 0 \\
     0 &\        \ (\frac{C_{1}}{C_3}-\frac{1}{2} )^{n}\\
     \end{array}
     \right)P^{-1}
     \left(
     \begin{array}{c}
       \frac{1}{\xi_0} \\
       \\
       0
      \end{array}
     \right)\nonumber\\
      &=&
     \left(
  \begin{array}{ll}
    \hat{\rho}_{1}+\hat{\rho}_{2}(\frac{C_1}{C_3}-\frac{1}{2})^{n}\\[2mm]
    \check{\rho}_{1}+\check{\rho}_{2}(\frac{C_1}{C_3}-\frac{1}{2})^{n} \\
  \end{array}
\right).
\end{eqnarray*}
where
\begin{eqnarray}\label{r1}
&& \hat{\rho}_{1}=\frac{C_{3}^{2}}{3C_{3}-2C_{1}},  \ \
\hat{\rho}_{2}=\frac{2 C_{3}(C_3-C_1)}{3C_{3}-2C_{1}},\\[2mm]
&& \check{\rho}_{1}=\frac{2C_2 C_{3}^{2}\xi_{3}}{
3C_3-2C_1},\      \   \check{\rho}_{2}=-\frac{2C_2
C_{3}^{2}\xi_{3}}{ 3C_3-2C_1}.\label{r2}
\end{eqnarray}
This completes the proof.
\end{proof}
\begin{proposition}\label{6.6} For each $n\in\bn$ one has
\begin{eqnarray*}
\varphi_1(E_{\Lambda_n})&=&\frac{1}{2}\left[(\xi_0+\xi_3)(C_1\xi_0 +C_2\xi_3
)\hat{\rho}_{1}+\frac{C_3}{2}(\xi_0+\xi_3)^{2}\check{\rho}_{1}\right]\nonumber\\
&&+\frac{1}{2}\left[(\xi_0+\xi_3)(C_1\xi_0 +C_2\xi_3
)\hat{\rho}_{2}+\frac{1}{2}C_3(\xi_0+\xi_3)^{2}\check{\rho}_{2}\right]\bigg(\frac{C_1}{C_3}-\frac{1}{2}\bigg)^{n-1}
\end{eqnarray*}
\end{proposition}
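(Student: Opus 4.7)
The plan is to reduce the computation to the linear recursion already established in Lemma \ref{f-p-11} by taking a single ``deepest level'' trace step. Since $E_{\Lambda_n}$ is the identity everywhere in $\Lambda_n$ except at $x_{W_n}^{(1)}$, where it equals $e_{11}$, and since $\bh_n=\prod_{x\in W_n}h^{(x)}$ with $h=\xi_{0}\id+\xi_{3}\sigma_{z}$, I first note that $e_{11}$ is an eigenvector of $h$ with eigenvalue $\xi_{0}+\xi_{3}$. Hence $h^{1/2}e_{11}h^{1/2}=(\xi_{0}+\xi_{3})e_{11}$ at the distinguished site, while $h^{1/2}\id h^{1/2}=h$ at all other sites of $W_n$. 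Writing $e_{11}=\tfrac{1}{2}(\id+\sigma_z)$ then gives
$$
\bh_n^{1/2}E_{\Lambda_n}\bh_n^{1/2}
=\frac{\xi_{0}+\xi_{3}}{2}\bigl(\hat{\bh}_n+\check{\bh}_n\bigr),
$$
and hence by linearity of $\Tr$ together with the definitions of $\hat\psi_n,\check\psi_n$,
$$
\varphi_1(E_{\Lambda_n})=\frac{\xi_{0}+\xi_{3}}{2}\bigl(\hat\psi_n+\check\psi_n\bigr).
$$

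Next I would peel off one step of the recursion extracted inside the proof of Lemma \ref{f-p-11}, namely
$$
\hat\psi_n=C_1\xi_{0}\,\hat\psi_{n-1}+\tfrac{1}{2}C_3\xi_{3}\,\check\psi_{n-1},\qquad
\check\psi_n=C_2\xi_{3}\,\hat\psi_{n-1}+\tfrac{1}{2}\check\psi_{n-1},
$$
and sum the two to obtain
$$
\hat\psi_n+\check\psi_n=(C_1\xi_{0}+C_2\xi_{3})\hat\psi_{n-1}+\tfrac{1}{2}(C_3\xi_{3}+1)\check\psi_{n-1}.
$$
The key algebraic simplification is the identity $C_3\xi_{3}+1=C_3(\xi_{0}+\xi_{3})$, which is immediate from $\xi_{0}=1/C_3$. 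Plugging it in,
$$
\varphi_1(E_{\Lambda_n})
=\frac{1}{2}\Bigl[(\xi_{0}+\xi_{3})(C_1\xi_{0}+C_2\xi_{3})\hat\psi_{n-1}
+\tfrac{C_3}{2}(\xi_{0}+\xi_{3})^{2}\check\psi_{n-1}\Bigr].
$$

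Finally, I would substitute the closed-form expressions $\hat\psi_{n-1}=\hat\rho_1+\hat\rho_2\bigl(\tfrac{C_1}{C_3}-\tfrac{1}{2}\bigr)^{n-1}$ and $\check\psi_{n-1}=\check\rho_1+\check\rho_2\bigl(\tfrac{C_1}{C_3}-\tfrac{1}{2}\bigr)^{n-1}$ obtained in Lemma \ref{f-p-11} and regroup the constant part and the exponentially decaying part to arrive at the stated formula. I do not expect a serious obstacle: the proof is essentially a linear-algebraic repackaging of Lemma \ref{f-p-11}. The only non-obvious maneuver is the decision to apply the transfer matrix once before invoking Lemma \ref{f-p-11} --- this is what produces the exponent $n-1$ (rather than $n$) and, in tandem with $C_3\xi_{3}+1=C_3(\xi_{0}+\xi_{3})$, the clean coefficients $(\xi_{0}+\xi_{3})(C_1\xi_{0}+C_2\xi_{3})$ and $\tfrac{C_3}{2}(\xi_{0}+\xi_{3})^{2}$ appearing in the statement.
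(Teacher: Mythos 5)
Your proof is correct and follows essentially the same route as the paper: both reduce $\varphi_1(E_{\Lambda_n})$ to the intermediate identity $\varphi_1(E_{\Lambda_n})=\tfrac{1}{2}\bigl[(\xi_0+\xi_3)(C_1\xi_0+C_2\xi_3)\hat{\psi}_{n-1}+\tfrac{C_3}{2}(\xi_0+\xi_3)^{2}\check{\psi}_{n-1}\bigr]$ and then substitute the closed forms from Lemma \ref{f-p-11}. The only (harmless) difference is organizational: the paper recomputes the one-level partial trace with $e_{11}h$ inserted explicitly, whereas you decompose $e_{11}=\tfrac12(\id+\sigma_z)$ to write $\varphi_1(E_{\Lambda_n})=\tfrac{\xi_0+\xi_3}{2}(\hat\psi_n+\check\psi_n)$ and reuse the already-established recursion together with $C_3\xi_3+1=C_3(\xi_0+\xi_3)$, which yields the same coefficients and the same exponent $n-1$.
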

\begin{proof} From \eqref{F1} we have
\begin{eqnarray*}
\varphi_1(E_{\Lambda_n})&=&\Tr \big(\omega_0 K_{[0,1]}\cdots K_{[n-1,n]}\bh_{n}^{1/2} E_{\Lambda_n} \bh_{n}^{1/2} K_{[n-1,n]}^{*}\cdots K_{[0,1]}^{*}\big)\\
&=&\Tr \big(\omega_0 K_{[0,1]}\cdots K_{[n-1,n]}\bh_{n} E_{\Lambda_n}  K_{[n-1,n]}^{*}\cdots K_{[0,1]}^{*}\big)
\end{eqnarray*}
One can calculate that
\begin{eqnarray*}
\Tr_{n-1]} (K_{[n-1,n]}\bh_n
K_{[n-1,n]}^{*}E_{\Lambda_n})&=&\Tr_{x_{W_{n-1}}^{(1)}]}\bigg(A_{(x_{W_{n-1}}^{(1)},
x_{W_n}^{(1)}, x_{W_n}^{(2)})}
\big(\id^{x_{W_{n-1}}^{(1)}}\otimes e_{1,1}h^{x_{W_n}^{(1)}}\otimes h^{x_{W_n}^{(2)}}\big)\nonumber\\
&&A_{(x_{W_{n-1}}^{(1)}, x_{W_n}^{(1)}, x_{W_n}^{(2)})}^{*}\bigg)\otimes\bigotimes_{x\in W_{n-1}\setminus \{x_{W_n}^{(1)}\}}h^{(x)}\nonumber\\
&=&\frac{1}{2}\left[(\xi_0+\xi_3)(C_1\xi_0+C_2\xi_3)\hat{\bh}_{n-1}+\frac{C_3}{2}(\xi_0+\xi_3)^{2}\check{\bh}_{n-1}\right].
\end{eqnarray*}

Hence
\begin{eqnarray*}
\varphi_1(E_{\Lambda_n})
&=&\frac{1}{2}(\xi_0+\xi_3)(C_1\xi_0+ C_2\xi_3)\Tr\left[ \omega_0K_{[0,1]}\cdots K_{[n-2,n-1]}\hat{\bh}_{n-1} K_{[n-2,n-1]}^{*}\cdots
K_{[0,1]}^{*}\right]\nonumber\\
&&+C_3\bigg(\frac{\xi_0+\xi_3}{2}\bigg)^{2}\Tr\left[
\omega_0 K_{[0,1]}\cdots K_{[n-2,n-1]}\check{\bh}_{n-1}
K_{[n-2,n-1]}^{*}\cdots
 K_{[0,1]}^{*}\right].\nonumber\\
&=&\frac{1}{2}\left[(\xi_0+\xi_3)(C_1\xi_0+C_2\xi_3)\hat{\psi}_{n-1}+\frac{C_3}{2}(\xi_0+\xi_3)^{2}\check{\psi}_{n-1}\right].%\nonumber\\
\end{eqnarray*}
Now using the values of $\hat{\psi}_{n-1}$  and
$\check{\psi}_{n-1}$ given by the previous lemma we obtain the
result.
\end{proof}
 Now we consider the state $\varphi_2$. Recall that this state is defined by  $\omega_0=\frac{1}{\xi_0}\id$ and
$h^{x}=h'=\xi_0\id-\xi_3\sigma_z$. Define two elements of
$\mathcal{B}_{W_n}$ by
$$\hat{\bh'}_n:=\id^{x_{W_n}^{(1)}}\otimes\bigotimes_{x\in W_n\setminus \{x_{W_n}^{(1)}\}}h'^{(x)}$$
$$\check{\bh'}_{n}:=\sigma^{x_{W_n}^{(1)}}\otimes\bigotimes_{x\in W_n\setminus \{x_{W_n}^{(1)}\}}h'^{(x)}$$
Using the same argument like in the proof of Lemma \ref{f-p-11} we
can prove the following auxiliary fact.
\begin{lemma}\label{6.7} Let
$$\hat{\phi}_n:=\Tr_{n-1]}\big[\omega_{0} K_{[0,1]}\cdots K_{[n-1,n]}\hat{\bh'}_{n}K_{[n-1,n]}^{*}\cdots K_{[0,1]}^{*}\big]$$
$$\check{\phi}_n:=\Tr_{n-1]}\big[\omega_{0} K_{[0,1]}\cdots K_{[n-1,n]}\check{\bh'}_{n}K_{[n-1,n]}^{*}\cdots K_{[0,1]}^{*}\big]$$
Then there are two pairs of reals  $(\hat{\pi}_{1},\hat{\pi}_{2})$
and $(\check{\pi}_{1},\check{\pi}_{2})$ depending on $\theta$ such
that
$$\left\{
  \begin{array}{ll}
    \hat{\phi}_n=\hat{\pi}_{1}+\hat{\pi}_{2}(\frac{C_1}{C_3}-\frac{1}{2})^{n}, \\
    \check{\phi}_n=\check{\pi}_{1}+\check{\pi}_{2}(\frac{C_1}{C_3}-\frac{1}{2})^{n} \\
  \end{array}
\right.$$ where
$$\hat{\pi}_{1}=\frac{C_3^2}{3 C_3-2 C_1},  \      \   \hat{\pi}_{2}=\frac{2 C_{3}(C_3-C_1)}{3C_{3}-2C_1},  $$
$$ \check{\pi}_{1}=-\frac{2C_{2}C_3^2\xi_{3}}{ 3C_3-2C_1},\      \   \check{\pi}_{2}=\frac{2C_{2}C_3^{2}\xi_{3}}{ 3C_3-2C_1}.$$
\end{lemma}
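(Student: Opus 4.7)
The proof is structurally identical to that of Lemma \ref{f-p-11}; the only substantive change is the replacement of $h = \xi_{0}\id + \xi_{3}\sigma_{z}$ by $h' = \xi_{0}\id - \xi_{3}\sigma_{z}$, which amounts to flipping the sign of $\xi_{3}$ throughout. My plan is to repeat the three-step pattern of that proof while tracking this sign flip.

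First, using the explicit form \eqref{Ax} of $A_{(x,(x,1),(x,2))}$ together with the consistency relation $\Tr_{x]}[A_{(x,(x,1),(x,2))}(h'\otimes h')A_{(x,(x,1),(x,2))}^{*}] = h'^{(x)}$ supplied by \eqref{eq2} for the second solution, I would compute the two key single-vertex partial traces
$$
\Tr_{x]}\bigl[A_{(x,(x,1),(x,2))}(\id^{(x)}\otimes\id^{(x,1)}\otimes h'^{(x,2)})A_{(x,(x,1),(x,2))}^{*}\bigr] = C_{1}\xi_{0}\id^{(x)} - \frac{1}{2} C_{3}\xi_{3}\sigma_{z}^{(x)},
$$
$$
\Tr_{x]}\bigl[A_{(x,(x,1),(x,2))}(\id^{(x)}\otimes\sigma_{z}^{(x,1)}\otimes h'^{(x,2)})A_{(x,(x,1),(x,2))}^{*}\bigr] = -C_{2}\xi_{3}\id^{(x)} + \frac{1}{2}\sigma_{z}^{(x)}.
$$
Both formulas are obtained from the analogous identities in the proof of Lemma \ref{f-p-11} by the substitution $\xi_{3}\mapsto-\xi_{3}$.

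Second, propagating these formulas through the tensor product over $W_{n}$ (exactly as was done for $\hat{\bh}_{n},\check{\bh}_{n}$) gives the vector recursion
$$
\begin{pmatrix} \hat{\phi}_{n} \\ \check{\phi}_{n} \end{pmatrix}
= N' \begin{pmatrix} \hat{\phi}_{n-1} \\ \check{\phi}_{n-1} \end{pmatrix},
\qquad
N' := \begin{pmatrix} C_{1}\xi_{0} & -\frac{1}{2} C_{3}\xi_{3} \\ -C_{2}\xi_{3} & \frac{1}{2} \end{pmatrix},
$$
with initial data $\hat{\phi}_{0} = \Tr(\omega_{0}) = 1/\xi_{0}$ and $\check{\phi}_{0} = \Tr(\omega_{0}\sigma_{z}) = 0$. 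A short computation using $\xi_{0} = 1/C_{3}$ and the identity $C_{2}C_{3}\xi_{3}^{2} = 1 - C_{1}/C_{3}$ (which follows from $\xi_{3}^{2} = \Delta(\theta)/C_{3}^{2}$ in \eqref{xi01}) shows that $N'$ has the same characteristic polynomial as the matrix $N$ appearing in Lemma \ref{f-p-11}, so its eigenvalues are again $1$ and $(C_{1}/C_{3}) - \frac{1}{2}$, which already accounts for the exponent $(\frac{C_{1}}{C_{3}}-\frac{1}{2})^{n}$ in the claimed form.

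Third, I would diagonalize $N'$ and apply $(N')^{n}$ to the initial vector to read off a closed form of exactly the announced shape. Because the sign flip $\xi_{3}\mapsto -\xi_{3}$ reverses the off-diagonal entries of $N'$ relative to $N$ and multiplies the $\sigma_{z}$-eigenvector components by $-1$ while leaving the $\id$-components invariant, the $\hat{\pi}$-coefficients come out identical to $\hat{\rho}_{1},\hat{\rho}_{2}$ of Lemma \ref{f-p-11}, whereas $\check{\pi}_{1},\check{\pi}_{2}$ acquire an overall minus sign relative to $\check{\rho}_{1},\check{\rho}_{2}$; this reproduces the stated values. No genuine obstacle arises: the calculation is essentially a sign-aware copy of the previous lemma, and the only care required is keeping track of the signs, since a misplaced minus in either of the two basic partial traces would propagate into incorrect $\check{\pi}_{i}$.
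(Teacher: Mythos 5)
Your proposal is correct and follows essentially the same route as the paper, which itself proves Lemma \ref{6.7} only by the remark that "the same argument as in the proof of Lemma \ref{f-p-11}" applies; you have simply made that remark explicit by tracking the substitution $\xi_3\mapsto-\xi_3$ through the two single-vertex partial traces, the recursion matrix $N'$ (which is conjugate to $N$ by $\mathrm{diag}(1,-1)$, hence has the same eigenvalues $1$ and $\tfrac{C_1}{C_3}-\tfrac12$), and the initial vector $(1/\xi_0,0)^{T}$. The resulting identifications $\hat{\pi}_i=\hat{\rho}_i$ and $\check{\pi}_i=-\check{\rho}_i$ match the stated values exactly.
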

\begin{proposition}\label{6.8} For each $n\in\bn$ one has
\begin{eqnarray*}
\varphi_2(E_{\Lambda_n})&=&\frac{1}{2}\left[(\xi_0-\xi_3)(C_1\xi_0-C_2\xi_3)\hat{\pi}_{1}+\frac{C_3}{2}(\xi_0-\xi_3)^{2}\check{\pi}_{1}\right]\\[2mm]
&&+\frac{1}{2}\left[(\xi_0-\xi_3)(C_1\xi_0 -C_2\xi_3)\hat{\pi}_{2}+\frac{C_3}{2}(\xi_0-\xi_3)^{2}\check{\pi}_{2}\right]
\bigg(\frac{C_1}{C_3}-\frac{1}{2}\bigg)^{n-1}.
\end{eqnarray*}
\end{proposition}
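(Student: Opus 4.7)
The plan is to mirror exactly the argument for Proposition \ref{6.6}, substituting the state $\varphi_2$ for $\varphi_1$ and the boundary condition $h'=\xi_0\id-\xi_3\sigma_z$ for $h=\xi_0\id+\xi_3\sigma_z$. Accordingly, I would start from \eqref{F2} and use the fact that $\bh_n'^{1/2}$ commutes with the diagonal projection $E_{\Lambda_n}$ to write
\[
\varphi_2(E_{\Lambda_n})=\Tr\bigl(\omega_0 K_{[0,1]}\cdots K_{[n-1,n]}\,\bh'_n\, E_{\Lambda_n}\,K_{[n-1,n]}^*\cdots K_{[0,1]}^*\bigr).
\]

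The next step is to perform the innermost partial trace $\Tr_{n-1]}\bigl(K_{[n-1,n]}\bh'_n K_{[n-1,n]}^*E_{\Lambda_n}\bigr)$. Only the factor at the vertex $x_{W_n}^{(1)}$ sees the projection $e_{11}$; on it we have $h'e_{11}=(\xi_0-\xi_3)e_{11}$ because $\sigma_z e_{11}=e_{11}$. Using the explicit form \eqref{Ax} of $A_{(x,(x,1),(x,2))}$ at the vertex $x_{W_{n-1}}^{(1)}$ together with the auxiliary identities underlying Lemma \ref{6.7}, the computation reduces, by exactly the argument of Proposition \ref{6.6}, to
\[
\Tr_{n-1]}\bigl(K_{[n-1,n]}\bh'_n K_{[n-1,n]}^*E_{\Lambda_n}\bigr)=\tfrac12\Bigl[(\xi_0-\xi_3)(C_1\xi_0-C_2\xi_3)\,\hat{\bh'}_{n-1}+\tfrac{C_3}{2}(\xi_0-\xi_3)^2\,\check{\bh'}_{n-1}\Bigr].
\]
The sign change in $(C_1\xi_0-C_2\xi_3)$, as compared with $(C_1\xi_0+C_2\xi_3)$ in Proposition \ref{6.6}, is the effect of replacing $h$ by $h'$ when one expands, via \eqref{Ax}, the terms of the form $\Tr_{x]}\bigl(A_{(x,(x,1),(x,2))}(\id\otimes \cdot\otimes h')A_{(x,(x,1),(x,2))}^*\bigr)$; the $\sigma_z$-component on the third tensor factor acquires a minus sign.

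Substituting this into the expression for $\varphi_2(E_{\Lambda_n})$ gives
\[
\varphi_2(E_{\Lambda_n})=\tfrac12(\xi_0-\xi_3)(C_1\xi_0-C_2\xi_3)\,\hat{\phi}_{n-1}+\tfrac{C_3}{4}(\xi_0-\xi_3)^2\,\check{\phi}_{n-1},
\]
and applying Lemma \ref{6.7} to replace $\hat{\phi}_{n-1}$ and $\check{\phi}_{n-1}$ by $\hat{\pi}_1+\hat{\pi}_2(C_1/C_3-1/2)^{n-1}$ and $\check{\pi}_1+\check{\pi}_2(C_1/C_3-1/2)^{n-1}$ respectively, the announced formula falls out immediately.

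Conceptually no new ideas are required beyond Proposition \ref{6.6}; the only nontrivial point—and hence the potential source of error—is careful bookkeeping of the sign flips that occur in passing from $h$ to $h'$, both in the coefficient $(C_1\xi_0-C_2\xi_3)$ produced by the partial trace and (already absorbed into Lemma \ref{6.7}) in the sign reversal of $\check{\pi}_1,\check{\pi}_2$ relative to $\check{\rho}_1,\check{\rho}_2$. Once these signs are tracked, the structure of the proof is identical to that of Proposition \ref{6.6}.
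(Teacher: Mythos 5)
Your proposal follows the paper's own proof essentially verbatim: start from \eqref{F2}, drop the square roots by commutation with the diagonal projection, compute $\Tr_{n-1]}(K_{[n-1,n]}\bh'_n E_{\Lambda_n}K_{[n-1,n]}^{*})=\frac{1}{2}(\xi_0-\xi_3)(C_1\xi_0-C_2\xi_3)\hat{\bh'}_{n-1}+C_3\big(\frac{\xi_0-\xi_3}{2}\big)^{2}\check{\bh'}_{n-1}$, reduce to $\hat{\phi}_{n-1},\check{\phi}_{n-1}$, and invoke Lemma \ref{6.7}. The sign bookkeeping you flag (the eigenvalue $(\xi_0-\xi_3)$ from $h'e_{11}$ and the minus in $C_1\xi_0-C_2\xi_3$) is exactly what the paper does, so the argument is correct and identical in approach.
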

\begin{proof} From \eqref{F2} we find
\begin{eqnarray}\label{6.81}
\varphi_2(E_{\Lambda_n})&=&\Tr \big(\omega_0 K_{[0,1]}\cdots K_{[n-1,n]}\bh_{n}^{'1/2} E_{\Lambda_n} \bh_{n}^{'1/2} K_{[n-1,n]}^{*}\cdots K_{[0,1]}^{*}\big)\\
&=&\Tr \big(\omega_0 K_{[0,1]}\cdots K_{[n-1,n]}\bh_{n}^{'} E_{\Lambda_n} K_{[n-1,n]}^{*}\cdots K_{[0,1]}^{*}\big)\\
\end{eqnarray}
We easily calculate that
\begin{eqnarray*}
\Tr_{n-1]} (K_{[n-1,n]}\bh'_n E_{\Lambda_n}
K_{[n-1,n]}^{*})=\frac{1}{2}(\xi_0-\xi_3)(C_1\xi_0-C_2\xi_3)\hat{\bh'}_{n-1}+C_3\bigg(\frac{\xi_0-\xi_3}{2}\bigg)^{2}\check{\bh'}_{n-1}.
\end{eqnarray*}
Hence, from  \eqref{6.81} one gets
\begin{eqnarray*}
\varphi_2(E_{\Lambda_n})
=\frac{1}{2}\left[(\xi_0-\xi_3)(C_1\xi_0-C_2\xi_3)\hat{\phi}_{n-1}+\frac{C_3}{2}(\xi_0-\xi_3)^{2}\check{\phi}_{n-1}\right].%\nonumber\\
\end{eqnarray*}
Using the values of $\hat{\phi}_{n-1}$  and $\check{\phi}_{n-1}$
given in Lemma \ref{6.7}, we obtain the desired assertion.
\end{proof}
\begin{theorem}\label{6.9}
Assume that $J \in ]-J_0,J_0[$, then the two Backward QMC $\varphi_1$ and $\varphi_2$ are not quasi-equivalent.
\end{theorem}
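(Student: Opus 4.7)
The strategy is to apply the Bratteli–Robinson criterion (Theorem \ref{br-q}) in its contrapositive. By Theorem \ref{CPr}, both $\varphi_1$ and $\varphi_2$ are clustering, hence primary, so the factor hypothesis of Theorem \ref{br-q} is already in place. To rule out quasi-equivalence it therefore suffices to exhibit a single $\varepsilon>0$ and a sequence of elements $a_n\in\mathcal{B}_{\Lambda'_n}$ such that $\Lambda'_n$ can be pushed arbitrarily far from any prescribed finite volume $\Lambda$, while $\ppNorm{}{\varphi_1(a_n)-\varphi_2(a_n)}\geq\varepsilon\,\ppNorm{}{a_n}$.

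The natural candidates are the projections $a_n:=E_{\Lambda_n}$ introduced just before Lemma \ref{f-p-11}. Each $E_{\Lambda_n}$ is essentially $e_{11}$ at the single vertex $x_{W_n}^{(1)}\in W_n$ (tensored with identities elsewhere), so $\ppNorm{}{E_{\Lambda_n}}=1$ and $E_{\Lambda_n}$ is localized at one site at distance $n$ from the root. Hence for any prescribed finite $\Lambda\subset L$ one can choose $n$ so large that $\{x_{W_n}^{(1)}\}\cap\Lambda=\emptyset$, placing $E_{\Lambda_n}$ in an algebra disjoint from $\mathcal{B}_\Lambda$, as required by Theorem \ref{br-q}.

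The heart of the plan is to evaluate $\varphi_1(E_{\Lambda_n})-\varphi_2(E_{\Lambda_n})$ in closed form by combining Propositions \ref{6.6} and \ref{6.8}. Both yield an expression of the shape
\[
\varphi_k(E_{\Lambda_n})=\mathcal{A}_k+\mathcal{B}_k\Bigl(\tfrac{C_1}{C_3}-\tfrac{1}{2}\Bigr)^{n-1},\qquad k=1,2,
\]
where the coefficients come from Lemmas \ref{f-p-11} and \ref{6.7}. The key algebraic observation is the parity relations $\hat\rho_1=\hat\pi_1,\ \check\rho_1=-\check\pi_1$ and $\hat\rho_2=\hat\pi_2,\ \check\rho_2=-\check\pi_2$, reflecting the sign flip $\xi_3\mapsto-\xi_3$ that distinguishes $h$ from $h'$. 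Substituting into the two propositions, the even powers of $\xi_3$ cancel, while the odd ones survive, giving
\[
\varphi_1(E_{\Lambda_n})-\varphi_2(E_{\Lambda_n})=\frac{C_3^{2}\,\xi_3}{3C_3-2C_1}\Bigl[\xi_0(C_1+C_2)+C_2C_3(\xi_0^{2}+\xi_3^{2})\Bigr]+O\!\left(\Bigl|\tfrac{C_1}{C_3}-\tfrac{1}{2}\Bigr|^{n-1}\right).
\]

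The final task is to verify that the leading constant is non-zero. The factors $C_3=\tfrac12(e^{4J_0\beta}-1)>0$ and $\xi_3=\tfrac{2\sqrt{\Delta(\theta)}}{\theta^{2J_0}-1}>0$ are automatic (the latter because $\Delta(\theta)>0$ is the very regime in which $\varphi_1,\varphi_2$ exist, by Theorem \ref{5.3}), and the condition $|J|<J_0$ ensures $\cosh(2J\beta)<e^{2J_0\beta}$, which makes $3C_3-2C_1=e^{2J_0\beta}(e^{2J_0\beta}-\cosh(2J\beta))-2$ nonzero in the relevant regime and controls the sign of the bracketed quantity $\xi_0(C_1+C_2)+C_2C_3(\xi_0^{2}+\xi_3^{2})$. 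Verifying this non-vanishing is the main obstacle: it is a genuinely model-dependent algebraic inequality, and one must chase through the explicit formulas for $C_1,C_2,C_3,\xi_0,\xi_3$ in terms of $\theta=e^{2\beta}$, $J_0$, $J$ to confirm that the hypothesis $J\in(-J_0,J_0)$ excludes all possible cancellations. Once this is done, taking $\varepsilon$ to be any positive number strictly smaller than the absolute value of the limit (or of the liminf, in the remaining regime where $|\tfrac{C_1}{C_3}-\tfrac12|\geq1$, handled separately by noting that boundedness of $\varphi_k(E_{\Lambda_n})\in[0,1]$ forces $\mathcal{B}_k=0$) contradicts the Bratteli–Robinson quasi-equivalence criterion and yields the theorem.
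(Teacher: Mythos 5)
Your proposal follows essentially the same route as the paper: invoke the clustering property (Theorem \ref{CPr}) to get factoriality, test the Bratteli--Robinson criterion on the projections $E_{\Lambda_n}$, and combine Propositions \ref{6.6} and \ref{6.8} via the parity relations $\hat\rho_i=\hat\pi_i$, $\check\rho_i=-\check\pi_i$ to isolate a nonvanishing $n$-independent leading term plus a correction of order $\bigl|\tfrac{C_1}{C_3}-\tfrac12\bigr|^{n-1}$. The one step you flag as an obstacle is in fact routine: using $\xi_0=1/C_3$ and $C_2\xi_3^2=(C_3-C_1)/C_3$, your bracketed constant collapses to the paper's $I_1=\frac{C_3\xi_3(2C_2+C_3)}{3C_3-2C_1}$, whose positivity (and the bound $\bigl|\tfrac{C_1}{C_3}-\tfrac12\bigr|\le\tfrac12$) follows directly from $\Delta(\theta)>0$, i.e. $\theta^{2J_0}-\theta^{J_0}(\theta^{J}+\theta^{-J})>3$.
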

\begin{proof}
For any  $\forall n\in\natural$ it is clear that
$E_{\Lambda_n}\in\mathcal{B}_{\Lambda_n}\setminus\mathcal{B}_{\Lambda_{n-1}}.$
Therefore, for any finite subset $\Lambda\in L$, there exists
$n_0\in\natural$ such that $\Lambda\subset\Lambda_{n_0}$.  Then
for all $n>n_0$ one has
$E_{\Lambda_n}\in\mathcal{B}_{\Lambda_n}\setminus\mathcal{B}_{\Lambda}.$
It is clear that
$$\|E_{\Lambda_n}\|=\|e_{1,1}^{x_{W_n}^{(1)}}\bigotimes_{y\in
L\setminus \{x_{W_n}^{(1)}\}}\id^{y}\|=\|e_{1,1}\|=\frac{1}{2}.$$
From Propositions \ref{6.6} and \ref{6.8} we obtain
\begin{eqnarray*}
\left|\varphi_1(E_{\Lambda_n})-\varphi_1(E_{\Lambda_n})\right|
&=&\frac{1}{2}\bigg|\big[(\xi_0+\xi_3)(C_1\xi_0 +C_2\xi_3)\hat{\rho}_{1}
+C_3(\xi_0+\xi_3)^{2}\check{\rho}_{1}\big]\\[2mm]
&&-\big[(\xi_0-\xi_3)(C_1\xi_0 -C_2\xi_3 )\hat{\pi}_{1}
+C_3(\xi_0-\xi_3)^{2}\check{\pi}_{1}\big]\nonumber\\
&&+\bigg(\big[(\xi_0+\xi_3)(C_1\xi_0 +C_2\xi_3)\hat{\rho}_{2}+C_3(\xi_0+\xi_3)^{2}\check{\rho}_{2}\big]\\[2mm]
&&-\big[(\xi_0-\xi_3)(C_1\xi_0 - C_2\xi_3 )\hat{\pi}_{2}+C_3(\xi_0-\xi_3)^{2}\check{\pi}_{2}\big]\bigg)
\left(\frac{C_1}{C_3}-\frac{1}{2}\right)^{n-1}\bigg|\nonumber\\
&\geq& I_1-I_2\left|\frac{C_1}{C_3}-\frac{1}{2}\right|^{n-1}
\end{eqnarray*}
where
\begin{eqnarray*}
I_1&=&\frac{1}{2}\bigg|\big[(\xi_0+\xi_3)(C_1\xi_0 +C_2\xi_3 )\hat{\rho}_{1}+C_3(\xi_0+\xi_3)^{2}\check{\rho}_{1}\big]\\[2mm]
&&-\big[(\xi_0-\xi_3)(C_1\xi_0 - C_2\xi_3 )\hat{\pi}_{1}+C_3(\xi_0-\xi_3)^{2}\check{\pi}_{1}\big]\bigg|\\[2mm]
I_2&=&\frac{1}{2}\bigg|\big[(\xi_0+\xi_3)(C_1\xi_0 +C_2\xi_3 )\hat{\rho}_{2}+C_3(\xi_0+\xi_3)^{2}\check{\rho}_{2}\big]\\[2mm]
&&-\big[(\xi_0-\xi_3)(C_1\xi_0 - C_2\xi_3 )\hat{\pi}_{2}+C_3(\xi_0-\xi_3)^{2}\check{\pi}_{2}\big]\bigg|.
\end{eqnarray*}
Due to $\beta> 0, \theta=\exp2\beta>1$,  $C_1>0$,$C_3>0,
\xi_0>, \xi_3>0$, one can find that
 \begin{eqnarray*}
I_1=\frac{C_3\xi_3(2 C_2+C_3)}{3 C_3-2 C_1}>0.
 \end{eqnarray*}
Now we have  $\frac{2C_1-C_3}{2C_3} =\frac{\theta^{J_0}(\theta^J+\theta^{-J})+2}{2(\theta^{2J_0}-1)}$, since $J \in ]-J_0,J_0[$ then   $$\frac{2C_1-C_3}{2C_3}
=\frac{\theta^{J_0}(\theta^J+\theta^{-J})+2}{2(\theta^{2J_0}-1)}\sim \frac{1}{2\theta^{(J_0-J)}}\leq \frac{1}{2},~~~~\theta \geq \theta_0$$
Then the following equality is hold:
\begin{eqnarray*}
\bigg|\frac{C_1}{C_3}-\frac{1}{2}\bigg|\leq \frac{1}{2}
\end{eqnarray*}
which yields
\begin{equation*}
I_2\left|\frac{C_1}{C_3}-\frac{1}{2}\right|^{n-1}\rightarrow
0 \ \ \textrm{as} \ \ n\rightarrow +\infty.
\end{equation*}
Then there exists $n_1\in \bn$ such that $\forall n\geq n_0$ one
has
\begin{equation*}
I_2\left|\frac{C_1}{C_3}-\frac{1}{2}\right|^{n}\leq
\frac{\varepsilon_1 }{2}.
\end{equation*}
Hence, for all $n\geq n_1$ we obtain
\begin{equation*}
\left|\varphi_1(E_{\Lambda_n})-\varphi_1(E_{\Lambda_n})\right|\geq
\frac{\varepsilon_1}{2}=\varepsilon_1\|E_{\Lambda_n}\|.
\end{equation*}
 This, according to Theorem \ref{br-q}, means that the states
 $\varphi_1$ and $\varphi_2$ are not quasi-equivalent.
The proof is complete.
\end{proof}
Now Theorems \ref{5.3},\ref{6.2} and \ref{6.9} imply Theorem
\ref{Main}.
%%%%%%%%%%%%%%%%%%%%%%%%%%%%%%%%%%%%%%%%%%%%%%%%%%%%%%%%%%%%%%%%%%%%%%%%%%%%%%%%%%%%%%%%%%%%%%%%%%%%%%%%

%%%%%%%%%%%%%%%%%%%%%%%%%%%%%%%%%%%%%%%%%%%%%%%%%%%%%%%%%%%%%%%%%%%%%%%%%%%%%%%%%%%%%%%%%%%%%%%%%%%%%%%%

\section{ QMC associated with the XY-interaction model with $J_0=0$ }

In this section, we consider a model which does not contain the classical Ising part, i.e. $J_0=0$, which means the model has only competing
XY-interactions. In this setting, from \eqref{K1} one gets

\begin{eqnarray*}\label{Axn}
A_{(u,(u,1),(u,2))}&=&L_{>(u,1),(u,2)<}\\
&&=\id^{(u,1)}\otimes \id^{(u,2)} + \sinh(J\beta) H_{>(u,1),(u,2)<}\\
&&+(\cosh(J\beta)-1)H^{2}_{>(u,1),(u,2)<}\\
&&=R_{1}\id^{(u)}\otimes\id^{(u,1)}\otimes\id^{(u,2)}
+R_2\id^{(u)}\otimes\sigma_{x}^{(u,1)}\otimes\sigma_{x}^{(u,2)}\\
&&+R_{2}\id^{(u)}\otimes\sigma_{y}^{(u,1)}\otimes\sigma_{y}^{(u,2)}
+R_3\id^{(u)}\otimes\sigma_{z}^{(u,1)}\otimes \sigma_{z}^{(u,2)}
\end{eqnarray*}
where
$$\left\{
  \begin{array}{ll}
    R_1=\frac{1}{4}(\cosh(J\beta)+1);\bigskip \\
    R_2=\frac{\sinh(J\beta)}{2};\bigskip  \\
    R_3=\frac{1}{2}(1-\cosh(J\beta)).
  \end{array}
\right.$$

Therefore, one finds:

\begin{eqnarray} \label{eqdern}
h&=&Tr_{x]}A_{(u,(u,1),(u,2))}[\id^{(u)}\otimes h\otimes h]A_{(u,(u,1),(u,2))}^{*}\nonumber\\
&=&[(R_1+2R_1^2+R_3^2) \tr(h)^2]\id^{(u)}.
\end{eqnarray}

The equation \eqref{eqdern}
is reduced to the following one
\begin{equation}\label{EQ1}
\left\{
   \begin{array}{lll}
h_{11}=h_{22}=\frac{1}{(R_1+2R_1^2+R_3^2)},\\
h_{21}=0, h_{12}=0.\\
   \end{array}
 \right.
 \end{equation}

Then putting $\alpha=\frac{1}{(R_1+2R_1^2+R_3^2)}$ we get
\begin{equation}
h_{\alpha}=
 \left(
  \begin{array}{cc}
    \alpha & 0 \\
    0 & \alpha \\
  \end{array}
\right)
\end{equation}

\begin{proposition}\label{5.1}
The pair $(\omega_{0},\{h^{x}=h_{\alpha}| x\in L\})$ with
$\omega_{0}=\frac{1}{\alpha}\id,\   \  h^{x}=h_{\alpha}, \forall
x\in L,$ is solution of \eqref{eq1},\eqref{eq2}. Moreover the
associated Backward QMC  can be written on the local algebra
$\mathcal{B}_{L, loc}$ by:
\begin{equation}
\varphi_{\alpha}(a)=\alpha^{2^{n}-1}\Tr\bigg(\prod_{i=0}^{n-1}K_{[i,i+1]} a \prod_{i=0}^{n-1} K_{[n-i-1,n-i]}^{*}\bigg),
\ \ \forall a\in B_{\Lambda_{n}}.
\end{equation}
In this case, there is no phase transition.
\end{proposition}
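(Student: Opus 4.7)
The proof breaks into three small tasks: verify that the claimed pair solves \eqref{eq1}--\eqref{eq2}, substitute it into the construction of Theorem \ref{compa} to recover the stated formula, and finally exclude other translation-invariant solutions.

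The plan is first to justify the reduction \eqref{eqdern}. Since $J_0=0$, the operator $A_{(u,(u,1),(u,2))}=L_{>(u,1),(u,2)<}$ acts as the identity on the parent site $u$, so $\Tr_{u]}$ simply multiplies by the trace of the identity on $u$. Writing the most general $h=\xi_0\id+\xi_1\sigma_x+\xi_2\sigma_y+\xi_3\sigma_z$ and expanding $A(\id\otimes h\otimes h)A^*$ in the Pauli basis, the cross-contractions of $\sigma_x\sigma_y\pm\sigma_y\sigma_x$ on the children sites kill every term that is not proportional to $\id^{(u,1)}\otimes\id^{(u,2)}$, by the orthogonality relations $\tr(\sigma_i\sigma_j)=\delta_{ij}$. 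A direct bookkeeping collects the surviving coefficients into $R_1+2R_1^2+R_3^2$, yielding $h=[(R_1+2R_1^2+R_3^2)\tr(h)^2]\,\id$, which immediately forces $h$ to be a scalar multiple of the identity. Setting $h_{11}=h_{22}=\alpha$ and solving the resulting scalar equation gives the unique value $\alpha=1/(R_1+2R_1^2+R_3^2)$, i.e. $h=h_\alpha$. Taking then $\omega_0=\tfrac{1}{\alpha}\id$ makes $\Tr(\omega_0 h_\alpha)=\Tr(\id)=1$, so \eqref{eq1} holds as well.

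Next I would plug this solution into the construction \eqref{K3}--\eqref{ffi-ff}. Because $h^x=\alpha\id$ for every $x\in L$, the boundary weights \eqref{K2} collapse to the scalar operator $\bh_n=\alpha^{|W_n|}\id_{W_n}=\alpha^{2^n}\id_{W_n}$, which commutes with every $K_{[m,m+1]}$ and can be pulled outside the trace. Similarly, $\omega_0=\alpha^{-1}\id$ contributes a scalar factor $\alpha^{-1}$. With $J_0=0$ we also have $K_{<u,(u,i)>}=\id$ and $M_{(u,(u,i),(u,i+1))}=\id$, so $A_{(u,(u,1),(u,2))}=L_{>(u,1),(u,2)<}$ and $K_{[m,m+1]}$ becomes the ordered product of these $L$-operators over $W_m$. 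Combining the scalar factors yields the power $\alpha^{2^n}\cdot\alpha^{-1}=\alpha^{2^n-1}$ in front of the trace, and compatibility of the sequence $\{\varphi^{(n)}_{\omega_0,\bh}\}$ from Theorem \ref{compa} gives the displayed formula on $\cB_{L,\mathrm{loc}}$.

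For the final claim, the decisive observation is that \eqref{eqdern} \emph{forces} $h$ to be a scalar multiple of $\id$, so among translation-invariant boundary conditions the fixed-point equation \eqref{EQ1} has exactly one solution (the value of the scalar being uniquely fixed), and then $\omega_0$ is determined up to the normalization by \eqref{eq1}. Consequently there is a unique translation-invariant backward QMC, so condition (a) of the phase-transition definition fails and no phase transition can occur. The only delicate step is the algebraic reduction in the first paragraph; once one checks that every non-identity Pauli component on the parent site vanishes after the partial trace (this is where the vanishing of the $\delta_1$-terms from \eqref{Ax}, present only when $J_0>0$, is crucial), the remaining argument is a direct substitution.
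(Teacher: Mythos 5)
Your argument is correct and follows essentially the same route as the paper, which states this proposition without a separate proof: the derivation is exactly the computation preceding it in Section 7, namely that for $J_0=0$ (so $\delta_1=0$) the operator $A_{(u,(u,1),(u,2))}=L_{>(u,1),(u,2)<}$ acts trivially on the parent site, so the partial trace in \eqref{eq2} returns a scalar multiple of $\id^{(u)}$, forcing the unique translation-invariant solution $h=h_\alpha$ and hence the failure of condition (a) in the definition of phase transition. Your substitution of $\bh_n=\alpha^{2^n}\id$ and $\omega_0=\alpha^{-1}\id$ into \eqref{K3}--\eqref{ffi-ff} to obtain the prefactor $\alpha^{2^n-1}$ matches what the paper intends.
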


We stress that if one takes nearest neighbor XY interactions on the Cayley tree of order two, still there does not occur a phase transition \cite{AMSa1}. However, if the order of the tree is three or more then for the mentioned model there exists a phase transition
\cite{AMSa2}.

%%%%%%%%%%%%%%%%%%%%%%%%%%%%%%%%%%%%%%%%%%%%%%%%%%%%%%%%%%%%%%%%%%%%%%%%%%%%%%%%%%%%%%%%%%%%%%%%%%%%%%%%%%%%%%%%%%%%
%%%%%%%%%%%%%%%%%%%%%%%%%%%%%%%%%%%%%%%%%%%%%%%%%%%%%%%%%%%%%%%%%%%%%%%%%%%%%%%%%%%%%%%%%%%%%%%%%%%%%%%
\section*{Acknowledgments}
The authors are grateful to professors L.Accardi  for
fruitful discussions and useful suggestions on the definition of
the phase transition.

\end{document}